\newtheorem{theorem}{Theorem}[section]
\newtheorem{claim}[theorem]{Claim}
\newtheorem{corollary}[theorem]{Corollary}
\newtheorem{lemma}[theorem]{Lemma}
\newtheorem{openproblem}[theorem]{Open Problem}
\theoremstyle{remark}
\newtheorem{remark}[theorem]{Remark}
\theoremstyle{definition}
\newtheorem{definition}[theorem]{Definition}
\newenvironment{prf}{\noindent{\bf Proof.~}}{\(\qed\)}
\newcommand{\BPF}{\begin{prf}} \newcommand {\EPF}{\end{prf}}
\newcommand{\jiemingnote}[1]{{\color{red}{#1}}}
\newcommand{\ignore}[1]{{}}
\newcommand{\ket}[1]{\mathop{\left|#1\right>}\nolimits}
\newcommand{\bra}[1]{\mathop{\left<#1\,\right|}\nolimits}
\newcommand{\kb}[2]{| #1\rangle\!\langle #2 |}
\newcommand{\Tra}[1]{\mathop{{\mathrm{Tr}}_{#1}}}
\newcommand{\Tr}[2]{\mathop{{\mathrm{Tr}}_{#1}} (#2) }
\newcommand{\eps}{\epsilon}
\newcommand{\E}{\bf E\rm}
\def\N{\mathcal{N}}
\def\T{\mathcal{T}}
\def\C{\mathcal{C}}
\def\D{\mathcal{D}}
\def\U{\mathcal{U}}
\def\polylog{\text{polylog}}
\def\T{\mathcal{T}}
\begin{document}

\title{Near-optimal bounds on bounded-round quantum communication complexity of disjointness}

\author{
Mark Braverman \thanks{Department of Computer Science, Princeton University, email: mbraverm@cs.princeton.edu. Research supported in part by an NSF CAREER award (CCF-1149888), a 
Turing Centenary Fellowship, a Packard Fellowship in Science and Engineering, and the Simons Collaboration on Algorithms and Geometry.}
\and
Ankit Garg \thanks{Department of Computer Science, Princeton University, email: garg@cs.princeton.edu. Research supported by a Simons Fellowship in Theoretical Computer Science.}
\and
Young Kun Ko  \thanks{Department of Computer Science, Princeton University, email: yko@cs.princeton.edu}
\and
Jieming Mao  \thanks{Department of Computer Science, Princeton University, email: jiemingm@cs.princeton.edu}
\and 
Dave Touchette \thanks{D\'epartement d'informatique et de recherche op\'erationnelle, Universit\'e de Montr\'eal, email: touchette.dave@gmail.com Research supported in part by an FRQNT B2 Doctoral Research Scholarship and by CryptoWorks21.}
}

\maketitle

\begin{abstract}
We prove a near optimal round-communication tradeoff for the two-party quantum communication complexity of disjointness. For protocols with $r$ rounds, we prove a lower bound of $\tilde{\Omega}(n/r + r)$ on the communication required for computing disjointness of input size $n$, which is optimal up to logarithmic factors. The previous best lower bound was $\Omega(n/r^2 + r)$ due to Jain, Radhakrishnan and Sen \cite{jrs_disj}. Along the way, we develop several tools for quantum information complexity, one of which is a lower bound for quantum information complexity in terms of the generalized discrepancy method. As a corollary, we get that the quantum communication complexity of any boolean function $f$ is at most $2^{O(QIC(f))}$, where $QIC(f)$ is the prior-free quantum information complexity of $f$ (with error $1/3$).  
\end{abstract}

\newpage
\tableofcontents

\pagebreak

\section{Introduction}


We prove near-optimal bounds on the bounded-round quantum communication complexity of disjointness. Quantum communication complexity, introduced by Yao \cite{Qcc_Yao}, studies the amount of quantum communication that two parties, Alice and Bob, need to exchange in order to compute a function (usually boolean) of their private inputs. It is the natural quantum extension of classical communication complexity \cite{Yao79}.
While the inputs are classical and the end result is classical, the players are allowed to use quantum resources while communicating. The motivation for the introduction of quantum communication was to study questions in quantum computation. For example, in \cite{Qcc_Yao}, Yao used it to prove that the majority function does not have any linear size quantum formulas. 

While quantum communication (with entanglement) offers only a factor of $2$ savings when transmitting $n$ bits of classical information \cite{holevo,superdensecoding,hol_entanglement}, it can still offer super-constant savings (and sometimes exponential) in communication if the goal is just to compute a boolean function of the inputs. For total boolean functions, the best-known separation between classical and quantum communication is quadratic, for the disjointness function \cite{KalyanasundaramS92,Razborov92,grover, BCW, AA03}. It is, in fact, a major open problem whether classical and quantum communication are polynomially related for all total boolean functions. For partial functions, exponential separations are known even between one-way quantum communication and arbitrary classical communication \cite{Raz99,RegevK11}.
 
  For disjointness with input size $n$, Grover's search \cite{grover, BBHT98} can be used to obtain a quantum communication protocol (with probability of error $1/3$) with communication cost $O(\sqrt{n} \log n)$ \cite{BCW}. The bound was later improved to $O(\sqrt{n})$ in \cite{AA03}. The protocols attaining this upper bound are very interactive and require $\Theta(\sqrt{n})$ rounds of interaction. The $O(\sqrt{n})$ upper bound on the quantum communication complexity of disjointness has been shown to be tight in \cite{razborov2003}. 
	
	If we restrict the players to allow only $r$ rounds of interaction, then it is not hard to use the $O(\sqrt{n})$ protocol discussed above as a black-box to obtain an $O(n/r)$ communication protocol for $n\ge r^2$. The best known lower bound was $\Omega(n/r^2)$ \cite{jrs_disj}. We prove a lower bound of $\tilde{\Omega}(n/r)$, which is optimal up to logarithmic factors: 
	
	\smallskip
	\noindent
	{\bf Theorem A.} (Theorem~\ref{thm:disjround}, rephrased) {\em The $r$-round quantum communication complexity of $DISJ_n$ is $\Omega\left(\frac{n}{r \log^8(r)}\right)$.
	}
	\smallskip
	
	The analogous result for query complexity of quantum search, an $\Omega(n/r)$ lower bound for the number of queries when $r$ sets of nonadaptive queries are allowed, was known before \cite{zalka_groversearch}. Our lower bound does not give a new proof of the $\Omega(\sqrt{n})$ bound on the quantum communication complexity of disjointness \cite{razborov2003} since our proof uses that lower bound (in fact we use something much stronger, a strengthening of the strong direct product theorem for disjointness \cite{klauck2004quantum} due to \cite{sdp_sherstov}).
 
 There is a rich history of papers studying lower bounds on bounded-round communication complexity, for example for the pointer jumping problem \cite{NW_pointer,PRV2001,Klauck1998,KNTZ2001}, for sparse set disjointness \cite{ST2013disj}, for equality \cite{BCKequality} and several other examples. Most of these lower bounds are proven via a round elimination strategy: show that an $r$-round protocol can be converted into an $(r-1)$-round protocol without too much increase in communication cost and error; arrive at contradition by obtaining a too-good-to-be-true $1$-round or $0$-round protocol. Even the result of \cite{jrs_disj} can be viewed as round elimination on quantum information complexity of the $2$-bit AND. Despite substantial effort, obtaining the optimal $\Omega(1/r)$ lower bound on the  $r$-round quantum information complexity of AND via round elimination has remained elusive. We prove:

	\smallskip
	\noindent
	{\bf Theorem B.} (Corollary~\ref{cor:icand}, rephrased) {\em The $r$-round quantum information complexity of AND with prior $1/3,1/3,1/3,0$ is $\Omega\left(\frac{1}{r \log^8(r)}\right)$. 
	}
	\smallskip

As discussed below, we obtain this result by using existing lower bounds for the communication complexity of quantum disjointness. A direct proof of a quantum information complexity lower bound for the $2$-bit AND remains an intriguing open problem. In light of the fact that disjointness has a sub-linear quantum communication complexity, it is not surprising that the quantum information complexity of AND vanishes with the number of rounds. This phenomenon is closely related to the Elitzur-Vaidman bomb tester \cite{elitzur1993quantum, bomb_testing}, which gives a sequence of quantum measurements that allows one to test whether a bomb is loaded without detonating it. The loss of the protocol (i.e. the probability that the bomb will explode --- which loosely corresponds to the amount of information revealed about the bomb) behaves like $1/r$, where $r$ is the number of measurements performed.

 
 Our proof relies on the notion of quantum information complexity, defined recently in \cite{Tou14}, where it is used to prove a direct sum theorem for constant round quantum communication. It is harder to manipulate quantum information than in the classical case, and tools that are standard in the classical setting are yet to be developed for the quantum case. However, it could still be useful in proving partial direct sum and direct product theorems, which we know in the classical world \cite{BBCR10}, \cite{BRWY2013_dp}.  Moreover, a model similar to that of quantum communication complexity is connected to proving SDP extension complexity lower bounds \cite{JainSWZ13}. Although the recent breakthrough for SDP lower bounds \cite{LRS_sdp} does not follow this direction, it is likely that a quantum information complexity viewpoint will provide further insights as information complexity has provided in the classical case (LP extension complexity) \cite{BM12,BraunPokutta}. Further development of tools
for quantum communication and information complexity is likely to further the SDP extension complexity program.  
 
 We also prove that for all boolean functions, prior-free quantum information complexity is lower bounded by the generalized discrepancy method:

\smallskip
	\noindent
	{\bf Theorem C.} (Theorem~\ref{lem:QICvsGDM}, rephrased) {\em For any boolean function $f$ and a sufficiently small constant error $\eta > 0$, the prior-free quantum information complexity of $f$ with error $\eta$ is lower bounded by the generalized discrepancy bound for $f$. 
	}
	\smallskip

 Previously no lower bounds were known on the quantum information complexity of general boolean functions. Our proof relies on the strong direct product theorem for quantum communication complexity in terms of the generalized discrepancy method \cite{sdp_sherstov}. Note that in the classical setting such a result can be proven directly using zero-communication protocols~\cite{kerenidis2012lower}. It remains to be seen whether such a direct proof can be obtained in the quantum setting. 
 
 As a corollary we also get that the quantum communication complexity of any boolean function is at most exponential in the prior-free quantum information complexity.

 \smallskip
	\noindent
	{\bf Theorem D.} (Corollary~\ref{qcc:qic}, rephrased) {\em For any boolean function $f$, quantum communication complexity of $f$ with error $1/3$ is at most $2^{O(QIC(f,1/3) + 1)}$, where $QIC(f,1/3)$ is the prior-free quantum information complexity of $f$ with error $1/3$.
	}
	\smallskip
	
Note that the classical analogue of this is proven via a compression argument	 \cite{B12}, but we prove this via an indirect argument. It would be interesting to prove this directly via a quantum compression argument. 
	
\subsection*{Acknowledgments} 

We would like to thank Andris Ambainis, Rahul Jain, Ashwin Nayak, Jaikumar Radhakrishnan, Iordanis Kerenidis and Mathieu Lauriere for helpful discussions.

\section{Proof overview and discussion}

\paragraph{High-level strategy.} At a high-level, the proof builds on the connection between quantum information complexity and quantum communication complexity of the disjointness function $DISJ_m$ with various values of $m$. 
There are two parts to the proof: 
\begin{enumerate}
\item Suppose there is a $r$-round quantum protocol for disjointness of input size $n\ge r^2$ with communication cost $\frac{n}{r \cdot \polylog(r)}$. Then there exists a protocol for disjointness of input size $r^2$ with quantum information cost $\le o(r)$. 
\item Lower bound on quantum information complexity of disjointness: we prove that the (prior-free) quantum information complexity of any boolean function is lower bounded by the generalized discrepancy method, which by results in \cite{patternmatrix} implies that quantum information complexity of disjointness with input size $r^2$ is $\Omega(r)$. 
\end{enumerate}

\noindent Note that these two steps imply a lower bound on the bounded round quantum communication complexity of disjointness. Also the above statements are about computation with some constant error (say $1/3$). 

Both directions are proven via a connection between the information complexity of a problem and its communication complexity. In one direction, a protocol for a large sized disjointness can be converted into a low-information protocol for a smaller size disjointness. Using  the converse direction of the connection, a low-information protocol for $DISJ_{r^2}$ leads to a protocol for many copies of the problem that violate known direct product results. The former connection has been at the heart of many classical lower bounds involving information complexity \cite{BaryossefJKS04, BGPW12}. The latter connection (deriving information complexity lower bound from known communication lower bound on an ``amortized" version of the problem) 
has been previously explored in the classical setting by \cite{braverman2013information}.

\medskip

Let us start by giving a high level overview of the first step. If there is a $r$-round quantum protocol for disjointness of input size $n$ with communication cost $\frac{n}{r \cdot \polylog(r)}$ and $1/3$ probability of error, then by a direct sum argument in \cite{Tou14}, there exists a $r$-round quantum protocol $\pi$ for AND with $1/3$ probability of error (for a worst case input) and quantum information cost $\le \frac{1}{r \cdot \polylog(r)}$ w.r.t \textbf{any distribution $\boldsymbol{\mu}$ s.t. $\boldsymbol{\mu(1,1) = 0}$}. Now we want to use $\pi$ to obtain a low information protocol for disjointness of size $r^2$. One can imagine if we run $\pi$ on each coordinate of the disjointness instance, we get an $r$-round protocol $\tau$ of information cost $\le  \frac{r}{\polylog(r)}$ and also it solves disjointness with small error (assuming we first amplify the error of $\pi$ to $1/r^3$ losing a log factor in information cost). However, the issue is that information cost of $\tau$ is low only w.r.t. \textbf{distributions $\boldsymbol{\nu}$ supported on disjoint pairs of sets}. The information cost of $\tau$ may increase dramatically when it is run on a pair of sets with many intersections. To deal with this we use a trick used in \cite{BGPW12}. 

Note that if there are too many intersections in a disjointness instance, then the players can just subsample some of the coordinates and check for an intersection in those coordinates. Hence we can assume wlog that the intersection size in a typical input distributed according to $\nu$ is small. This means that if we look at a typical coordinate $i$, the marginal distribution $\nu_i$ has small mass on $(1,1)$. And in this case, we can run $\pi$ on each coordinate. The only thing left to understand is: how does the information cost of $\pi$ change if we place a small mass, say $w$, on $(1,1)$? The answer to this turns out to be $r\cdot H(w)$, where $\pi$ has $r$-rounds. Note that this is in contrast to the classical case, where the answer would be just $H(w)$. Later we will give an example of a quantum protocol for AND whose information cost does go up by $r\cdot H(w)$. Also \textbf{this is the only place where we use the fact that the protocol we started with had only $\boldsymbol{r}$ rounds}. Such a dependence is necessary here, since an $\Omega(n/r)$ lower bound for general (non-$r$-round) protocols would 
violate the $O(\sqrt{n})$ upper bound. 

\medskip

For the second step, we use compression along with a strong direct product theorem for quantum communication complexity of $f$ in terms of the generalized discrepancy lower bound $GDM_{1/5} (f)$ due to Sherstov \cite{sdp_sherstov}. It says that to compute $k$ copies of a boolean function $f$ with success probability $2^{-\Omega(k)}$, it requires at least $k \cdot GDM_{1/5}(f)$ qubits of communication (with arbitrary amount of entanglement). Note that a strong direct product theorem for quantum communication complexity of disjointness was already known \cite{klauck2004quantum}, but we need a stronger version for our proof which shows that even computing a large fraction of the copies is hard and Sherstov's result also holds in this case\footnote{We could probably 
base our result off the lower bound of \cite{klauck2004quantum}, but the reduction would be considerably 
more complicated.}.

 Suppose there is a protocol $\pi$ for a function $f$ with quantum information cost $\le I$ w.r.t a distribution $\mu$ and probability of error $\le \eps$, then by quantum information equals amortized communication \cite{Tou14}, we get a protocol $\pi_k$ for $f^k$ which computes at least $(1-2\eps)k$ coordinates correctly with probability $\ge 0.99$ (w.r.t. $\mu^k$) and $QCC(\pi_k) \le k \cdot I + o(k)$. To apply Sherstov's theorem, we need such a protocol which works for worst case inputs. We show how to obtain such a worst case to average case reduction, whence applying Sherstov's result gives us the lower bound on information complexity.

\subsection*{Discussion and open problems}

In its entirety our proof shows how from a $r$-round protocol for disjointness, one can obtain a protocol for $k$ copies of disjointness of size $r^2$. But to achieve this reduction, we have to move to information complexity, since the number of rounds $r$ only comes up in an information theoretic context in our proof. 

Thus the reduction structure of the proof is communication$\rightarrow$information$\rightarrow$communication, with the latter communication problem having a known lower bound. Lower bounds for disjointness in the classical setting \cite{BaryossefJKS04,BGPW12} only do a reduction of the form 
communication $\rightarrow$ information, with an information complexity lower bound on the resulting problem proven directly. 

\begin{openproblem}
Give a direct proof of a lower bound for the information complexity of $DISJ_{r^2}$. 
\end{openproblem}

One possible attack route would be  along the lines of the proof for the classical case using zero-communication protocols \cite{kerenidis2012lower}. In the past, techniques developed for two-party quantum communication, e.g. the pattern matrix method \cite{patternmatrix}, turned out to be useful for multiparty number-on-forehead communication \cite{ChattopadhyayA08,sherstov_multiparty}. It could be that techniques developed for quantum information also result in similar progress.
 
Another natural question is whether the lower bound on the information complexity of AND can be proved using a direct argument:

\begin{openproblem} \label{AND_direct}
Give a direct proof of Theorem~B.
\end{openproblem}

Even though efforts since \cite{jrs_disj} to-date have been unsuccessful, it still could be possible to directly obtain Theorem~B via round elimination or other techniques and that would be really interesting, since it would also yield a new proof of the lower bound for quantum communication complexity of disjointness \cite{razborov2003,patternmatrix}. The recent breakthrough results in lower bounding conditional quantum mutual information \cite{fawzi-renner, BHOS_CQMI, BT_CQMI} should be relevant.

\begin{remark}
Our proofs can be adapted to show that the (unbounded round) zero-error quantum information complexity of AND w.r.t the prior $(1-\eps)/3,(1-\eps)/3,(1-\eps)/3,\eps$ is $\tilde{\Omega}(\sqrt{\eps})$. It is another intriguing question whether it is possible to have a direct proof for this. Note that this requires a global view of quantum information complexity, even though it is defined round by round. By a continuity argument this would also resolve open problem \ref{AND_direct}. 
\end{remark}

More generally, our understanding of the relationship between quantum information and communication complexity is in its early stages of development. Questions of interactive protocol compression occupy a central position in understanding the connection between classical information and communication complexity \cite{BBCR10,B12,ganor2014exponential}. In particular, \cite{BBCR10} shows that a protocol 
$\pi$ with information cost $I$ and communication cost $C$ can be compressed into a protocol with 
communication cost $\tilde{O}(\sqrt{I\cdot C})$. It remains open whether this (or an analagous) fact is true in the quantum setting: 

\begin{openproblem}
Given a quantum protocol $\pi$ over a distribution $\mu$ of inputs whose communication cost is $C$ and whose quantum information cost is $I$, can $\pi$ be simulated (with a small error) using a quantum protocol $\pi'$ whose communication cost is $\tilde{O}(\sqrt{I\cdot C})$?
\end{openproblem}

\section{Preliminaries}

\subsection{Quantum Information Theory}

We use the following notation for quantum theory; see \cite{Wat13, Wilde11} for more details. 
We associate a quantum register $A$ with
a corresponding vector space, also denoted by $A$. We only consider 
finite-dimensional vector spaces. A state of quantum register $A$ is 
represented by a density operator $\rho \in \D (A)$, with $\D (A)$ the set 
of all unit trace, positive semi-definite linear operators mapping $A$ into itself. 
We say that a state $\rho$ is pure if
it is a projection operator, i.e.~$(\rho^{A})^2 = \rho^{A} $. 
For a pure state $\rho$, we might use the pure state formalism, and represent $\rho$ by the
vector $\ket{\rho}$ it projects upon, i.e.~$\rho = \kb{\rho}{\rho}$; this is well-defined up to an irrelevant phase factor.

A quantum channel from quantum register $A$ into quantum register $B$ is represented
by a super-operator $\N^{A \rightarrow B} \in \C (A, B)$, with $\C (A, B ) $ the set of
 all completely positive, trace-preserving linear operators from $\D( A )$ 
into $\D(B)$. If $A=B$, we might simply write $\N^A$, and when systems are clear from context, we might drop the superscripts.
For channels $\N_1 \in \C (A, B), \N_2 \in \C (B, C)$, we denote their composition as $\N_2 \circ \N_1 \in \C (A, C)$, with action $(\N_2 \circ \N_1) (\rho) = \N_2 (\N_1 (\rho))$ on  any state $\rho \in \D (A)$. We might drop the $\circ$ symbol if the composition is clear from context.
For $A$ and $B$ isomorphic, we denote the identity mapping 
as $I^{A \rightarrow B}$, with some implicit choice for the change of basis.
 For $\N^{A_1 \rightarrow B_1} \otimes I^{A_2 \rightarrow B_2} \in \C(A_1 \otimes A_2, B_1 \otimes B_2)$, 
we might abbreviate this as $\N$ and leave the identity channel implicit 
when the meaning is clear from context. 

An important subset
 of $\C (A, B)$ when $A$ and $B$ are isomorphic spaces is the set
 of unitary channels $\U (A, B)$, the set of all maps $U \in\C (A, B)$ with an 
adjoint map $U^\dagger \in \C (B, A)$ such that
 $U^\dagger \circ U = I^{A}$ and $U \circ U^\dagger = I^{B}$. 
More generally, if $\dim (B) \geq \dim (A)$, we denote by $\U (A, B)$ the set of isometric channels, i.e.~the set of all maps
$V \in \C (A, B)$ with an adjoint map $V^\dagger \in \C (B, A)$ such that  $V^\dagger \circ V = I^{A}$.
Another important example of channel that we use is the
 partial trace $\Tr{B}{\cdot} \in C(A \otimes B, A)$ which effectively 
gets rid of the $B$ subsystem to obtain the marginal state on subsystem $A$. 
Fixing an orthonormal basis $\{ \ket{b} \}$ for $B$, we can write the action of $\Tra{B}$ on 
any $\rho^{AB} \in \D(A \otimes B)$ as $\Tr{B}{\rho^{AB}} = 
\sum_b \bra{b} \rho^{AB} \ket{b}$.
Note that the action of $\Tra{B}$ is independent of the choice of basis chosen to represent it,  so we unambiguously write $\rho^A = \Tr{B}{\rho^{AB}}$.
We also use the notation $\Tra{\neg A} = \Tra{B}$ to express that we want to keep only the $A$ register.

Fixing a basis also allows us to talk about classical states and 
joint states: $\rho \in \D(B)$ is classical (with respect to this basis) if it is diagonal in basis
 $\{ \ket{b} \}$, i.e.~$\rho = \sum_b p_B (b) \cdot \kb{b}{b}$ for some probability
 distribution $p_B$. More generally, subsystem $B$ of $\rho^{AB}$ is 
said to be classical if we can write $\rho^{AB} = \sum _b p_B (b) \cdot
\kb{b}{b}^B \otimes \rho_b^A$ for some $\rho_b^A \in \D(A)$.
An important example of a channel mapping a quantum system 
to a classical one is the measurement channel $\Delta_B$, defined 
as $\Delta_B ( \rho) = \sum_b \bra{b} \rho \ket{b} \cdot \kb{b}{b}^B$ for any $\rho \in \D(B)$.
Note that for any state $\rho \in \D (B_1 \otimes B_2 \otimes C \otimes R)$ of the form
\begin{align*}
	\ket{\rho}^{B_1 B_2 C R} = \sum_b \sqrt{p_B (b)} \cdot \ket{b}^{B_1} \ket{b}^{B_2} \ket{\rho_b}^{CR},
\end{align*}
we have $\Tr{B_2}{\rho^{B_1 B_2 C R}} = \sum_b p_B (b) \cdot \kb{b}{b}^{B_1} \otimes \rho_b^{CR}$
and $\Tr{B_2 R}{\rho^{B_1 B_2 C R}} = \sum_b p_B (b) \cdot \kb{b}{b}^{B_1} \otimes \rho_b^C$, with the state on $B_1$ classical in both cases.
Often, $A, B, C, \cdots$ will be used to discuss general systems,
 while $X, Y, Z, \cdots$ will be reserved for classical systems, or quantum systems like $B_1$ and $B_2$ above that are classical once one of them is traced out, and can be thought of as containing a quantum copy of the classical content of one another.

For a state $\rho^A \in \D (A)$, a purification is a pure state
 $\rho^{AR} \in \D (A \otimes R)$ satisfying $\Tr{R}{\rho^{AR}} = \rho^A$. 
If $R$ has dimension at least that of $A$, then such a purification always exists.
For a given $R$, all purifications are equivalent up to a unitary on $R$,
and more generally, if $\dim (R^\prime) \geq \dim (R)$ and $\rho_1^{AR}, \rho_2^{A R^\prime}$ are two purifications of $\rho^A$, then there exists an isometry $V_\rho^{R \rightarrow R^\prime}$ such that $\rho_2^{A R^\prime} = V_\rho (\rho_1^{A R})$.
For a channel $\N \in C(A, B)$, an isometric extension is a
 unitary $U_\N \in U(A , A^\prime \otimes B)$ with
 $\Tr{A^\prime}{U_\N (\rho^A )} 
= \N (\rho^A)$ for all $\rho^A$.
Such an extension always
 exists provided $A^\prime$ is of dimension at least $\dim (A)^2$.
For the measurement channel $\Delta_B$, an isometric extension is given by $U_\Delta = \sum_b \ket{b}^{B^\prime} \ket{b}^{B} \bra{b}^{B}$.

The notion of distance we use is the trace distance, defined for 
two states $\rho_1, \rho_2 \in \D (A)$ as the sum of the absolute values of the eigenvalues of their difference:
\begin{align*}
 \| \rho_1 - \rho_2 \|_A = \Tr{}{| \rho_1 - \rho_2 |}.
\end{align*}
It has an operational interpretation as four times the best bias possible in a 
state discrimination test between $\rho_1$ and $\rho_2$.
The subscript tells on which subsystems the trace distance is evaluated, and remaining subsystems
might need to be traced out.
We use the following results about trace distance. For proofs of 
these and other standard results in quantum information theory that 
we use, see \cite{Wilde11}. The trace distance is monotone under
 noisy channels: for any $\rho_1, \rho_2 \in \D (A)$ and $\N \in \C(A, B)$,
\begin{align}
	\| \N (\rho_1) - \N (\rho_2) \|_B \leq \| \rho_1 - \rho_2 \|_A.
\end{align}
For isometries, the inequality becomes an equality, a property called isometric
 invariance of the trace distance. Hence, for any $\rho_1, \rho_2 \in \D(A)$ and any $U \in \U (A, B)$, we have
\begin{align}
	\| U (\rho_1) - U (\rho_2) \|_B = \| \rho_1 - \rho_2 \|_A.
\end{align}
Also, the trace distance cannot be increased by adjoining an uncorrelated system:
for any $\rho_1, \rho_2 \in \D(A), \sigma \in \D (B)$
\begin{align}
	\| \rho_1 \otimes \sigma - \rho_2 \otimes \sigma \|_{AB} =  \| \rho_{1} - \rho_{2}  \|_{A}.
\end{align}
The trace distance obeys a property that we call joint linearity: 
for a classical system $X$ and two states $\rho_1^{XA} = 
p_X (x) \cdot \kb{x}{x}^X \otimes \rho_{1, x}^A$ and $\rho_2^{XA} = p_X (x) \cdot \kb{x}{x}^X \otimes \rho_{2, x}^A$,
 \begin{align}
	\| \rho_1 - \rho_2 \|_{XA} =  \sum_x p_X (x) \| \rho_{1,x} - \rho_{2, x}  \|_{A}.
\end{align}

The measure of information that we use is the von Neumann entropy, defined for any state $\rho \in \D(A)$ as
\begin{align*}
	H(A)_\rho = - \Tr{}{\rho \log \rho},
\end{align*}
in which we take the convention that $0 \log 0 = 0$, justified by a continuity argument. The logarithm $\log$ is taken in base $2$, while the natural logarithm is denoted $\ln$. 
Note that $H$ is invariant under isometries applied on $\rho$.
If the 
state to be evaluated is clear from context, we might drop the subscript. 
Conditional entropy for a state $\rho^{ABC} \in \D(A \otimes B \otimes C)$ is then defined as
\begin{align*}
	H(A | B) = H(AB) - H(B),
\end{align*}
mutual information as
\begin{align*}
	I(A; B ) = H (A) - H(A | B),
\end{align*}
and conditional mutual information as
\begin{align*}
	I(A; B | C ) = H (A | C) - H(A | B C).
\end{align*}
Note that mutual information and conditional mutual information are symmetric in interchange of $A, B$,
and invariant under a local isometry applied to $A, B$ or $C$.
For any pure bipartite state $\rho^{AB} \in \D(A \otimes B)$, the entropy on each subsystem is the same:
\begin{align}
	H(A) = H(B).
\end{align}
Since all purifications are equivalent up to an isometry on the purification registers, we get that for any two pure states
$\ket{\phi}^{ABC R^\prime}$ and $\ket{\psi}^{ABC R}$ such that 
$\phi^{ABC} = \psi^{ABC}$,
\begin{align}
\label{eq:pure}
	I(C;R'|B)_{\phi} = I(C;R|B)_{\psi}.
\end{align}



For isomorphic $A, A^\prime$, a maximally entangled state $\psi \in \D (A \otimes A^\prime)$ is
a pure state satisfying $H(A) = H(A^\prime) = \log \dim (A) = \log \dim (A^\prime)$.
For a system $A$ of dimension $\dim (A)$ and any $\rho \in \D(A \otimes B \otimes C )$, we have the bounds
\begin{align}
	0 \leq H(A) \leq \log \dim (A), \\
	- H (A) \leq H(A | B) \leq H (A), \\
	0 \leq I(A; B) \leq 2 H (A), \\
	0 \leq I(A; B | C) \leq 2 H (A).
\end{align}
If $A$ or $B$ is a classical system, we get the tighter bounds
\begin{align}
	0  \leq H(A | B) , \\
	 I(A; B)  \leq  H (A), \\
	 I(A; B | C)  \leq  H (A).
\end{align}
The conditional mutual information satisfies a chain rule: for any $\rho \in \D(A \otimes B \otimes C \otimes D)$,
\begin{align}
	I (AB ; C | D) = I (A; C | D) + I(B; C | A D).
\end{align}
For product states $\rho^{A_1 B_1 C_1 A_2 B_2 C_2} = \rho_1^{A_1 B_1 C_1} \otimes \rho_2^{A_2 B_2 C_2}$, entropy is additive,
\begin{align}
	H(A_1 A_2) = H(A_1) + H(A_2),
\end{align}
and so there is no conditional mutual information between product system,
\begin{align}
	I(A_1 ; A_2 | B_1 B_2 ) = 0,
\end{align}
and conditioning on a product system is useless,
\begin{align}
	I(A_1 ; B_1 | C_1 A_2 ) = I(A_1 ; B_1 | C_1 ).
\end{align}
More generally,
\begin{align}
	I(A_1 A_2 ; B_1 B_2 | C_1 C_2 ) = I(A_1 ; B_1 | C_1 ) + I(A_2; B_2 | C_2).
\end{align}
Two important properties of the conditional mutual information are non-negativity, equivalent to strong subadditivity, and the data processing inequality.
For any $\rho \in \D (A \otimes B \otimes C)$ and $\N \in \C (B, B^\prime)$, with $\sigma = \N (\rho)$,
\begin{align}
	I (A; B | C)_\rho & \geq 0, \\
	I (A; B | C)_\rho & \geq I (A; B^\prime | C)_\sigma.
\end{align}
For classical systems, conditioning is equivalent to taking an average:
for any $\rho^{ABCX} = \sum_x p_X(x) \cdot \kb{x}{x}^X \otimes \rho_x^{ABC}$, for a classical system $X$ and
some appropriate $\rho_x \in \D (A \otimes B \otimes C)$,
\begin{align}
	H (A | B X )_\rho & = \sum_x p_X (x) \cdot H (A | B)_{\rho_x},\\
	I (A; B | C X )_\rho & = \sum_x p_X (x) \cdot I(A; B | C)_{\rho_x}.
\end{align}

\subsection{Quantum Communication Model}
\label{sec:qucomm}

The model for 
communication complexity that we consider is the following. For a given 
bipartite relation $T \subset X \times Y \times Z_A \times Z_B$
and input distribution $\mu$ on $X \times Y$, Alice and Bob are 
given input registers $A_{in}, B_{in}$ containing their classical input $x \in X, y \in Y$ at the outset of the protocol, respectively,
and they output registers $A_{out}, B_{out}$ containing their classical output $z_A \in Z_A, z_B \in Z_B$ at the end of the protocol, 
respectively, which should satisfy the relation $T$. We generally allow for some small
error $\epsilon$ in the output, which will be formalized below. 
In this distributional communication complexity setting, the input is a classical state $\rho = \sum_{x \in X, y \in Y} \mu (x, y) \cdot \kb{x}{x}^{A_{in}} \otimes \kb{y}{y}^{B_{in}}$, similarly for the output $\Pi (\rho) = \sum_{z_A \in Z_A, z_B \in Z_B} p_{Z_A Z_B} (z_A, z_B) \cdot \kb{z_A}{z_A}^{A_{out}} \otimes \kb{z_B}{z_B}^{B_{out}}$ of the protocol $\Pi$ implementing the relation, and the error parameter corresponds to the average probability of failure $\sum_{x, y} \mu (x, y) \cdot [(x, y, \Pi (x, y)) \not\in R ] \leq \epsilon$.

A $r$-round protocol $\Pi$ for implementing relation $T$ on input $\rho^{A_{in} B_{in}}$ is 
defined by a sequence of isometries $U_1, \cdots, U_{r + 1}$ along with a 
pure state $\psi \in \D (T_A \otimes T_B)$ shared between Alice and Bob, 
for arbitrary finite dimensional registers $T_A, T_B$. 
For appropriate finite 
dimensional memory registers $A_1, A_3, \cdots A_{r - 1}, A^\prime$ held by Alice, $B_2, B_4, \cdots B_{r - 2}, B^\prime$ 
held by Bob, and communication registers $C_1, C_2, C_3, \cdots C_r$ 
exchanged by Alice and Bob, we have
$U_1 \in \U(A_{in} \otimes T_A, A_1 \otimes C_1), 
U_2 \in \U(B_{in} \otimes T_B \otimes C_1, B_2 \otimes C_2), 
U_3 \in \U(A_1 \otimes C_2, A_3 \otimes C_3), 
U_4 \in \U(B_2 \otimes C_3, B_4 \otimes C_4), \cdots , 
U_{r} \in \U(B_{r - 2} \otimes C_{r - 1}, B_{out} \otimes B^\prime \otimes C_{r}), 
U_{r  + 1} \in \U(A_{r - 1} \otimes C_r, A_{out} \otimes A^\prime)$.
We adopt the convention that, in the first round, $B_1 = B_0 = B_{in} \otimes T_B$, in even rounds $B_i = B_{i-1} $, and in odd rounds $A_i = A_{i-1} $. In this way, in round $i$, after application of $U_i$, Alice holds register $A_i$, Bob holds register $B_i$ and the communication register is $C_i$. We slightly abuse notation and also write $\Pi$ to denote the channel implemented by the protocol, i.e.
\begin{align}
\Pi (\rho) =
\Tr{A^\prime B^\prime }{U_{r  + 1} U_r \cdots U_2 U_1 (\rho \otimes \psi)}.
\end{align}
To formally define the error, we introduce a purification register $R$.
For a classical input $\rho^{A_{in} B_{in}} = \sum_{x \in X, y \in Y} \mu (x, y) \cdot \kb{x}{x}^{A_{in}} \otimes \kb{y}{y}^{B_{in}}$ like we consider here,
we can always take this purification to be of the form 
$\ket{\rho}^{A_{in} B_{in} R} = \sum_{x \in X, y \in Y} \sqrt{\mu (x, y)} \ket{x}^{A_{in}} \ket{y}^{B_{in}} \ket{xy}^{R_1} \ket{xy}^{R_2}$, for an appropriately chosen partition of $R$ into $R_1, R_2$. Note that if we trace out the the $R_2$ register, then we are left with a classical state such that $R_1$ contains a copy of the joint input.
Then we say that a protocol $\Pi$ for implementing relation $T$
on input $\rho^{A_{in} B_{in}}$, with purification $\rho^{A_{in} B_{in} R}$,
has average error $\epsilon \in [0, 1]$ if $P_e^\mu = $Pr$_{\mu, \Pi} [\Pi (\rho^{A_{in} B_{in} R_1}) \not\in T] \leq \epsilon$.
We denote the set of all such protocols as $ \T (T, \mu, \epsilon)$.
If we want to restrict this set to bounded round protocols with  $r$ rounds, we write $\T^r (T, \mu, \epsilon)$.
The worst case error of a protocol is $P_e^w = \max_\mu P_e^\mu$, in which it is sufficient to optimize over all atomic distributions $\mu$. We denote by $\T (T, \epsilon)$ the set of all protocols implementing relation $T$ with worst case error at most $\epsilon$, and by $\T^r (T, \epsilon)$ if we restrict this set to $r$-round protocols.
Let us formally 
define the different quantities that we work with.

\begin{definition}
For a protocol $\Pi$ as defined above, we define the \emph{quantum communication cost} of $\Pi$ as
\begin{align*}
	QCC (\Pi) = \sum_{i} \log \dim (C_i).
\end{align*}
\end{definition}

Note that we do not require that $\dim (C_i) = 2^k$ for some $k \in \mathbb{N}$, as is usually done.
This will not affect our definition on information cost and complexity, but
might affect the quantum communication complexity by at most a factor of two, without affecting the round complexity.
The corresponding notions of quantum communication complexity of a relation are:

\begin{definition}
For a relation $T \subset X \times Y \times Z_A \times Z_B$, an input distribution $\mu$ on $X \times Y$ and an
error parameter $\epsilon \in [0, 1]$, we define the $\epsilon$-error \emph{quantum communication complexity} of $T$ 
on input $\mu$ as
\begin{align*}
	QCC (T, \mu, \epsilon) = \min_{\Pi \in \T (T, \mu, \epsilon)} QCC (\Pi),
\end{align*}
and the worst-case $\epsilon$-error \emph{quantum communication complexity} of $T$ as
\begin{align*}
	QCC (T, \epsilon) = \min_{\Pi \in \T (T, \epsilon)} QCC (\Pi),
\end{align*}

\end{definition}

\begin{remark} 
For any $T, \mu, 0 \leq \epsilon_1 \leq \epsilon_2 \leq 1$, the following holds:
\begin{align*}
QCC(T, \mu, \epsilon_2) & \leq QCC(T, \mu, \epsilon_1), \\
QCC(T, \epsilon_2) & \leq QCC(T, \epsilon_1).
\end{align*}
\end{remark}

We have the following definitions for bounded round quantum communication complexity, and
a similar remark holds.

\begin{definition}
For a relation $T \subset X \times Y \times Z_A \times Z_B$, an input distribution $\mu$ on $X \times Y$, an
error parameter $\epsilon \in [0, 1]$
 and a bound $r \in \mathbb{N}$ on the number of rounds, 
we define the $r$-round, $\epsilon$-error \emph{quantum communication complexity} of $T$ 
on input $\mu$ as
\begin{align*}
	QCC^r (T, \mu, \epsilon) = \min_{\Pi \in \T^r (T, \mu, \epsilon)} QCC (\Pi),
\end{align*}
and $r$-round, worst-case $\epsilon$-error \emph{quantum communication complexity} of $T$ as
\begin{align*}
	QCC^r (T, \epsilon) = \min_{\Pi \in \T^r (T, \epsilon)} QCC (\Pi),
\end{align*}

\end{definition}

\subsection{Quantum Information Complexity}

We use the notion of quantum information complexity as defined in~\cite{Tou14}.
The register $R$ is the purification register, invariant throughout the protocol since we consider local isometric processing. Note that,
as noted before when considering a $R_1 R_2$ partition for $R$, for classical input distributions, the purification register can be thought of as containing a (quantum) copy of the classical input. The definition is however invariant under the choice of $R$ and corresponding purification.

\begin{definition}
For a protocol $\Pi$ and a state $\rho$
with purification held in system $R$,
we define the \emph{quantum information cost} of $\Pi$ on input $\rho$ as
\begin{align*}
	QIC (\Pi, \rho) = \sum_{i>0, odd} \frac{1}{2} I(C_i; R | B_{i }) + \sum_{i>0, even} \frac{1}{2} I(C_i; R | A_{i }).
\end{align*}
\end{definition}

\begin{definition}
For a relation $T \subset X \times Y \times Z_A \times Z_B$, an input distribution $\mu$ on $X \times Y$,
an error parameter $\epsilon \in [0, 1]$ and a number of round $r$,
we define the $\epsilon$-error \emph{quantum information complexity} of $T$ on input $\mu$ as
\begin{align*}
	QIC (T, \mu, \epsilon) = \inf_{\Pi \in \T (T, \mu, \epsilon)} QIC (\Pi, \mu),
\end{align*}
and the $r$-round, $\epsilon$-error \emph{quantum information complexity} of $T$ on input $\mu$ as
\begin{align*}
	QIC^r (T, \mu, \epsilon) = \inf_{\Pi \in \T^r (T, \mu, \epsilon)} QIC (\Pi, \mu),
\end{align*}
\end{definition}

The following properties of quantum information cost and complexity were proved in Ref.~\cite{Tou14}.

\begin{lemma}
\label{lem:qicvsqcc}
	For any protocol $\Pi$ and input distribution $\mu$, the following holds:
\begin{align*}
0 \leq QIC (\Pi, \mu) \leq QCC (\Pi).
\end{align*}
\end{lemma}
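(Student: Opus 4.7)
The plan is to verify each of the two inequalities directly from the definition of $QIC(\Pi, \mu)$, using only standard properties of von Neumann entropy and conditional mutual information that have already been collected in the preliminaries.

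For the lower bound $0 \le QIC(\Pi, \mu)$, I would simply observe that each summand is of the form $\tfrac{1}{2} I(C_i; R \mid B_i)$ or $\tfrac{1}{2} I(C_i; R \mid A_i)$, and the conditional mutual information of any tripartite quantum state is non-negative (this is strong subadditivity, noted as a basic property in the preliminaries). Hence the whole sum, being a sum of non-negative terms, is non-negative.

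For the upper bound $QIC(\Pi, \mu) \le QCC(\Pi)$, the plan is to bound each round's contribution separately by $\log \dim(C_i)$, and then sum. For a round $i$, the relevant term is $\tfrac{1}{2} I(C_i; R \mid B_i)$ (or $\tfrac{1}{2} I(C_i; R \mid A_i)$ when $i$ is even). Using the general dimension bound $I(X; Y \mid Z) \le 2 H(X)$ from the preliminaries, together with $H(C_i) \le \log \dim(C_i)$, I get
\begin{align*}
\tfrac{1}{2} I(C_i; R \mid B_i) \;\le\; H(C_i) \;\le\; \log \dim(C_i),
\end{align*}
and likewise for even rounds. This is precisely the role of the factor $\tfrac{1}{2}$ in the definition: it absorbs the factor of $2$ in the quantum bound $I \le 2H$, which is absent in the classical setting. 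Summing over all rounds gives
\begin{align*}
QIC(\Pi, \mu) \;\le\; \sum_i \log \dim(C_i) \;=\; QCC(\Pi).
\end{align*}

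There is really no obstacle here beyond being careful with the definitions; both inequalities drop out immediately from basic entropic inequalities applied term-by-term. The only point worth emphasizing is that the factor $\tfrac{1}{2}$ in the definition of $QIC$ is exactly what is needed for the upper bound $QIC \le QCC$ to hold in the quantum regime, where $I(X;Y|Z)$ can be as large as $2\log\dim(X)$ rather than $\log\dim(X)$ as in the classical case.
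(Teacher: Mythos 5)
Your proof is correct and is exactly the standard argument: non-negativity of each term follows from strong subadditivity, and the upper bound follows by bounding each round's contribution by $\tfrac{1}{2}\cdot 2H(C_i) \le \log\dim(C_i)$ and summing. The paper itself does not reprove this lemma but cites \cite{Tou14}, where the same term-by-term entropic argument is used, so there is nothing to add.
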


\begin{lemma}
For a relation $T \subset X \times Y \times Z_A \times Z_B$, an input distribution $\mu$ on $X \times Y$,
an error parameter $\epsilon \in [0, 1]$ and a number of round $r$,
the following holds:
\begin{align*}
0 \leq QIC (T, \mu, \epsilon) \leq QCC (T, \mu, \epsilon), \\
0 \leq QIC^r (T, \mu, \epsilon) \leq QCC^r (T, \mu, \epsilon).
\end{align*}
\end{lemma}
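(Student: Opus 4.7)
The statement is an immediate lifting of the per-protocol inequalities in Lemma~\ref{lem:qicvsqcc} to the corresponding complexity measures, obtained by taking infima over the relevant sets of protocols. So the plan is essentially a two-line argument rather than a substantive proof.

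For the upper bounds, fix the relation $T$, the distribution $\mu$, the error $\epsilon$, and (in the bounded-round case) the round bound $r$. By Lemma~\ref{lem:qicvsqcc}, every protocol $\Pi \in \T(T, \mu, \epsilon)$ satisfies $QIC(\Pi, \mu) \leq QCC(\Pi)$. Taking the infimum on both sides over $\Pi \in \T(T, \mu, \epsilon)$ directly gives
\[
QIC(T, \mu, \epsilon) = \inf_{\Pi \in \T(T, \mu, \epsilon)} QIC(\Pi, \mu) \;\leq\; \inf_{\Pi \in \T(T, \mu, \epsilon)} QCC(\Pi) = QCC(T, \mu, \epsilon).
\]
The bounded-round inequality is proved identically, replacing the set $\T(T, \mu, \epsilon)$ by $\T^r(T, \mu, \epsilon)$ on both sides; crucially, the two infima are over the same set of protocols, which is exactly what allows the per-protocol inequality to transfer.

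For the lower bounds, non-negativity of $QIC(\Pi, \mu)$ (also from Lemma~\ref{lem:qicvsqcc}) passes to the infimum, giving $0 \leq QIC(T, \mu, \epsilon)$ and $0 \leq QIC^r(T, \mu, \epsilon)$. The only thing worth double-checking is that the relevant infima are over nonempty sets --- i.e.~that at least one protocol with error at most $\epsilon$ (and at most $r$ rounds) exists --- but this is a mild and standard assumption on $(T, \mu, \epsilon, r)$ for the complexity measures to be well-defined in the first place. There is no real obstacle here; this lemma is stated mainly to record the complexity-level versions of Lemma~\ref{lem:qicvsqcc} for later reference.
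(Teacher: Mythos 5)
Your proof is correct and is exactly the intended argument: the paper itself gives no separate proof of this lemma (it is cited from Ref.~\cite{Tou14}), and the identical one-line reasoning appears in the paper's proof of the prior-free analogue (Lemma~\ref{lem:qicvsqccpf}), namely passing the per-protocol inequality $0 \leq QIC(\Pi,\mu) \leq QCC(\Pi)$ of Lemma~\ref{lem:qicvsqcc} through the optimization over $\T(T,\mu,\epsilon)$ (resp.\ $\T^r(T,\mu,\epsilon)$). Your remark about nonemptiness of the protocol set is the only caveat, and it is handled the same way implicitly throughout the paper.
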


\begin{lemma}
\label{lem:add1}
	For any two protocols $\Pi^1$ and $ \Pi^2$ with $r_1$ and $ r_2$ rounds, respectively,
there exists a $r$-round protocol $\Pi_2$, satisfying $\Pi_2 = \Pi^1 \otimes \Pi^2, r = \max (r_1, r_2)$, such that the following holds
for any corresponding input states $\rho^1, \rho^2$:
\begin{align*}
QIC (\Pi_2, \rho^1 \otimes \rho^2) = QIC(\Pi^1, \rho^1) + QIC(\Pi^2, \rho^2).
\end{align*}
\end{lemma}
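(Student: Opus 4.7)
The natural candidate for $\Pi_2$ is to run $\Pi^1$ and $\Pi^2$ in parallel: the shared entanglement is $\psi^1\otimes\psi^2$ on $T_A^1T_A^2\otimes T_B^1T_B^2$, and in each round Alice (respectively Bob) applies the tensor product of the two isometries, so that the round-$i$ isometry becomes $U_i^1\otimes U_i^2$ acting on $A_{i-1}^1A_{i-1}^2\otimes C_{i-1}^1C_{i-1}^2$ (with appropriate reordering) and the round-$i$ communication register is $C_i=C_i^1\otimes C_i^2$. If $r_1\neq r_2$, pad the shorter protocol with trivial rounds whose communication register is one-dimensional; such rounds contribute $0$ to QIC and do not alter the implemented channel. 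By construction $\Pi_2 = \Pi^1\otimes\Pi^2$ as a channel, and it has $r=\max(r_1,r_2)$ rounds.

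The heart of the proof is the observation that, when the input is the product state $\rho^1\otimes\rho^2$ with purification $R=R^1R^2$ chosen as a product $\ket{\rho^1}^{A_{in}^1B_{in}^1R^1}\otimes\ket{\rho^2}^{A_{in}^2B_{in}^2R^2}$, the global state of $\Pi_2$ at every round factorizes as the tensor product of the corresponding states of $\Pi^1$ and $\Pi^2$. I would prove this by a straightforward induction on $i$: the base case holds because the initial state and shared entanglement are products, and the inductive step holds because the round-$i$ isometry is itself a tensor product, and tensor products of isometries preserve product states on the appropriate bipartition of systems.

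Granted this product structure, the per-round contribution to $QIC(\Pi_2,\rho^1\otimes\rho^2)$ decomposes cleanly using the additivity identity for conditional mutual information on product states stated in the preliminaries, namely $I(A_1A_2;B_1B_2\,|\,C_1C_2)=I(A_1;B_1\,|\,C_1)+I(A_2;B_2\,|\,C_2)$ when the joint state is a product. Applied with $A_j=C_i^j$, $B_j=R^j$ and $C_j=B_i^j$ (for odd $i$, and symmetrically for even $i$), this gives
\begin{align*}
I(C_i;R\,|\,B_i)_{\Pi_2} = I(C_i^1;R^1\,|\,B_i^1)_{\Pi^1}+I(C_i^2;R^2\,|\,B_i^2)_{\Pi^2},
\end{align*}
and summing over rounds with the $\tfrac{1}{2}$ prefactors yields exactly $QIC(\Pi^1,\rho^1)+QIC(\Pi^2,\rho^2)$.

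The only mildly delicate point, which I view as the main bookkeeping obstacle rather than a conceptual one, is ensuring that the round alignment and register naming conventions are compatible for the two protocols (parity of speaking player, padding rounds, and the identification $R=R^1R^2$ as the purification register of the product input). Once one fixes the convention that odd rounds are Alice's and trivial padding rounds carry one-dimensional communication registers, the product-state invariant survives all paddings and the sum in the definition of QIC rearranges immediately into the claimed additive form.
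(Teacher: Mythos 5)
Your proposal is correct. Note that the paper does not actually prove Lemma~\ref{lem:add1}; it is imported from \cite{Tou14} as a known property. The construction you describe (parallel composition with trivial padding of the shorter protocol) is exactly the one the paper uses in its own proof of the closely related Lemma~\ref{lem:subadd}, which handles a general, possibly correlated joint input and therefore only obtains an inequality: there the argument expands each term $I(C_i^1 C_i^2; R_{12}\,|\,B_i^1 B_i^2)$ by the chain rule and discards nonnegative cross terms such as $I(C_i^1; B_i^2\,|\,B_i^1)$ and $I(C_i^1; A_i^2 C_i^2\,|\,B_i^1 B_i^2 R_{12})$. Your argument is the clean specialization to product inputs: once you observe (by induction on rounds, using that each round isometry is a tensor product) that the global purified state remains a product of the two protocols' states throughout, every one of those cross terms vanishes identically, and the identity $I(A_1A_2;B_1B_2\,|\,C_1C_2)=I(A_1;B_1\,|\,C_1)+I(A_2;B_2\,|\,C_2)$ for product states gives exact additivity round by round. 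The two minor bookkeeping points you flag are handled correctly: the definition of $QIC$ is invariant under the choice of purification, so taking $R=R^1R^2$ in product form is legitimate, and in the padded rounds $i>r_2$ the term $I(C_i^1;R^1R^2\,|\,B_i^1B_i^2)$ collapses to $I(C_i^1;R^1\,|\,B_i^1)$ because conditioning on, and correlating with, an uncorrelated product system contributes nothing. So your proof is a valid and somewhat more streamlined route than specializing the paper's subadditivity argument.
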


\begin{lemma}
\label{lem:add2}
	For any $r$-round protocol $\Pi_2$ and any input states $\rho^1 \in \D (A_{in}^1 \otimes B_{in}^1), \rho_2 \in \D ( A_{in}^2 \otimes B_{in}^2)$,
there exist $r$-round protocols $\Pi^1, \Pi^2$ satisfying $\Pi^1 (\cdot) = \Tra{A_{out}^2 B_{out}^2} \circ \Pi_2 (\cdot \otimes \rho^2)$ , $\Pi^2 (\cdot) = \Tra{A_{out}^1 B_{out}^1} \circ \Pi_2 (\rho^1 \otimes \cdot)$, and the following holds:
\begin{align*}
QIC(\Pi^1, \rho^1) + QIC(\Pi^2, \rho^2) = QIC (\Pi_2, \rho^1 \otimes \rho^2).
\end{align*}
\end{lemma}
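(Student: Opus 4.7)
The plan is to construct $\Pi^1$ and $\Pi^2$ by absorbing the ``unused'' input, together with its purification, into the shared-entanglement register of the respective protocol, and then to obtain the claimed additivity by two applications of the chain rule for conditional mutual information.

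First I would purify $\rho^2$ by a fresh register $R^2$ via a pure state $\ket{\sigma^2}^{A_{in}^2 B_{in}^2 R^2}$, and define $\Pi^1$ to use the same sequence of isometries $U_1, \ldots, U_{r+1}$ as $\Pi_2$ but with shared entanglement $\psi \otimes \sigma^2$, in which Alice holds $T_A \otimes A_{in}^2 \otimes R^2$ and Bob holds $T_B \otimes B_{in}^2$. The register $R^2$ rides along untouched, and at the end the registers $A_{out}^2, B_{out}^2, R^2$ are absorbed into Alice's garbage register $A^\prime$. Since the isometries act as the identity on $R^2$, partial traces commute with the protocol evolution, giving $\Pi^1(\cdot) = \Tra{A_{out}^2 B_{out}^2} \circ \Pi_2(\cdot \otimes \rho^2)$ with the same round count. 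Symmetrically, I would purify $\rho^1$ by $R^1$ and define $\Pi^2$ by giving \emph{Bob} (not Alice) the register $R^1$ in the shared state $\psi \otimes \sigma^1$. This asymmetric choice of who owns $R^1, R^2$ is the one crucial ingredient and is what makes the chain rule align below.

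By construction, the global pure state after round $i$ agrees on the registers $(C_i, A_i, B_i, R^1, R^2)$ across all three protocols $\Pi_2, \Pi^1, \Pi^2$, differing only in the assignment of $R^1, R^2$ to Alice, Bob, or the external purification register. For each odd $i$, Bob's memory is $B_i$ in $\Pi^1$ but $B_i R^1$ in $\Pi^2$, so the round-$i$ contributions to $QIC(\Pi^1, \rho^1)$ and $QIC(\Pi^2, \rho^2)$ are $\frac{1}{2} I(C_i; R^1 | B_i)$ and $\frac{1}{2} I(C_i; R^2 | B_i R^1)$, which by the chain rule sum to $\frac{1}{2} I(C_i; R^1 R^2 | B_i)$, exactly the round-$i$ contribution to $QIC(\Pi_2, \rho^1 \otimes \rho^2)$. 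The even rounds are handled by the mirrored grouping
\[
I(C_i; R^1 | A_i R^2) + I(C_i; R^2 | A_i) = I(C_i; R^1 R^2 | A_i),
\]
where now Alice's memory is $A_i R^2$ in $\Pi^1$ and $A_i$ in $\Pi^2$. Summing over all rounds with the $\frac{1}{2}$ weights yields the claimed identity.

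The only delicate point is getting the placement of the purification registers right: $R^2$ must sit with Alice in $\Pi^1$ (so that Bob's memory in odd rounds remains $B_i$ and the chain rule expands as $I(C_i; R^1 | B_i) + I(C_i; R^2 | B_i R^1)$), and $R^1$ must sit with Bob in $\Pi^2$ (for the mirrored expansion in even rounds). With the opposite assignment, the round-by-round sums would not cleanly partition into contributions of the two sub-protocols. Once this choice is fixed, the remaining work --- matching the reduced states, checking the functional equations, and verifying the $r$-round structure --- is routine.
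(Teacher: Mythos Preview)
The paper does not prove this lemma; it is listed among the properties ``proved in Ref.~\cite{Tou14}'' and quoted without argument. Your construction is the natural one and is correct: absorbing the purified unused input into the shared entanglement, with the asymmetric placement of $R^2$ with Alice in $\Pi^1$ and $R^1$ with Bob in $\Pi^2$, is exactly what makes the two chain-rule expansions
\[
I(C_i; R^1 | B_i) + I(C_i; R^2 | B_i R^1) = I(C_i; R^1 R^2 | B_i), \qquad
I(C_i; R^2 | A_i) + I(C_i; R^1 | A_i R^2) = I(C_i; R^1 R^2 | A_i)
\]
line up round by round. One small slip: you write that ``$A_{out}^2, B_{out}^2, R^2$ are absorbed into Alice's garbage register $A^\prime$,'' but $B_{out}^2$ is on Bob's side and should go into $B^\prime$. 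This is cosmetic and does not affect the information-cost calculation, since the output/garbage split plays no role in the round-$i$ conditional mutual informations.
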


\begin{lemma}
\label{lem:conv}
For any $p \in [0, 1]$, any two protocols $\Pi^1, \Pi^2$ with $r_1, r_2$ rounds, respectively, 
there exists a $r$-round protocol $\Pi$ satisfying
$\Pi = p\Pi^1 +(1-p)\Pi^2, r = \max (r_1, r_2)$, such that the following holds for any state $\rho$:
\begin{align*}
 QIC(\Pi, \rho) = p QIC(\Pi^1, \rho) + (1-p) QIC(\Pi^2, \rho).
\end{align*}
\end{lemma}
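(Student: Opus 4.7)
The plan is to implement $\Pi$ by having Alice and Bob share an additional \emph{choice register} that coherently selects which of $\Pi^1$ or $\Pi^2$ is executed. Take the shared entanglement to be
\[
\ket{\psi^{\text{new}}} \;=\; \bigl(\sqrt{p}\,\ket{0}_{X_A}\ket{0}_{X_B} + \sqrt{1-p}\,\ket{1}_{X_A}\ket{1}_{X_B}\bigr) \otimes \ket{\psi^1}^{T_A^1 T_B^1} \otimes \ket{\psi^2}^{T_A^2 T_B^2},
\]
where $\psi^1, \psi^2$ are the shared entanglements of $\Pi^1, \Pi^2$. Define the round-$i$ isometry $U_i$ of $\Pi$ to apply $U_i^{(1)}$ controlled on $\ket{0}_{X_A}$ (respectively $\ket{0}_{X_B}$ in Bob's rounds) and $U_i^{(2)}$ controlled on $\ket{1}$, leaving the other sub-protocol's registers untouched. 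The communication register at round $i$ can simply be $C_i = C_i^{(1)} \otimes C_i^{(2)}$. If $r_1 \neq r_2$, say $r_1 < r_2 = r$, pad $\Pi^1$ with $r - r_1$ trivial rounds whose communication register is one-dimensional and whose isometry is the identity on the remaining registers; these clearly do not affect outputs and contribute $0$ to the information cost.

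Next I would compute $QIC(\Pi,\rho)$ round by round. The key observation is that $X_A, X_B$ stay classical throughout: their reduced state is always $p\kb{00}{00}+(1-p)\kb{11}{11}$, because every $U_i$ commutes with the measurement of $X_A$ (or $X_B$) in the computational basis. In an odd round, Bob's register is $B_i^{\text{new}} = X_B B_i^{(1)} B_i^{(2)}$; applying the classical-system averaging property from the preliminaries,
\[
I(C_i ; R \mid B_i^{\text{new}}) = p\, I(C_i ; R \mid B_i^{(1)} B_i^{(2)}, X_B{=}0) + (1-p)\, I(C_i ; R \mid B_i^{(1)} B_i^{(2)}, X_B{=}1).
\]
Conditioned on $X_B = 0$ the global state factorizes as (the state produced by $\Pi^1$ acting on $\rho$) $\otimes\; \ket{\psi^2}\bra{\psi^2}^{T_A^2 T_B^2}$, so $C_i$ collapses (up to a trivial tensor factor) to $C_i^{(1)}$, the $\Pi^2$-registers hold a pure product state independent of $R$, and the conditioning on $B_i^{(2)}$ is vacuous; the term equals $I(C_i^{(1)}; R \mid B_i^{(1)})$, the $i$-th round contribution to $QIC(\Pi^1,\rho)$. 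The $X_B{=}1$ term similarly gives the $i$-th round contribution to $QIC(\Pi^2,\rho)$. The same decomposition works in even rounds with $X_A$ playing the role of $X_B$. Summing over rounds and multiplying by $\tfrac{1}{2}$ yields the stated identity.

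The arguments I expect to require the most care are (i) verifying that padded rounds contribute exactly zero, which follows because a one-dimensional $C_i$ forces $I(C_i;R\mid B_i)=0$, and (ii) formally justifying that the tensor factor belonging to the \emph{unused} sub-protocol (e.g., the $\Pi^2$-registers when $X_B{=}0$) can be dropped from the conditioning. The latter is a straightforward application of the fact that conditional mutual information is unchanged by appending an uncorrelated product system (a consequence of additivity and chain rule stated in the preliminaries). Neither step introduces any quantitative loss, so the equality rather than an inequality is obtained, as required.
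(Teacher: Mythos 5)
Your proof is correct and is essentially the standard construction from \cite{Tou14}, which the paper cites for this lemma without reproducing the proof: an entangled selector register $\sqrt{p}\ket{00}+\sqrt{1-p}\ket{11}$ controlling which sub-protocol runs, with the per-round CQMI split via the classical-conditioning (averaging) property after the other party's copy of the selector is traced out, and the unused sub-protocol's registers discarded as uncorrelated product systems. The only (immaterial) imprecision is the claim that the reduced state of $X_A X_B$ is always the diagonal mixture --- before the first controlled isometry it is the pure entangled state --- but what the argument actually needs, namely that $X_B$ (resp.\ $X_A$) is classical in the reduced state once the other party's registers are traced out, holds throughout.
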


\begin{corollary}
\label{cor:conv}
For any $p \in [0, 1], T$ and $\epsilon, \epsilon_1, \epsilon_2 \in [0, 1]$ satisfying $ \epsilon = p\epsilon_1 + (1-p) \epsilon_2$, for any bound  $r = \max (r_1, r_2 ), r_1, r_2 \in \mathbb{N}$ on the number of rounds and for any input distribution $\mu$ on $X \times Y$, the following holds:
\begin{align*}
 QIC(T, \mu, \epsilon) & \leq p QIC(T, \mu, \epsilon_1) + (1-p) QIC(T, \mu, \epsilon_2), \\
 QIC^r (T, \mu, \epsilon) & \leq p QIC^{r_1} (T, \mu, \epsilon_1) + (1-p) QIC^{r_2} (T, \mu, \epsilon_2).
\end{align*}
\end{corollary}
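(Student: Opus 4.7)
The plan is to deduce the corollary from Lemma~\ref{lem:conv} by a standard convex combination argument, with the only care being that the quantum information complexities are defined as infima (not minima), so we approach them by $\delta$-optimal protocols.

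First I would fix an arbitrary $\delta > 0$ and, by the definition of infimum in $QIC^{r_i}(T, \mu, \epsilon_i)$, choose protocols $\Pi^1 \in \T^{r_1}(T, \mu, \epsilon_1)$ and $\Pi^2 \in \T^{r_2}(T, \mu, \epsilon_2)$ satisfying
\begin{align*}
QIC(\Pi^i, \mu) \leq QIC^{r_i}(T, \mu, \epsilon_i) + \delta, \qquad i = 1, 2.
\end{align*}
Applying Lemma~\ref{lem:conv} to $\Pi^1, \Pi^2$ with weight $p$ produces a protocol $\Pi = p \Pi^1 + (1-p) \Pi^2$ with $r = \max(r_1, r_2)$ rounds (implemented, e.g., by using a shared classical random bit obtained from the shared entanglement to select which of $\Pi^1, \Pi^2$ to run, padding with identity channels if the round counts differ) such that $QIC(\Pi, \mu) = p \, QIC(\Pi^1, \mu) + (1-p) \, QIC(\Pi^2, \mu)$.

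Next I would verify that $\Pi \in \T^r(T, \mu, \epsilon)$. Since the output of $\Pi$ on any input is the convex combination $p \Pi^1(\cdot) + (1-p) \Pi^2(\cdot)$, the average error probability with respect to $\mu$ is linear in this mixture, so
\begin{align*}
P_e^\mu(\Pi) = p \, P_e^\mu(\Pi^1) + (1-p) \, P_e^\mu(\Pi^2) \leq p \epsilon_1 + (1-p) \epsilon_2 = \epsilon,
\end{align*}
confirming $\Pi$ is an $r$-round protocol for $T$ with average error at most $\epsilon$ on $\mu$. Combining with the equality from Lemma~\ref{lem:conv} yields
\begin{align*}
QIC^r(T, \mu, \epsilon) \leq QIC(\Pi, \mu) \leq p \, QIC^{r_1}(T, \mu, \epsilon_1) + (1-p) \, QIC^{r_2}(T, \mu, \epsilon_2) + \delta.
\end{align*}
Letting $\delta \to 0$ gives the second inequality of the corollary. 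The first (unbounded-round) inequality follows by the identical argument with Lemma~\ref{lem:conv} applied without the round bookkeeping, or simply by dropping the round constraint in the argument above and noting $QIC(T, \mu, \epsilon) \leq QIC^r(T, \mu, \epsilon)$ for any admissible $r$.

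I do not expect a genuine obstacle here: the only substantive work has already been done in Lemma~\ref{lem:conv}, which guarantees that convex combinations of protocols can be realized in a way that makes quantum information cost linear. The remaining steps are the $\delta$-to-infimum limiting argument and the trivial linearity of the error under mixture of output channels.
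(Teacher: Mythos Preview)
Your proposal is correct and is exactly the intended derivation; the paper states the corollary without proof, treating it as an immediate consequence of Lemma~\ref{lem:conv} via precisely the $\delta$-optimal protocol plus error-linearity argument you wrote out.
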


\begin{lemma}
\label{concavity} 
Let $\nu$ be a distribution over input states $\rho$ and denote $\overline{\rho} := \E_{\rho \sim \nu} \rho$. Then for any protocol $\pi$,
\begin{align*} 
\mathbb{E}_{\rho \sim \nu} [ QIC(\pi, \rho) ] \leq QIC(\pi, \overline{\rho})
\end{align*} 
\end{lemma}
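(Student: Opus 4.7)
The plan is to exhibit an explicit purification of $\overline{\rho}$ that carries a classical ``source'' register $W$ recording which $\rho\sim\nu$ was drawn, and then peel off $W$ via the chain rule inside each term of the QIC sum.

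Concretely, I would first write, for a discrete $\nu$ (the continuous case being analogous by approximation),
\begin{align*}
\ket{\overline{\rho}}^{A_{in} B_{in} W R'} \;=\; \sum_{\rho} \sqrt{\nu(\rho)}\,\ket{\rho}^{A_{in} B_{in} R'_\rho}\,\ket{\rho}^W,
\end{align*}
where $R'$ is chosen large enough to contain each $R'_\rho$, and $W$ is a classical label for the input. This is a valid purification of $\overline{\rho}$ on the purification register $R := W R'$. Because the protocol $\pi$ acts only on $A_{in}B_{in}$ plus its private workspace and entanglement, the register $R = WR'$ is untouched throughout, so after round $i$ the register $W$ remains classical and correlated with $R'$ exactly as at the start. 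Moreover, conditioned on $W=\rho$, the joint state on the protocol registers together with $R'_\rho$ is exactly what one would obtain by running $\pi$ on the purified input $\rho$.

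Next I would expand each term of $QIC(\pi,\overline{\rho})$ using the chain rule on the purification register $R = WR'$:
\begin{align*}
I(C_i; R \mid B_i) \;=\; I(C_i; W R' \mid B_i) \;=\; I(C_i; W \mid B_i) + I(C_i; R' \mid W B_i) \;\geq\; I(C_i; R' \mid W B_i),
\end{align*}
using non-negativity (strong subadditivity) for the last step; the same manipulation works for odd-round terms with $A_i$ replacing $B_i$. Since $W$ is classical throughout, conditioning on $W$ is an average: $I(C_i; R' \mid W B_i) = \mathbb{E}_{\rho \sim \nu}\, I(C_i; R'_\rho \mid B_i)_{\pi(\rho)}$, and each term inside the expectation is exactly the corresponding round's contribution to $QIC(\pi,\rho)$. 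Summing over $i$ (odd and even) and swapping expectation with sum gives $QIC(\pi,\overline{\rho}) \geq \mathbb{E}_{\rho\sim\nu}\,QIC(\pi,\rho)$.

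I expect the main subtlety, rather than the main obstacle, to be bookkeeping of the purification: verifying that the particular choice above is indeed a purification of $\overline{\rho}$, and that after the protocol acts the conditional state on $(C_i, B_i, R'_\rho)$ given $W=\rho$ matches the state used to define $QIC(\pi,\rho)$, so that the purification-invariance of QIC (which follows from the fact that different purifications differ by isometries on $R$, under which mutual information is invariant) lets us identify the terms. Once these identifications are in place, the chain rule plus classical-conditioning identity are immediate, and no hard estimate is needed.
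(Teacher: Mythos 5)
Your overall strategy --- purify $\overline{\rho}$ with a label register, peel it off with the chain rule, and turn the conditioning into an average --- is exactly the route the paper takes (Lemma~\ref{concavity} is imported from \cite{Tou14}, but its proof is reproduced inside the quasi-convexity lemma, whose second inequality is obtained by ``keeping track of the remainder terms''). There is, however, one genuine gap: the claim that ``$W$ is classical throughout,'' on which you base the identity $I(C_i;R'\mid W B_i)=\mathbb{E}_{\rho\sim\nu}\, I(C_i;R'_\rho\mid B_i)_{\pi(\rho)}$, is false for the purification you chose. Writing the global state after round $i$ as $\sum_\rho\sqrt{\nu(\rho)}\,\ket{\rho}^W\ket{\phi_\rho}^{A_iB_iC_iR'}$, the state that actually appears in $I(C_i;R'\mid WB_i)$ is the reduction
\begin{align*}
\sum_{\rho,\rho'}\sqrt{\nu(\rho)\nu(\rho')}\;\kb{\rho}{\rho'}^W\otimes\Tr{A_i}{\kb{\phi_\rho}{\phi_{\rho'}}},
\end{align*}
and the off-diagonal blocks $\Tr{A_i}{\kb{\phi_\rho}{\phi_{\rho'}}}$ do not vanish in general (for instance when the mixed inputs differ only on Bob's side, or when two classical input distributions have overlapping support). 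So $W$ retains coherence across branches, and ``conditioning on a classical system is an average'' is not available. Nor can you rescue the step by dephasing $W$ first: dephasing acts on the \emph{conditioning} register of the CMI, for which there is no data-processing inequality, so the dephased quantity need not be smaller.

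The repair is the paper's two-copy trick (the registers $S_1,S_2$ in the quasi-convexity proof, and the $B_1,B_2$ / $R_1,R_2$ convention in the preliminaries): purify with \emph{two} copies of the label, $\ket{\overline{\rho}}=\sum_\rho\sqrt{\nu(\rho)}\,\ket{\rho}^{A_{in}B_{in}R'}\ket{\rho}^{W}\ket{\rho}^{W'}$, and expand $I(C_i;WW'R'\mid B_i)=I(C_i;W\mid B_i)+I(C_i;R'\mid WB_i)+I(C_i;W'\mid R'WB_i)\geq I(C_i;R'\mid WB_i)$. In the middle term $W'$ is traced out, which kills the off-diagonal blocks in $W$; the relevant state becomes $\sum_\rho\nu(\rho)\,\kb{\rho}{\rho}^W\otimes\Tr{A_i}{\kb{\phi_\rho}{\phi_{\rho}}}$, the register $W$ is genuinely classical, and the averaging identity together with purification-invariance (Eq.~(\ref{eq:pure})) identifies each summand with the corresponding term of $QIC(\pi,\rho)$. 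With that single modification, the rest of your argument (summing over rounds and exchanging sum with expectation) is exactly the paper's proof.
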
 

\begin{lemma}
\label{lem:coding}
For any $r$-round protocol $\Pi$, any input distribution $\mu$ with copies of $x, y$ in $R_1$, and any $\epsilon \in (0, 2], \delta > 0$,
there exists a large enough $n_0(\Pi, \rho, \epsilon, \delta)$ such that for any $n \geq n_0$, there exists
a $r$-round protocol $\Pi_n$ satisfying
\begin{align*}
\| \Pi_n ((\rho^{A_{in} B_{in} R_1})^{\otimes n}) - \Pi^{\otimes n} ((\rho^{A_{in} B_{in} R_1})^{\otimes n}) \|_{(A_{out} B_{out} R_1)^{\otimes n}} \leq \epsilon, \\
\frac{1}{n} QCC (\Pi_n) \leq QIC(\Pi, \rho) + \delta.
\end{align*}
\end{lemma}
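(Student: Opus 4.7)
The plan is to compress each of the $r$ rounds of $\Pi$ individually using quantum state redistribution (QSR), applied to $n$ parallel copies, and to splice the compressed rounds back together into a single $r$-round protocol $\Pi_n$. The key tool I would invoke is the Devetak--Yard quantum state redistribution theorem: given $n$ copies of a pure state $\ket{\psi}^{ACBR}$ with the sender holding $AC$, the receiver holding $B$, and $R$ serving as a reference, the sender can transmit a register playing the role of $C^{\otimes n}$ at the receiver using approximately $n \cdot \frac{1}{2} I(C; R\,|\,B)_\psi$ qubits of one-way quantum communication, so that the resulting global state on $(ACBR)^{\otimes n}$ is $\epsilon^\prime$-close in trace distance to $\ket{\psi}^{\otimes n}$, for any $\epsilon^\prime > 0$ provided $n$ is large enough.

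The reason QSR is the right tool is that, for each round $i$ of $\Pi$, after the speaker's isometry $U_i$ is applied to the honest state, the joint state across sender / receiver / reference / shared entanglement is pure, and the contribution to $QIC(\Pi, \rho)$ from round $i$ is exactly $\frac{1}{2} I(C_i; R \,|\, B_i)$ for odd $i$ and $\frac{1}{2} I(C_i; R \,|\, A_i)$ for even $i$. These are precisely the rates guaranteed by QSR when applied with sender/receiver/reference matching the round's parties. So, setting $\epsilon^\prime = \epsilon/(2r)$ and $\delta^\prime = \delta/r$, I would invoke QSR at each of the $r$ rounds with these parameters, and choose $n_0$ to be the maximum of the $r$ per-round thresholds from the QSR theorem. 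The compressed protocol $\Pi_n$ then applies $U_i^{\otimes n}$ and, in place of literally transmitting $C_i^{\otimes n}$, runs the QSR subprotocol for round $i$. Since QSR is a one-way subroutine from sender to receiver, it slots into the existing round structure without increasing the round count.

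Summing the per-round costs yields
\begin{align*}
\frac{1}{n} QCC(\Pi_n) \;\leq\; \sum_{i=1}^{r} \left( \tfrac{1}{2} I(C_i; R \,|\, \cdot) + \delta^\prime \right) \;=\; QIC(\Pi, \rho) + \delta,
\end{align*}
and contractivity of the trace distance under the subsequent isometries $U_{i+1}, \ldots, U_{r+1}$ and under the final partial trace implies, by the triangle inequality applied $r$ times, that the final output is within $r \epsilon^\prime = \epsilon/2 \leq \epsilon$ of $\Pi^{\otimes n}((\rho^{A_{in} B_{in} R_1})^{\otimes n})$ in trace distance.

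The main obstacle is controlling the accumulation of errors across rounds: QSR guarantees closeness only for the state at the end of the round in which it is invoked (over sender, receiver, reference, and any entanglement registers it touches), yet subsequent rounds apply further isometries and move compressed registers back and forth. The resolution is that monotonicity of trace distance under channels means a round-$i$ error of $\epsilon^\prime$ is preserved (not amplified) by later operations, so triangle inequality suffices. A secondary subtlety is that the QSR state-redistribution primitive is typically stated for a fixed pure input state, whereas here the pure state entering round $i$ in $\Pi_n$ is itself only close to the honest round-$i$ state; a standard continuity/hybrid argument (replace one round at a time, bound the difference in output via trace distance monotonicity) resolves this, which is exactly what the triangle inequality above encodes.
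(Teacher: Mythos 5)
Your proposal is correct and is essentially the proof given in the cited reference \cite{Tou14}, where this lemma is imported from (the paper itself states it without proof): the definition of $QIC$ is engineered so that each round's term $\tfrac{1}{2} I(C_i; R \mid B_i)$ (resp.\ $\tfrac{1}{2} I(C_i; R \mid A_i)$) is exactly the optimal asymptotic rate for quantum state redistribution of $C_i$ in that round, and the compression is obtained by running QSR round by round on $n$ parallel copies, with the error accumulation controlled by monotonicity of trace distance and a hybrid/triangle-inequality argument, just as you describe. No gaps.
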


\subsection{Generalized Discrepancy Method}

Generalized discrepancy method, also known as smooth discrepancy method, is one of the strongest methods for proving lower bounds for quantum communication. 

\begin{definition} Let $f : \mathcal{X} \times \mathcal{Y} \rightarrow \{0,1\}$ be a boolean function. The $\delta$-generalized discrepancy bound of $f$, denoted by $GDM_{\delta}(f)$, is defined as: 
\begin{align*}
&GDM_{\delta}(f) = \max\{\text{$GDM_{\delta}^{\mu}(f)$: $\mu$ a distribution over $\mathcal{X} \times \mathcal{Y}$}\} \\
&GDM_{\delta}^{\mu}(f) = \max\{\text{$\log \left( \frac{1}{\text{disc}^{\mu}(g)} \right)$, $g : \mathcal{X} \times \mathcal{Y} \rightarrow \{0,1\}$, $\Pr_{(x,y) \sim \mu}[f(x,y) \neq g(x,y)] \le \delta$}\} \\
&\text{disc}^{\mu}(g) = \max \left\{ \text{$|\sum_{(x,y) \in R} (-1)^{g(x,y)} \cdot \mu(x,y)|$ : $R \in \mathcal{R}$} \right\}
\end{align*}
\end{definition}

\noindent Here $\mathcal{R}$ is the set of combinatorial rectangles $\mathcal{A} \times \mathcal{B}$, $\mathcal{A} \subseteq \mathcal{X}, \mathcal{B} \subseteq \mathcal{Y}$. We state two results on the generalized discrepancy method, both due to Sherstov~\cite{patternmatrix, sdp_sherstov}, which we will use to lower bound the quantum information complexity of disjointness. The first is a threshold direct product result that will be useful to prove that the generalized discrepancy method is a lower bound on the quantum information complexity of boolean functions, and the second is a lower bound on the generalized discrepancy for the disjointness function.

\begin{theorem}[\cite{sdp_sherstov}] {\label{thm:sdp_sherstov}} Let $\epsilon_{\text{sh}} > 0$ be a small enough absolute constant. Then for any boolean function $f$ , the following communication problem requires $\Omega(n GDM_{1/5}(f))$ qubits of communication $($with arbitrary entanglement$)$: Solving with probability $2^{- \epsilon_{\text{sh}} n}$, at least $(1-\epsilon_{\text{sh}}) n$ among $n$ instances of $f$. 
\end{theorem}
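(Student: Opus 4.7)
The plan is to prove the threshold direct product bound via dual-witness amplification combined with a reduction from quantum protocols to rectangular forms. I would proceed in three stages.

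First, I would invoke the LP-duality characterization of generalized discrepancy. The quantity $GDM_{1/5}(f)$ is witnessed by a distribution $\mu$ on $\X \times \Y$ and a sign function $\psi : \X \times \Y \to \{\pm 1\}$ satisfying $|\mathbb{E}_\mu[\psi \cdot \mathbf{1}_R]| \leq 2^{-GDM_{1/5}(f)}$ for every combinatorial rectangle $R$, while $\mathbb{E}_\mu[\psi \cdot (-1)^f]$ remains bounded away from zero thanks to the $1/5$ approximation slack. Passing to the product $(\mu^{\otimes n}, \psi^{\otimes n})$, the rectangle correlation in the product space drops multiplicatively to $2^{-n \cdot GDM_{1/5}(f)}$; this tensorization is a classical fact about product dual witnesses (even though raw discrepancy itself does not tensorize).

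Second, I would bridge from product rectangles to quantum protocols via a Kremer--Yao--type decomposition: a $q$-qubit two-party quantum protocol expresses its acceptance probability as a sum of at most $2^{O(q)}$ rank-one bilinear terms $\alpha_i(x) \beta_i(y)$. Hence the protocol's bias toward $\psi^{\otimes n}$ under $\mu^{\otimes n}$ is bounded by $2^{O(q)} \cdot 2^{-n \cdot GDM_{1/5}(f)}$. If the protocol solved all $n$ coordinates exactly with constant probability, this alone would force $q = \Omega(n \cdot GDM_{1/5}(f))$.

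Third, to handle the threshold setting --- correctness on at least $(1 - \epsilon_{\text{sh}}) n$ coordinates with success probability as small as $2^{-\epsilon_{\text{sh}} n}$ --- I would work with a modified dual witness $\Psi_n$ built by averaging $\psi^{\otimes S} \otimes \mathbf{1}^{\otimes \bar S}$ over subsets $S \subseteq [n]$ of size at least $(1 - \epsilon_{\text{sh}}) n$, so that $\Psi_n$ correlates with the \emph{threshold} acceptance predicate rather than the all-correct predicate. Its rectangle correlation exceeds that of $\psi^{\otimes n}$ by at most a factor of $\binom{n}{\epsilon_{\text{sh}} n} \leq 2^{H(\epsilon_{\text{sh}}) n}$. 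Choosing $\epsilon_{\text{sh}}$ so that $H(\epsilon_{\text{sh}})$ is negligible compared to $GDM_{1/5}(f)$, and noting that even the tiny success probability $2^{-\epsilon_{\text{sh}} n}$ is much larger than $2^{-n \cdot GDM_{1/5}(f)}$, one concludes $q = \Omega(n \cdot GDM_{1/5}(f))$.

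The main obstacle is this third step: constructing a dual witness for the threshold predicate whose discrepancy blows up by at most $2^{O(\epsilon_{\text{sh}} n)}$, so that it remains compatible both with the exponentially small success probability and with the linear communication target. Classically one can sidestep this via random restrictions and post-selection on ``bad'' coordinates, but in the quantum setting the disturbance caused by intermediate measurements precludes free conditioning, so the threshold amplification must be executed purely in the dual before meeting the protocol. The tight trade-off between the approximation slack $\delta = 1/5$ in $GDM_\delta(f)$ and the threshold constants $\epsilon_{\text{sh}}$ is the heart of Sherstov's argument, and it is where the choice of ``small enough'' $\epsilon_{\text{sh}}$ really gets used.
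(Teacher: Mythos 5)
The paper does not prove this statement at all: Theorem~\ref{thm:sdp_sherstov} is imported verbatim from \cite{sdp_sherstov} and stated in the preliminaries as a black box, so there is no internal proof to compare yours against. Your proposal is therefore an attempt to reprove Sherstov's threshold direct product theorem from scratch, and while it reproduces the high-level architecture of that proof (a dual witness for $GDM_{1/5}(f)$, tensorization, a Kremer--Yao-type decomposition of quantum protocols, and a threshold-adapted dual object), it leaves the genuinely hard step unproved.

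Concretely, there are three gaps. First, the discrepancy of a tensor product is not bounded by the product of the discrepancies in the naive sense, because a rectangle in the product input space need not be a product of rectangles; the standard repair is to pass to the $\gamma_2^*$ norm, which is multiplicative under tensoring and equivalent to discrepancy up to a constant factor per copy, and your phrase ``classical fact about product dual witnesses'' is silently doing all of that work. Second, the Kremer--Yao decomposition into $2^{O(q)}$ bounded rank-one terms holds for protocols \emph{without} prior entanglement; the theorem explicitly allows arbitrary entanglement, for which the number of terms is unbounded and one must instead use the $\gamma_2$-norm bound on the acceptance-probability matrix of entanglement-assisted protocols. Third, and most seriously, your step three asserts rather than proves that the averaged object $\Psi_n$ ``correlates with the threshold acceptance predicate.'' The protocol outputs an $n$-bit vector and may choose adaptively, as a function of the input, which $\epsilon_{\text{sh}} n$ coordinates to answer incorrectly; showing that a fixed linear combination of partial tensor powers retains correlation comparable to the allowed success probability $2^{-\epsilon_{\text{sh}} n}$ against \emph{every} such strategy, while its rectangle correlation grows by at most $2^{O(\epsilon_{\text{sh}}) n}$, is precisely Sherstov's main technical lemma, and you explicitly defer it. As written, the proposal is a plausible roadmap with the central lemma missing, not a proof.
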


The disjointness function is defined as follows: for $x,y \in \{0,1\}^n \times \{0,1\}^n$, $DISJ_n(x,y) = 1$ if for all $i \in [n]$, $x_i \wedge y_i = 0$, and $0$ otherwise. We will need the following theorem. 

\begin{theorem}[\cite{patternmatrix}] {\label{thm:disj_sherstov}} $GDM_{1/5}(DISJ_n) \ge \Omega (\sqrt{n})$
\end{theorem}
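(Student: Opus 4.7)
The plan is to prove this via Sherstov's pattern matrix method, reducing the discrepancy lower bound for $DISJ_n$ to an approximate degree lower bound for the $OR$ function. First I would invoke the classical approximate degree bound $\widetilde{\deg}_{1/5}(OR_n) = \Omega(\sqrt{n})$ (Paturi, following Nisan-Szegedy). By LP duality this produces a dual witness: a signed measure $\psi : \{0,1\}^n \to \mathbb{R}$ with $\|\psi\|_1 = 1$, correlation $\langle \psi, OR_n\rangle \geq 3/5$, and \emph{pure high degree} $\Omega(\sqrt{n})$, meaning $\psi$ is orthogonal to every monomial of degree below this threshold.

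Given $\psi$, I would apply the pattern matrix construction. Fix a small constant $t$ (say $t = 4$) and consider the $(n, t, OR_n)$-pattern matrix $F(x, (V, w)) = OR_n(x|_V \oplus w)$, where Alice's input is $x \in \{0,1\}^{tn}$ and Bob's input $(V, w)$ selects one position per block of size $t$ together with a bit-mask $w \in \{0,1\}^n$. Sherstov's key lemma uses $\psi$ to construct an explicit sign-matrix $\Psi$ on these index sets and a distribution $\mu$ such that $\Psi$ agrees with $(-1)^F$ except on a $1/5$-fraction of mass under $\mu$, and such that the spectral norm of $\Psi$ is at most $2^{-\Omega(\sqrt{n})}$ times its trace norm. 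Translating spectral norm into discrepancy via the standard inequality $\text{disc}^\mu(g) \leq \|M_g\|\cdot\sqrt{|\X||\Y|}/\|M_g\|_*$ yields a corrupting function $g$ with $\text{disc}^\mu(g) = 2^{-\Omega(\sqrt{n})}$ and $\Pr_\mu[g \neq F] \leq 1/5$, giving $GDM_{1/5}(F) = \Omega(\sqrt{n})$.

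The last step is a syntactic embedding of $F$ into $DISJ_{tn}$. For each block, Bob encodes $(V,w)$ by setting, within the block, a single coordinate $y_j = 1$ in the position picked by $V$, with the choice of sub-position dictated by $w$; equivalently, for a suitable variant with $t = 4$, Bob places two $1$'s per block to realize each possible $(\text{select},\text{xor-bit})$ pair. The resulting $(x_j \wedge y_j)$ pattern across the blocks encodes $x|_V \oplus w$ bit-for-bit, so $\neg DISJ_{tn}(x, y) = OR_n(x|_V \oplus w) = F(x,(V,w))$. Since $F$ is a sub-function of $\neg DISJ_{tn}$ (a restriction of Bob's inputs), and generalized discrepancy can only grow under restriction, this transfers the $\Omega(\sqrt{n})$ bound to $GDM_{1/5}(DISJ_{tn}) = \Omega(\sqrt{tn})$.

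The main obstacle is the pattern matrix lemma itself: showing that $\Psi$ has small spectral-to-trace-norm ratio. The heart of this is a Fourier-analytic decomposition of $\Psi$ into a tensor product induced by the block structure, whose singular values are indexed by Fourier characters on $\{0,1\}^n$. Because $\psi$ has pure high degree, only characters of weight $\Omega(\sqrt{n})$ contribute, and each such character carries a suppression factor $t^{-|S|/2}$ from the uniform selection of $V$; summing geometrically gives the $2^{-\Omega(\sqrt{n})}$ spectral bound. Getting the constants right so that the corrupting mass is below $1/5$ (rather than, say, $1/3$) is the one place where care is required; everything else is LP duality plus elementary combinatorics.
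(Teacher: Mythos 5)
The paper gives no proof of this theorem; it is imported verbatim from Sherstov's pattern-matrix paper, and your outline is a reconstruction of exactly that cited argument (dual witness for the approximate degree of $OR_n$, the spectral-norm bound on the pattern matrix via the $t^{-|S|/2}$ suppression of characters of weight below the pure high degree, and the passage from spectral norm to discrepancy). Two steps as written need repair. First, the inequality $\text{disc}^\mu(g) \le \|M_g\|\sqrt{|\mathcal{X}||\mathcal{Y}|}/\|M_g\|_*$ is not the standard rectangle bound: one has $\text{disc}^\mu(g)\le \|K\|\cdot\sqrt{|\mathcal{X}||\mathcal{Y}|}$ for the entrywise product $K_{xy}=\mu(x,y)(-1)^{g(x,y)}$, so the correct normalization of $\Psi$ is by its entrywise $\ell_1$-mass (which is $1$ after scaling), not by its trace norm; the trace norm enters only in the alternative $\gamma_2^{\alpha}$ phrasing of the generalized discrepancy lower bound.

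Second, and more substantively, the embedding into disjointness fails as stated. Since $x_j\wedge y_j$ is monotone in $x_j$, no placement of $1$'s by Bob inside a block --- one or two --- can make $\bigvee_{j\in\text{block}}(x_j\wedge y_j)$ equal to $x_{V(i)}\oplus w_i$ when $w_i=1$; placing two $1$'s merely produces an OR of two of Alice's bits. The standard fix doubles Alice's coordinates: work in $DISJ_{2tn}$, restrict Alice's rows to inputs of the paired form $(x_j,\overline{x_j})_j$, and have Bob put his single $1$ in block $i$ on the coordinate carrying $x_{V(i)}$ if $w_i=0$ and on the coordinate carrying $\overline{x_{V(i)}}$ if $w_i=1$, so that the block contributes exactly $x_{V(i)}\oplus w_i$ and $\neg DISJ_{2tn}$ restricted to this submatrix is the pattern matrix. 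Note that this restricts \emph{both} players' inputs, not only Bob's; that is harmless because a distribution supported on a submatrix is a distribution on the full domain and rectangles restrict to rectangles, so $GDM_{1/5}(DISJ_{2tn})$ is at least the generalized discrepancy of the subfunction, and rescaling $n$ gives the claim. With these two corrections your sketch matches the cited proof.
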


\section{Properties of Quantum Information Complexity}

In this section, we prove general results about quantum information complexity that we use to obtain the main results.
These may be of independent interest.

\subsection{Prior-free Quantum Information Complexity}




We want to define a sensible notion of quantum information complexity for classical tasks.
Like in the classical setting~\cite{B12}, there are two sensible orderings for the optimization over inputs and protocols.
We provide the two corresponding definitions and then investigate the link between them.
We denote by $\D_{XY}$ the set of all distributions $\mu$ on input space $X \times Y$.

\begin{definition}

The \emph{max-distributional quantum information complexity} of a relation $T$ with error $\epsilon \in [0, 1]$ is
\begin{align*}
	QIC_D (T, \epsilon) = \max_{\mu \in \D_{XY} } QIC (T, \mu, \epsilon).
\end{align*}
When restricting to $r$-round protocols, it is 
\begin{align*}
	QIC_D^r (T, \epsilon) = \max_{\mu \in \D_{XY} } QIC^r (T, \mu, \epsilon).
\end{align*}

\end{definition}

\begin{definition}

The \emph{quantum information complexity} of a relation $T$ with error $\epsilon \in [0, 1]$ is
\begin{align*}
	QIC (T, \epsilon) = \inf_{\Pi \in \T (T, \epsilon)} \max_{\mu \in \D_{XY} } QIC (\Pi, \mu).
\end{align*}
When restricting to $r$-round protocols, it is 
\begin{align*}
	QIC^r (T, \epsilon) = \inf_{\Pi \in \T^r (T, \epsilon)} \max_{\mu \in \D_{XY} } QIC (\Pi, \mu).
\end{align*}

\end{definition}

\begin{lemma}[Information lower bounds communication]
\label{lem:qicvsqccpf}

For any relation $T$, error parameter $\epsilon \in [0, 1]$, and number of rounds $r \in \mathbb{N}$, the following holds:
\begin{align*}
	QIC^r (T, \epsilon) & \leq QCC^r (T, \epsilon), \\
	QIC (T, \epsilon) & \leq QCC (T, \epsilon).
\end{align*}

\end{lemma}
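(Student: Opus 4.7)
The plan is to derive this directly from Lemma~\ref{lem:qicvsqcc}, which gives the corresponding inequality for a fixed protocol on a fixed prior, by manipulating the order of the $\inf$ and $\max$ quantifiers.

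First, fix an error $\epsilon \in [0,1]$, a round bound $r$, and consider any protocol $\Pi \in \T^r(T,\epsilon)$ (i.e., a protocol with worst-case error at most $\epsilon$ and at most $r$ rounds). By Lemma~\ref{lem:qicvsqcc}, for every input distribution $\mu \in \D_{XY}$ we have $QIC(\Pi,\mu) \leq QCC(\Pi)$. Crucially, the right-hand side does not depend on $\mu$, so I can take the max over $\mu \in \D_{XY}$ on the left and obtain $\max_{\mu} QIC(\Pi,\mu) \leq QCC(\Pi)$.

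Now I would take the infimum over $\Pi \in \T^r(T,\epsilon)$ on both sides. On the right this yields $QCC^r(T,\epsilon)$ (using that the minimum is attained by some protocol in the finite-dimensional setting as formalized in the excerpt, or else one more level of infimum that still gives the inequality), and on the left it yields exactly $QIC^r(T,\epsilon) = \inf_{\Pi \in \T^r(T,\epsilon)} \max_\mu QIC(\Pi,\mu)$. So $QIC^r(T,\epsilon) \leq QCC^r(T,\epsilon)$. The unrestricted version follows by exactly the same argument, replacing $\T^r(T,\epsilon)$ with $\T(T,\epsilon)$ throughout.

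There is essentially no obstacle here: the only subtlety to flag is that the bound $QIC(\Pi,\mu) \leq QCC(\Pi)$ is uniform in $\mu$, which is what lets one pass the max inside without losing anything, and that the protocols in $\T(T,\epsilon)$ and $\T^r(T,\epsilon)$ have worst-case error at most $\epsilon$, hence also distributional error at most $\epsilon$ under every $\mu$, so they are admissible in the distributional definitions $QIC(T,\mu,\epsilon)$ implicit in taking $QIC(\Pi,\mu)$. This is a short ``definition-chasing'' argument rather than a technically hard one, and should occupy only a few lines.
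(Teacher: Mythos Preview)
Your proposal is correct and essentially identical to the paper's own proof: the paper also picks a protocol $\Pi$ achieving $QCC(T,\epsilon)$ and then chains $QIC(T,\epsilon) \leq \max_\mu QIC(\Pi,\mu) \leq QCC(\Pi)$ via Lemma~\ref{lem:qicvsqcc}. The only cosmetic difference is that the paper selects the optimal communication protocol up front rather than taking an infimum over all $\Pi$, but the logic is the same.
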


\begin{proof}
Let $\Pi$ be a protocol computing $T$ correctly except with probability $\epsilon$ on 
all input  and satisfying $QCC (\Pi) = QCC (T, \epsilon)$. We get the result by 
noting that $QIC (T, \epsilon) \leq \max_\mu QIC (\Pi, \mu) \leq QCC (\Pi)$.
\end{proof}

Clearly, $QIC_D (T, \epsilon) \leq QIC (T, \epsilon)$, and $QIC_D^r (T, \epsilon) \leq QIC^r (T, \epsilon)$. 
We prove that we can almost reverse the quantifiers.
The proof idea follows the lines of the proof of Theorem 3.5 in Ref.~\cite{B12}, but special care
must be taken for quantum protocols.
The idea we use is to take an $\epsilon$-net over $\D_{XY}$, and then take a $\delta$-optimal protocol for each distribution in the net.
To extend this result to the unbounded round quantum setting, we adapt a compactness argument from Ref.~\cite{BGPW12}, itself adapted from Ref.~\cite{Ter72}.
The following results will be used.

\begin{lemma}[Continuity in average error]
\label{cor:cty_comp_input}
Quantum information complexity is continuous in the error. This holds uniformly in the input. That is, 
for all $T, r$ and $ \epsilon, \delta > 0$, there exists $\epsilon^\prime \in (0,  \epsilon)$ such 
that for all $\epsilon^{\prime \prime} \in (\epsilon^\prime,  \epsilon)$ and for all $\mu$,
\begin{align*}
	|QIC (T, \mu, \epsilon - \epsilon^{\prime \prime}) - QIC (T, \mu, \epsilon)| & \leq \delta, \\
	|QIC^r (T, \mu, \epsilon - \epsilon^{\prime \prime}) - QIC^r (T, \mu, \epsilon)| & \leq \delta. \\
\end{align*}

\end{lemma}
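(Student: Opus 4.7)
The plan is to combine monotonicity of $QIC$ in the error with the convex-combination bound of Corollary \ref{cor:conv}, viewing the statement as the natural continuity of $QIC^r(T, \mu, \cdot)$ at $\epsilon$ under a small perturbation $\epsilon''$. One direction is immediate from monotonicity: every protocol with average error at most $\epsilon - \epsilon''$ also has average error at most $\epsilon$, so $\T^r(T, \mu, \epsilon - \epsilon'') \subseteq \T^r(T, \mu, \epsilon)$ and therefore $QIC^r(T, \mu, \epsilon) \leq QIC^r(T, \mu, \epsilon - \epsilon'')$. The content is the matching upper bound $QIC^r(T, \mu, \epsilon - \epsilon'') \leq QIC^r(T, \mu, \epsilon) + \delta$, uniformly in $\mu$.

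For this, I would invoke Corollary \ref{cor:conv} with $\epsilon_1 = \epsilon$, $\epsilon_2 = 0$, $r_1 = r_2 = r$, and $p = 1 - \epsilon''/\epsilon$, so that $p \epsilon_1 + (1-p) \epsilon_2 = \epsilon - \epsilon''$. The corollary then gives
\[
QIC^r(T, \mu, \epsilon - \epsilon'') \leq p \cdot QIC^r(T, \mu, \epsilon) + (1-p) \cdot QIC^r(T, \mu, 0) \leq QIC^r(T, \mu, \epsilon) + \frac{\epsilon''}{\epsilon} \cdot QIC^r(T, \mu, 0).
\]
By Lemma \ref{lem:qicvsqcc}, $QIC^r(T, \mu, 0) \leq QCC^r(T, \mu, 0) \leq QCC^r(T, 0) =: M(T, r)$. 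The crucial point is that $M(T, r)$ depends only on $T$ and $r$ and not on $\mu$: for total functions and standard relations this is finite, and for $r \geq 2$ the trivial two-round protocol in which Alice and Bob exchange their classical inputs gives $M(T, r) \leq \log|X| + \log|Y| + \log|Z_A Z_B|$, padded to $r$ rounds if needed.

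Choosing $\epsilon' \in (0, \epsilon)$ small enough that $(\epsilon'/\epsilon) \cdot M(T, r) \leq \delta$ then forces the perturbation term at most $\delta$ for every admissible $\epsilon''$, and the claim follows. The unbounded-round variant is identical, with the reference protocol allowed any finite round count. The main (minor) obstacle is producing the uniform communication bound $M(T, r)$ on a zero-error $r$-round reference protocol; this requires only the mild regularity condition $\T^r(T, 0) \neq \emptyset$, without which $QIC^r(T, \mu, 0)$ is not even defined. No genuinely quantum difficulty arises, since the round-count and information-cost bookkeeping under convex combinations is already packaged in the cited corollary.
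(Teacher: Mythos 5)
Your proof is correct and follows essentially the same route as the paper's: apply Corollary~\ref{cor:conv} to mix a zero-error protocol with an $\epsilon$-error one, bound the zero-error term by $QCC(T,0)$ (resp.\ $QCC^r(T,0)$), which is independent of $\mu$, and shrink the perturbation so that $\frac{\epsilon''}{\epsilon}\,QCC(T,0)\le\delta$. Note only that you (like the paper's own proof) implicitly read the quantifier as $\epsilon''\in(0,\epsilon')$ rather than the $(\epsilon',\epsilon)$ written in the statement, which appears to be a typo, and that your explicit remark about needing $\T^r(T,0)\neq\emptyset$ is a reasonable addition the paper leaves implicit.
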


\begin{proof}
	Note that we can drop the absolute values and also work at $\epsilon^{\prime}$ since quantum information complexity is non-increasing in the error,
i.e.~$QIC (T, \mu, \epsilon) \leq QIC (T, \mu, \epsilon - \epsilon^{\prime \prime}) \leq QIC (T, \mu, \epsilon - \epsilon^{\prime})$.
Let $0 < p < \frac{1}{2}$ and use Corollary~\ref{cor:conv} with $\epsilon_1 = 0, \epsilon_2 = \epsilon, \epsilon^\prime = p \epsilon$ for the current $\epsilon$.
We get
\begin{align*}
	QIC (T, \mu, \epsilon - \epsilon^\prime) & \leq p QIC (T, \mu, 0)  + (1 - p) QIC (T, \mu, \epsilon )  \\
		& \leq p QCC (T, 0) + QIC (T, \mu, \epsilon).
\end{align*}
Rearranging terms, we get
\begin{align*}
	|QIC (T, \mu, \epsilon - \epsilon^\prime) - QIC (T, \mu, \epsilon)| & \leq \frac{\epsilon^\prime}{\epsilon} QCC (T, 0).
\end{align*}
This bound is independent of $\mu$, and goes to zero as $p$ and $\epsilon^\prime$ do, so the result follows.
The bounded round result is proved in the same way, obtaining $QCC^r (T, 0)$ in the final bound instead.
\end{proof}

\begin{lemma}[Convexity in error]
For any $p \in [0, 1], T$ and $\epsilon, \epsilon_1, \epsilon_2 \in [0, 1]$ satisfying $ \epsilon = p\epsilon_1 + (1-p) \epsilon_2$ and for any bound $r = \max (r_1, r_2 ), r_1, r_2 \in \mathbb{N}$ on the number of rounds, the following holds:
\begin{align*}
 QIC(T, \epsilon) & \leq p QIC(T, \epsilon_1) + (1-p) QIC(T, \epsilon_2), \\
 QIC^r (T, \epsilon) & \leq p QIC^{r_1} (T, \epsilon_1) + (1-p) QIC^{r_2} (T, \epsilon_2).
\end{align*}
\end{lemma}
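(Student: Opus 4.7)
The plan is to leverage Lemma~\ref{lem:conv} (pointwise convexity of quantum information cost under convex combinations of protocols), convert worst-case-error statements, and then interchange a maximum with the convex sum in a one-sided way. The result is in fact morally the prior-free analogue of Corollary~\ref{cor:conv}, with the only subtlety being that the max over distributions $\mu$ sits \emph{outside} the infimum over protocols.

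First, I would fix an arbitrary $\delta > 0$ and pick near-optimal protocols for the two error regimes: a protocol $\Pi^1 \in \T^{r_1}(T, \epsilon_1)$ with $\max_\mu QIC(\Pi^1, \mu) \leq QIC^{r_1}(T, \epsilon_1) + \delta$, and similarly $\Pi^2 \in \T^{r_2}(T, \epsilon_2)$ with $\max_\mu QIC(\Pi^2, \mu) \leq QIC^{r_2}(T, \epsilon_2) + \delta$. By Lemma~\ref{lem:conv}, there exists an $r$-round protocol $\Pi$ with $\Pi = p\Pi^1 + (1-p)\Pi^2$ (obtained, for example, by using a shared coin to select which sub-protocol to run) such that $QIC(\Pi, \rho) = p\,QIC(\Pi^1, \rho) + (1-p)\,QIC(\Pi^2, \rho)$ for every input state $\rho$.

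Next I would verify that $\Pi \in \T^{r}(T, \epsilon)$, i.e.~that the worst-case error of $\Pi$ is at most $\epsilon$. Since on any atomic input the error of $\Pi$ is exactly $p$ times the error of $\Pi^1$ plus $(1-p)$ times the error of $\Pi^2$, and each of those is bounded by $\epsilon_1$ and $\epsilon_2$ respectively, the worst-case error of $\Pi$ is at most $p\epsilon_1 + (1-p)\epsilon_2 = \epsilon$. Then, using the definition of $QIC^r(T, \epsilon)$ and the pointwise linearity of $QIC(\Pi, \cdot)$ together with subadditivity of the max,
\begin{align*}
QIC^r(T, \epsilon)
&\leq \max_\mu QIC(\Pi, \mu)
= \max_\mu \bigl[ p\,QIC(\Pi^1, \mu) + (1-p)\,QIC(\Pi^2, \mu) \bigr] \\
&\leq p\max_\mu QIC(\Pi^1, \mu) + (1-p)\max_\mu QIC(\Pi^2, \mu) \\
&\leq p\,QIC^{r_1}(T, \epsilon_1) + (1-p)\,QIC^{r_2}(T, \epsilon_2) + \delta.
\end{align*}
Letting $\delta \to 0$ yields the bounded-round statement. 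The unbounded-round case follows by the same argument, simply dropping the round superscripts and using $\T(T, \epsilon_i)$ in place of $\T^{r_i}(T, \epsilon_i)$.

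There is essentially no hard step here: the only thing to be careful about is the direction of the inequality when pulling $p\cdot(\,\cdot\,) + (1-p)\cdot(\,\cdot\,)$ out of a max, which goes the favorable way (max of a sum is at most the sum of maxes). Everything else is a mechanical invocation of the already-established pointwise convexity lemma together with the observation that the convex combination of two worst-case-error-bounded protocols is itself worst-case-error-bounded by the corresponding convex combination of the error parameters.
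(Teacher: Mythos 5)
Your proposal is correct and follows the paper's proof essentially verbatim: take $\delta$-optimal worst-case-error protocols for each error level, combine them via Lemma~\ref{lem:conv}, observe the convex combination has worst-case error at most $\epsilon$, and bound the max of the sum by the sum of the maxes before letting $\delta \to 0$. No gaps.
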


\begin{proof}
The proof is similar to the one for the analogous result with fixed input.
Given $\delta > 0$, let $\Pi^1$ and $\Pi^2$ be protocols satisfying, for all $\mu$,
 for $i \in \{ 1, 2\}, \Pi^i \in \T (T, \epsilon_i), QIC (\Pi^i, \mu) 
\leq QIC (T, \epsilon_i) + \delta$, and take the corresponding protocol $\Pi$ of Lemma \ref{lem:conv}.
First, it holds that protocol $\Pi$ successfully accomplish its task, i.e.~it implements task $T$ on all inputs with error bounded
by $\epsilon = p \epsilon_1 + (1-p) \epsilon_2$.
We must now verify that the quantum information cost satisfies the convexity property:
\begin{align*}
 QIC(T, \epsilon) & \leq \max_\mu QIC (\Pi, \mu)  \\
		&= \max_\mu \big( p QIC (\Pi^1, \mu) +  (1-p) QIC (\Pi^2, \mu) \big) \\
		&\leq p \max_\mu  QIC (\Pi^1, \mu) +  (1-p) \max_\mu QIC (\Pi^2, \mu) \\
		&\leq p QIC(T, \epsilon_1) + (1-p) QIC(T, \epsilon_2) + 2 \delta.
\end{align*}
Keeping track of rounds, we get the bounded round result.
\end{proof}

\begin{corollary}[Continuity in error]
Quantum information complexity is continuous in the error. That is, 
for all $T, r$ and $ \epsilon, \delta > 0$, there exists $\epsilon^\prime \in (0,  \epsilon)$ such that for all $\epsilon^{\prime \prime} \in (\epsilon^\prime, \epsilon)$
\begin{align*}
	|QIC (T, \epsilon - \epsilon^{\prime \prime}) - QIC (T, \epsilon)| & \leq \delta, \\
	|QIC^r (T, \epsilon - \epsilon^{\prime \prime}) - QIC^r (T, \epsilon)| & \leq \delta. \\
\end{align*}

\end{corollary}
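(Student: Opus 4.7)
The plan is to mimic almost verbatim the proof of the analogous continuity result with fixed input (Lemma~\ref{cor:cty_comp_input}), replacing the convexity corollary for fixed input (Corollary~\ref{cor:conv}) with the convexity-in-error lemma just established for the prior-free setting.

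First I would observe that $QIC(T,\cdot)$ and $QIC^{r}(T,\cdot)$ are non-increasing in the error parameter: if $\epsilon_1 \leq \epsilon_2$, then $\T(T,\epsilon_1) \subseteq \T(T,\epsilon_2)$ (any protocol with worst-case error $\epsilon_1$ trivially has worst-case error $\epsilon_2$), and since the same objective $\max_\mu QIC(\Pi,\mu)$ is optimized over a larger feasible set, the infimum can only decrease. Hence for $\epsilon'' \in (\epsilon',\epsilon)$ we have
\[ QIC(T,\epsilon) \;\leq\; QIC(T,\epsilon-\epsilon'') \;\leq\; QIC(T,\epsilon-\epsilon'),\]
so it suffices to bound $QIC(T,\epsilon-\epsilon') - QIC(T,\epsilon)$, and the absolute values may be dropped.

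Next I would apply the Convexity in Error lemma with the choices $\epsilon_1 = 0$, $\epsilon_2 = \epsilon$, and $p \in (0,\tfrac{1}{2})$ to be determined, setting $\epsilon' := p\epsilon$ so that $\epsilon - \epsilon' = p\cdot 0 + (1-p)\cdot\epsilon$. Convexity yields
\[ QIC(T,\epsilon-\epsilon') \;\leq\; p\, QIC(T,0) + (1-p)\, QIC(T,\epsilon) \;\leq\; p\, QCC(T,0) + QIC(T,\epsilon),\]
where the second inequality uses Lemma~\ref{lem:qicvsqccpf}. Rearranging gives the uniform bound
\[ QIC(T,\epsilon-\epsilon') - QIC(T,\epsilon) \;\leq\; \tfrac{\epsilon'}{\epsilon}\, QCC(T,0).\]
Given $\delta > 0$, choose $p$ (and hence $\epsilon' = p\epsilon$) small enough that $\tfrac{\epsilon'}{\epsilon}QCC(T,0) \leq \delta$; this is possible since $QCC(T,0)$ is a fixed finite quantity for the problems under consideration (bounded above by the trivial protocol of exchanging inputs). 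The bounded-round statement is proved in exactly the same way, using the bounded-round form of the convexity lemma with $r_1 = r_2 = r$ and $QCC^r(T,0)$ in place of $QCC(T,0)$.

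There is no real obstacle here, since all the heavy lifting was done by the convexity-in-error lemma; the only minor point to verify is monotonicity in $\epsilon$, which is immediate from the definition, and that $QCC(T,0)$ is finite for the relations of interest so that the bound $\tfrac{\epsilon'}{\epsilon}QCC(T,0)$ can be made arbitrarily small by shrinking $p$.
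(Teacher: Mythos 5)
Your proposal is correct and matches the paper's intent exactly: the corollary is stated immediately after the convexity-in-error lemma precisely so that it follows by repeating the proof of Lemma~\ref{cor:cty_comp_input} with Corollary~\ref{cor:conv} replaced by the prior-free convexity lemma, which is what you do. The monotonicity observation and the bound $QIC(T,0)\leq QCC(T,0)$ via Lemma~\ref{lem:qicvsqccpf} are the only ingredients needed, and you have both.
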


\begin{lemma} [Quasi-convexity in input]
For any $p \in [0, 1]$, define $\rho = p\rho_1 + (1-p) \rho_2$ for any two input states $\rho_1, \rho_2$.
Then the following holds for any $r$-round protocol $\Pi$:
\begin{align*}
 	QIC(\Pi, \rho) & \geq p QIC(\Pi, \rho_1) + (1-p) QIC(\Pi, \rho_2) \\
	QIC(\Pi, \rho) & \leq p QIC(\Pi, \rho_1) + (1-p) QIC(\Pi, \rho_2) +  r H(p).
\end{align*}
\end{lemma}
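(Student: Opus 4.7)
The plan is to attach a classical ``flag'' register to the purification of $\rho$ indicating which of $\rho_1,\rho_2$ was chosen, then apply the chain rule on each round's conditional mutual information, and finally bound the flag contribution in two ways to get the quasi-convexity lower bound and the $rH(p)$-slack upper bound respectively.

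Concretely, I would start by taking a purification of $\rho = p\rho_1 + (1-p)\rho_2$ of the form
\begin{align*}
\ket{\tilde\rho}^{A_{in}B_{in}R'FF'} = \sqrt{p}\,\ket{1}^F\ket{1}^{F'}\ket{\rho_1}^{A_{in}B_{in}R'} + \sqrt{1-p}\,\ket{2}^F\ket{2}^{F'}\ket{\rho_2}^{A_{in}B_{in}R'},
\end{align*}
where $\ket{\rho_j}$ purifies $\rho_j$ on a common $R'$ and $F$, $F'$ are auxiliary registers. Since the protocol $\Pi$ acts only on Alice's and Bob's registers, the full register $R := R'FF'$ is invariant throughout; in particular the marginal on $F$ stays classical (distribution $(p,1-p)$) in every round.

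Next, I would split each round's information-cost term using the chain rule:
\begin{align*}
I(C_i;R\mid M_i) \;=\; I(C_i;F\mid M_i) \;+\; I(C_i;R'F'\mid M_i F),
\end{align*}
where $M_i\in\{A_i,B_i\}$ is the appropriate memory register. Because $F$ is classical, conditioning on it is averaging, so the second summand equals $p\,I(C_i;R'F'\mid M_i)_{\Pi(\rho_1)}+(1-p)\,I(C_i;R'F'\mid M_i)_{\Pi(\rho_2)}$; invariance of mutual information under the choice of purification (equation~\eqref{eq:pure}) identifies these with the round-$i$ contributions to $QIC(\Pi,\rho_1)$ and $QIC(\Pi,\rho_2)$. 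Summing $\tfrac12$ times this identity over all $r$ rounds yields
\begin{align*}
QIC(\Pi,\rho) \;=\; \sum_i \tfrac12\, I(C_i;F\mid M_i) \;+\; p\,QIC(\Pi,\rho_1) \;+\; (1-p)\,QIC(\Pi,\rho_2).
\end{align*}

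The two inequalities of the lemma then follow by sandwiching the flag contribution. Non-negativity of conditional mutual information gives $\sum_i \tfrac12 I(C_i;F\mid M_i)\ge 0$, proving the quasi-convexity lower bound. For the upper bound, since $F$ is classical we have $I(C_i;F\mid M_i)\le H(F)=H(p)$, and summing over $r$ rounds gives $\sum_i \tfrac12 I(C_i;F\mid M_i)\le \tfrac{r}{2}H(p)\le r H(p)$. No step looks technically hard; the only care needed is the bookkeeping that ensures $F$ remains classical after the isometries (immediate, since $\Pi$ never touches $R$) and that the ``purification'' appearing in the conditioned-on-$F{=}j$ term is a legitimate purification of $\rho_j$ so that equation~\eqref{eq:pure} applies.
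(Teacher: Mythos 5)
Your overall strategy (purify $\rho$ with flag registers, apply the chain rule round by round, and sandwich the flag contribution) is the same as the paper's, but the key identity you assert is false, and the intermediate bound it yields is actually wrong, not just unproven. The problem is the step ``because $F$ is classical, conditioning on it is averaging'' applied to $I(C_i;R'F'\mid M_iF)$. For conditioning-equals-averaging you need the reduced state on \emph{all} registers appearing in that CMI to be block-diagonal in $F$. That holds only when the twin flag $F'$ has been traced out: the evolved global state is $\sqrt{p}\,\ket{11}^{FF'}\ket{\phi_{1,i}}+\sqrt{1-p}\,\ket{22}^{FF'}\ket{\phi_{2,i}}$, and tracing out only the other party's memory leaves cross terms proportional to $\kb{11}{22}^{FF'}$, so the state on $C_iR'F'M_iF$ is \emph{not} classical on $F$. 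Concretely, take $r=1$, $\rho_1=\kb{0}{0}^{A_{in}}$, $\rho_2=\kb{1}{1}^{A_{in}}$, and the protocol in which Alice sends her input qubit. Then $QIC(\Pi,\rho_1)=QIC(\Pi,\rho_2)=0$ while $QIC(\Pi,\rho)=\tfrac12 I(C_1;R)=H(p)$ (the state on $C_1R$ is pure), whereas your identity would give $QIC(\Pi,\rho)=\tfrac12 I(C_1;F\mid B_1)+0=\tfrac12 H(p)$. So the claimed equality, and hence your $\tfrac{r}{2}H(p)$ upper bound on the slack, both fail; the $rH(p)$ slack in the lemma is tight in this example.

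The fix is exactly the paper's three-term chain rule: write
\begin{align*}
I(C_i;R'FF'\mid M_i)=I(C_i;F\mid M_i)+I(C_i;R'\mid M_iF)+I(C_i;F'\mid M_iFR').
\end{align*}
In the middle term $F'$ is traced out, so $F$ genuinely is classical there and that term equals $p\,I(C_i;R'\mid M_i)_{\rho_1}+(1-p)\,I(C_i;R'\mid M_i)_{\rho_2}$, which by invariance under choice of purification (equation~\eqref{eq:pure}) matches the round-$i$ contributions to $QIC(\Pi,\rho_1)$ and $QIC(\Pi,\rho_2)$. You then have \emph{two} non-negative remainder terms per round, each bounded by $H(p)$, and after the factor $\tfrac12$ this gives the stated $rH(p)$. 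Your lower bound survives the correction (all remainders are non-negative), but as written your derivation of the upper bound does not.
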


\begin{proof}
	The first inequality is Lemma~\ref{concavity}, and the second is obtained by keeping track of the remainder terms discarded in its proof.
Let $R$ be a register holding a purification of $\rho_1$ and $\rho_2$, then we can purify $\rho$ with two copies $S_1, S_2$ of a selector reference register, such that $\ket{\rho}^{A_{in} B_{in} R S_1 S_2} = \sqrt{p} \ket{\rho_1}^{A_{in} B_{in} R} \ket{1}^{S_1} \ket{1}^{S_2} + \sqrt{1-p} \ket{\rho_2}^{A_{in} B_{in} R} \ket{2}^{S_1} \ket{2}^{S_2}$. We can then expand each term as 
\begin{align*}
	I (C_i ; R S_1 S_2 | B_i)_\rho = I (C_i ; S_1 | B_i)_\rho + I (C_i ; R | B_i S_1)_\rho + I (C_i ; S_2 | B_i R S_1)_\rho,
\end{align*}
 and similarly for terms conditioning on Alice's systems $A_i$. The result follows by summing over all rounds since
\begin{align*}
	I (C_i ; R | B_i S_1)_\rho = p I (C_i ; R | B_i)_{\rho_1} + (1 - p) \cdot I (C_i ; R | B_i)_{\rho_2},
\end{align*}
 and then $H(S) = H(p)$ upper bounds the two remainder terms in each of the $r$ rounds.
\end{proof}

\begin{lemma}[Continuity in input]
\label{continuity:qic}
Quantum information cost for $r$-round protocols is uniformly continuous in the input distribution. 
This holds uniformly over all $r$-round protocols over input $X \times Y$. That is, for all $r, |X|, |Y|$,
 and $\epsilon > 0$, there exists $\delta > 0$ such 
that for all $\mu_1$ and $\mu_2$ that are $\delta$-close and all $r$-round protocols $\Pi$, 
\begin{align*}
	|QIC (\Pi, \mu_1) - QIC (\Pi, \mu_2)| \leq \epsilon.
\end{align*}

\end{lemma}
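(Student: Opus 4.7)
The plan is to combine the quasi-convexity lemma just established with a uniform (protocol-independent) upper bound on $QIC(\Pi, \nu)$ for classical input distributions $\nu$ on $X \times Y$. The naive route via Fannes--Audenaert would introduce dimension-dependent error terms scaling with the internal memory and communication registers of $\Pi$, which cannot give uniformity over protocols; the quasi-convexity route sidesteps this at the cost of an $r H(p)$ overhead that is harmless since $r$ is fixed.

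First I would prove that for any $r$-round protocol $\Pi$ and any classical input distribution $\nu$ on $X \times Y$, one has $QIC(\Pi, \nu) \leq r \log(|X||Y|)$. Indeed, the purification register $R = R_1 R_2$ has marginal state $\sum_{xy} \nu(xy) \kb{xy}{xy}^{R_1} \otimes \kb{xy}{xy}^{R_2}$, which is classical with support of size at most $|X||Y|$, so $H(R) \leq \log(|X||Y|)$. Since $R$ is untouched by the protocol, its marginal entropy stays constant throughout, and each of the $r$ summands in $QIC(\Pi, \nu)$ is bounded by $\tfrac{1}{2} I(C_i; R \mid B_i) \leq H(R) \leq \log(|X||Y|)$ using the general bound $I(A;B \mid C) \leq 2 H(A)$ from the preliminaries.

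Given $\mu_1, \mu_2$ within total variation distance $p \leq \delta$, the standard optimal coupling lets us write $\mu_1 = (1-p)\nu + p\nu_1$ and $\mu_2 = (1-p)\nu + p\nu_2$, where $(1-p)\nu$ is the pointwise minimum of $\mu_1$ and $\mu_2$. Applying the two-sided quasi-convexity lemma to each mixture, the shared term $(1-p) QIC(\Pi, \nu)$ cancels and one obtains
\begin{align*}
|QIC(\Pi, \mu_1) - QIC(\Pi, \mu_2)| \leq p \, |QIC(\Pi, \nu_1) - QIC(\Pi, \nu_2)| + r H(p) \leq p \cdot r \log(|X||Y|) + r H(p),
\end{align*}
where the second inequality invokes the uniform classical-input bound applied to $\nu_1, \nu_2$. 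With $r, |X|, |Y|$ fixed, the right-hand side tends to zero as $p \to 0$ by continuity of the binary entropy at $0$, so choosing $\delta$ small enough yields the desired $\epsilon$-bound uniformly over all $r$-round protocols $\Pi$. The main subtlety is exactly this uniformity over protocols, which would fail under any naive Fannes-type argument but is delivered cleanly by the combination of the classical-input entropy bound (depending only on $|X|, |Y|$) and the quasi-convexity lemma.
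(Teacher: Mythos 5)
Your proposal is correct and follows essentially the same route as the paper: decompose $\mu_1,\mu_2$ into a common part plus $\delta$-weight remainders, apply the quasi-convexity lemma, and bound the remainder terms by the uniform bound $QIC(\Pi,\nu)\leq r\log(|X||Y|)$ for classical inputs (which the paper uses implicitly and you prove explicitly via $H(R)\leq\log(|X||Y|)$). The only cosmetic difference is that you apply both directions of quasi-convexity symmetrically and subtract, whereas the paper chains the upper bound for $\mu_1$ with concavity for $\mu_2$; the resulting bound $\delta\cdot r\log(|X||Y|)+rH(\delta)$ is identical.
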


\begin{proof}

Let $\delta > 0$ and fix $\mu_1$ and $\mu_2$ that are $\delta$-close.
We can then write, for some common part $\mu_0$ and remainder parts $\mu_1^\prime, \mu_2^\prime$,
\begin{align*}
	&\mu_1 = (1 - \delta) \mu_0 + \delta \mu_1^\prime, \\
	&\mu_2 = (1 - \delta) \mu_0 + \delta \mu_2^\prime, \\
	&\mu_0(x,y) = \frac{\min(\mu_1(x,y), \mu_2(x,y))}{\sum_{x',y'} \min(\mu_1(x',y'), \mu_2(x',y'))}.
\end{align*}

Using the bounds in the lemma above once on each of $\mu_1$ and $\mu_2$, we get
\begin{align*}
	QIC (\Pi, \mu_1) & \leq (1- \delta) QIC (\Pi, \mu_0) + \delta QIC (\Pi, \mu_1^\prime) +  r H(\delta) \\ 
		&  \leq (1- \delta) QIC (\Pi, \mu_0) + \delta QIC (\Pi, \mu_2^\prime) + \delta QIC (\Pi, \mu_1^\prime) +  r H(\delta) \\ 
		& \leq  QIC (\Pi, \mu_2) +  \delta \cdot  r  (\log |X| + \log |Y|)  +  r H(\delta).
\end{align*}
Similarly, we get a bound on $QIC (\Pi, \mu_2)$ in terms of $QIC (\Pi, \mu_1)$, so the following holds: 
\begin{align*}
	| QIC (\Pi, \mu_1) -QIC(\Pi, \mu_2) | \leq \delta \cdot  r  (\log |X| + \log |Y|) +  r H(\delta).
\end{align*}
This bound is independent of $\mu_1, \mu_2$, depends on $\Pi$ only through $r$  and $|X|, |Y|$, and goes to zero as $\delta$ does, so the result follows.
\end{proof}




\begin{corollary}
\label{onezero} 
Suppose we have a $r$-round protocol $\Pi$ for $AND$. 
Then,
\begin{equation}
QIC(\Pi, \mu) \leq QIC(\Pi, \mu_0)  + O( r H(w))
\end{equation}
where $w = \mu(1,1) \leq 1/2$, $\mu_0(1,1) = 0$, and $\mu_0(x_i,y_i) = \frac{1}{1-w} \mu(x_i,y_i)$ otherwise.
\end{corollary}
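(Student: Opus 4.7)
The plan is to apply the quasi-convexity in input lemma (the one stated just before Lemma~\ref{continuity:qic}) directly to the obvious two-point splitting of $\mu$. The defining relations for $\mu_0$ are saying nothing other than $\mu = (1-w)\mu_0 + w\,\delta_{(1,1)}$, where $\delta_{(1,1)}$ denotes the distribution that is a point mass at $(x,y) = (1,1)$. Taking $\rho_1, \rho_2$ in the quasi-convexity lemma to be the classical input states associated with $\mu_0$ and $\delta_{(1,1)}$, respectively, and choosing $p = 1-w$, one immediately obtains
\begin{equation*}
QIC(\Pi, \mu) \leq (1-w)\,QIC(\Pi, \mu_0) + w\,QIC(\Pi, \delta_{(1,1)}) + r\,H(1-w).
\end{equation*}
Since $H(1-w) = H(w)$ and $1-w \leq 1$, what remains is to check that $QIC(\Pi, \delta_{(1,1)}) = 0$.

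For that step I would observe that, under $\delta_{(1,1)}$, the classical input state is simply the pure product state $\kb{1}{1}^{A_{in}} \otimes \kb{1}{1}^{B_{in}}$, which can be purified using a one-dimensional (trivial) register $R$. Throughout the protocol all isometries act on registers other than $R$, so the joint state remains a product between $R$ and the rest, and therefore $I(C_i; R | B_i) = I(C_i; R | A_i) = 0$ at every round. Summing over the rounds, every contribution in the definition of $QIC(\Pi, \delta_{(1,1)})$ vanishes, which gives the claim.

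I do not expect any genuine obstacle here: the statement is essentially a specialization of the quasi-convexity bound to the two-point splitting of $\mu$, together with the elementary observation that the quantum information cost of a deterministic classical input is zero. The point worth flagging for the later use of the corollary is that the penalty term is $r\,H(w)$ rather than the $H(w)$ one would see in the classical setting: the quasi-convexity lemma pays $H(p)$ once per round, which is exactly the per-round leakage the paper later exploits to turn a good $r$-round protocol for $DISJ_n$ into a low-information protocol for $DISJ_{r^2}$.
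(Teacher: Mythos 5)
Your proof is correct and follows essentially the same route as the paper: the paper derives the corollary from the proof of Lemma~\ref{continuity:qic}, which applies the same quasi-convexity lemma to the same decomposition $\mu = (1-w)\mu_0 + w\,\delta_{(1,1)}$ (there $\mu_0$ is exactly the normalized common part of $\mu$ and itself). Your only deviation is a minor sharpening: you evaluate $QIC(\Pi,\delta_{(1,1)})=0$ exactly via the trivial purifying register, whereas the paper bounds the remainder term by $w\cdot r(\log|X|+\log|Y|)$ and absorbs it into $O(rH(w))$ using the constant input size; both yield the stated bound.
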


\begin{proof}
This just follows from the proof of lemma \ref{continuity:qic}, since the input size is constant. 
\end{proof}

\begin{theorem}
\label{thm:minimax}
For a relation $T \subset X \times Y \times Z_A \times Z_B$,
an error parameter $\epsilon \in (0, 1)$, a number of rounds $r$ and each value $ \alpha \in (0, 1)$, 
\begin{align*}
	QIC^r (T, \frac{\epsilon}{\alpha}) \leq  \frac{QIC_D^r (T, \epsilon)}{1 - \alpha}.
\end{align*}

\end{theorem}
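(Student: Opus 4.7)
The plan is to construct a single $r$-round protocol $\Pi^*$ achieving both worst-case error at most $\epsilon/\alpha$ and $\max_\mu QIC(\Pi^*, \mu) \leq K/(1-\alpha)$, where $K := QIC_D^r(T, \epsilon)$. The central trick is to combine the two objectives into the non-negative payoff
\begin{equation*}
f(\Pi, \mu) \;:=\; \alpha \cdot \frac{P_e^\mu(\Pi)}{\epsilon} \;+\; (1-\alpha) \cdot \frac{QIC(\Pi, \mu)}{K},
\end{equation*}
so that $f(\Pi^*, \mu) \leq 1$ for every $\mu$ immediately yields both desired bounds by non-negativity of each summand. Thus it suffices to exhibit such a $\Pi^*$ via a minimax argument on $f$.

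For the ``easy'' max-min direction, fix an arbitrary distribution $q$ over input distributions $\mu$ and let Player~2 play the near-optimal distributional protocol $\Pi_{\mu_q}$ at the barycenter $\mu_q := \mathbb{E}_{\mu \sim q}[\mu]$, which satisfies $P_e^{\mu_q}(\Pi_{\mu_q}) \leq \epsilon$ and $QIC(\Pi_{\mu_q}, \mu_q) \leq K$ by definition of $QIC_D^r$. Linearity of $P_e^\mu$ in $\mu$ bounds the expected error piece by $\alpha$, and Lemma~\ref{concavity} (concavity of QIC in the input) gives $\mathbb{E}_{\mu \sim q}[QIC(\Pi_{\mu_q}, \mu)] \leq QIC(\Pi_{\mu_q}, \mu_q) \leq K$, bounding the expected QIC piece by $(1-\alpha)$. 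So $\mathbb{E}_{\mu \sim q}[f(\Pi_{\mu_q}, \mu)] \leq 1$.

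To convert this into a min-max bound I would discretize rather than try to invoke Sion's theorem directly, since $QIC(\Pi, \mu)$ is only concave (not linear) in $\mu$ and the protocol space is infinite-dimensional. Fix $\delta > 0$, take a $\delta$-net $\mathcal{N} \subset \D_{XY}$ in total variation, and for each $\mu_i \in \mathcal{N}$ pick a near-optimal $\Pi_{\mu_i}$. Apply the finite minimax theorem (LP duality) to the game with strategies $\mathcal{N}$ versus $\{\Pi_{\mu_i}\}_{\mu_i \in \mathcal{N}}$ and payoff $f$: the max-min bound from the previous paragraph, used together with the uniform continuity of $QIC(\Pi, \cdot)$ (Lemma~\ref{continuity:qic}) to approximate the barycenter $\mu_q$ by its nearest net point, shows the max-min value is at most $1 + o_\delta(1)$. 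Hence there is a mixture $\Pi^* = \sum_i p_i \Pi_{\mu_i}$ with $f(\Pi^*, \mu_i) \leq 1 + o_\delta(1)$ for every $\mu_i \in \mathcal{N}$; Lemma~\ref{lem:conv} (linearity of QIC under mixtures of protocols) and linearity of the error term confirm that $\Pi^*$ is itself a legitimate $r$-round protocol realizing this payoff. A second application of Lemma~\ref{continuity:qic}, plus the trivial continuity of $P_e^\mu$ in $\mu$, extends the bound from $\mathcal{N}$ to all of $\D_{XY}$ with a further $o_\delta(1)$ error.

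The main obstacle is absorbing these $o_\delta(1)$ slack terms into the theorem's sharp statement. I would run the argument with $\epsilon - \epsilon^\prime$ in place of $\epsilon$ for a tiny $\epsilon^\prime > 0$; Lemma~\ref{cor:cty_comp_input} (continuity in the error, uniform in $\mu$) then ensures that $QIC_D^r(T, \epsilon - \epsilon^\prime)$ is within any prescribed additive tolerance of $K$. Choosing $\delta \ll \epsilon^\prime$ guarantees that the $o_\delta(1)$ error slack fits inside the gap $\epsilon^\prime/\alpha$ between $(\epsilon - \epsilon^\prime)/\alpha$ and $\epsilon/\alpha$, so that the resulting $\Pi^*$ lies in $\T^r(T, \epsilon/\alpha)$ exactly. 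Sending $\epsilon^\prime \to 0$ (and invoking the infimum in the definition of $QIC^r$) drives the QIC bound down to $K/(1-\alpha)$, completing the proof.
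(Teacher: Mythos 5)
Your proposal is correct and follows essentially the same route as the paper's proof: the identical combined payoff $\alpha P_e^\mu/\epsilon + (1-\alpha)QIC(\Pi,\mu)/K$, the barycenter-plus-concavity argument (Lemma~\ref{concavity}) for the max--min direction, a finite $\delta$-net with near-optimal distributional protocols to invoke the finite minimax theorem, Lemma~\ref{lem:conv} to realize the mixed strategy as a protocol, and the continuity lemmas (Lemmas~\ref{continuity:qic} and~\ref{cor:cty_comp_input}) to extend off the net and absorb the slack. The only cosmetic difference is bookkeeping: the paper lands in $\T^r(T,\frac{\epsilon}{\alpha}(1+2\delta_1))$ and takes $\delta_1\to 0$ by continuity in error, while you run the argument at $\epsilon-\epsilon'$ to land in $\T^r(T,\epsilon/\alpha)$ exactly; these are equivalent.
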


\begin{proof}

Fix $T, r, \epsilon, \alpha$ and denote $I = QIC_D^r (T, \epsilon)$. 
For any $\delta_1 \in (0, 1)$, we want to prove the existence of a protocol $\Pi \in \T^r (T, \frac{\epsilon}{\alpha} \cdot (1 + 2 \delta_1))$ 
satisfying $QIC (\Pi, \mu) \leq  \frac{I \cdot  (1 + 2 \delta_1)}{1 - \alpha} $ for all $\mu \in \D_{XY}$.
This shows that $QIC^r (T, \frac{\epsilon}{\alpha} \cdot (1 + 2 \delta_1)) \leq \frac{I}{1 -\alpha} \cdot(1 + 2 \delta_1)$,
and then by continuity of quantum information complexity in the error, we get the result by taking $\delta_1$ to $0$.
The proof follows along the lines of the one for the analogous result for 
classical information complexity [Bra12], using a minimax argument. 
We take extra care to account for the continuum of quantum protocols, the round-by-round definition of quantum information cost,
and the fact that we do not have a bound on the size of the entanglement.
Let $\delta_2 \in (0, \epsilon \delta_1)$ satisfy the following two properties for all $\mu_1, \mu_2$ that are $\delta_2$-close, and for all $r$-round protocols $\Pi$: 
\begin{align}
\label{eq:ctnycost}
	| QIC (\Pi, \mu_1) -  QIC (\Pi, \mu_2) | & \leq I \cdot \frac{\delta_1}{10},\\
\label{eq:ctnycomp}
	|QIC^r (T, \mu_1, \epsilon - \delta_2) - QIC^r (T, \mu_1, \epsilon)| & \leq I \cdot \frac{\delta_1}{10}.
\end{align}
The first inequality is possible by Lemma~\ref{continuity:qic}, i.e.~by the uniform continuity of quantum information cost in the input, uniformly over all $r$-rounds protocols, and the second
 is possible by Lemma~\ref{cor:cty_comp_input}, i.e.~the continuity of quantum information complexity in the error, uniformly over all inputs.
Fix a finite $\delta_2$-net for $\D_{XY}$, 
that we denote $N_{XY}$. For each $\mu \in N_{XY}$, fix a protocol $\Pi_\mu \in \T^r (T, \mu, \epsilon - \delta_2)$ 
such that $QIC(\Pi_\mu, \mu) \leq QIC^r (T, \mu, \epsilon - \delta_2)\cdot (1 + \frac{\delta_1}{10})$ 
and denote the set of all such protocols $P_{N}$. 
 We then have 
$|P_{N}| = |N_{XY}| < \infty$, and we get using (\ref{eq:ctnycomp}) that 
\begin{align}
	QIC(\Pi_\mu, \mu) & \leq QIC^r  (T, \mu, \epsilon - \delta_2)\cdot (1 + \frac{\delta_1}{10})  \nonumber \\
		& \leq \big( QIC^r (T, \mu, \epsilon) + I \cdot \frac{\delta_1}{10} \big) (1 + \frac{\delta_1}{10}) \nonumber \\
		& \leq I (1 + \frac{\delta_1}{10})^2 \nonumber \\
\label{eq:mumu}
		& \leq I (1 + \frac{\delta_1}{2}).
\end{align}
We define the following two-player 
zero-sum game over these two sets. Player $A$ comes up with a 
quantum protocol $\Pi \in P_N$. Player $B$ comes up with a distribution 
$\mu \in N_{XY}$. Player $B$'s payoff is given by 
\begin{align*}
	P_B (\Pi, \mu) = (1 - \alpha) \cdot \frac{QIC (\Pi, \mu)}{I} + 
		\alpha \cdot \frac{Pr_{\mu} [\Pi \not \in T]}{\epsilon},
\end{align*}
and then player $A$'s is given by $P_A (\Pi, \mu) = - P_B (\Pi, \mu)$. We first show the following.
\begin{claim}
	The value of the game for player $B$ is bounded by $1+\delta_1$.
\end{claim}

\begin{proof}
	Let $\nu_B$ be a probability distribution over $N_{XY}$ representing a mixed strategy for player $B$.
To prove the claim, it suffices to show  that there is a protocol $\Pi \in P_N$ 
such that $\mathbb{E}_{\nu_B} [P_B (\Pi, \mu)] < 1 + \delta_1$. 
Let $\bar{\mu}$ be the distribution corresponding to averaging over $\nu_B$, that is
\begin{align*}
	\bar{\mu} (x, y) = \mathbb{E}_{\nu_B} \mu(x, y).
\end{align*}
Let $\mu^\prime \in N_{XY}$ be a distribution that is $\delta_2$-close 
to $\bar{\mu}$, and $\Pi^\prime \in P_N$ the corresponding protocol. 
We will show that $\Pi^\prime$ is also good for $\bar{\mu}$. We first have
\begin{align*}
	Pr_{\bar{\mu}} [\Pi^\prime \not \in T] &\leq Pr_{\mu^\prime} [\Pi^\prime \not \in T] + \delta_2 \\
			&\leq \epsilon - \delta_2 + \delta_2 \\
			&= \epsilon,
\end{align*}
in which the first inequality follows from the fact that $\bar{\mu}$ 
and $\mu^\prime$ are $\delta_2$-close and the second inequality 
from the fact that $\Pi^\prime \in P_N$ is the protocol corresponding to $\mu^\prime \in N_{XY}$, i.e.~$\Pi^\prime \in \T^r (T, \mu^\prime, \epsilon - \delta_2)$. We also have
\begin{align*}
	QIC (\Pi^\prime, \bar{\mu}) & \leq QIC (\Pi^\prime, \mu^\prime) + I \cdot \frac{\delta_1}{2} \\
			& \leq I \cdot (1 + \delta_1),
\end{align*}
in which the first inequality follows from (\ref{eq:ctnycost}) and the second from the fact that $\Pi^\prime \in P_N$ 
is the protocol corresponding to $\mu^\prime \in N_{XY}$ along with (\ref{eq:mumu}).
We obtain
\begin{align*}
	\mathbb{E}_{\nu_B} [P_B (\Pi^\prime, \mu)] & = \mathbb{E}_{\nu_B} 
		\big[ (1 - \alpha) \cdot \frac{QIC (\Pi^\prime, \mu)}{I} + \alpha 
			\cdot \frac{Pr_{\mu} [\Pi^\prime \not \in T]}{\epsilon} \big] \\
		& = (1 - \alpha) \cdot \mathbb{E}_{\nu_B} \big[ \frac{QIC (\Pi^\prime, \mu)}{I} \big] + \alpha \cdot
			 \frac{Pr_{\bar{\mu}} [ \Pi^\prime \not \in  T]}{\epsilon} \\
	& \leq (1 - \alpha) \cdot  \big[ \frac{QIC (\Pi^\prime, \bar{\mu})}{I} \big] + 
			\alpha \cdot \frac{Pr_{\bar{\mu}} [ \Pi^\prime \not \in  T]}{\epsilon} \\
	& < (1 - \alpha) \cdot  (1 + \delta_1) + \alpha \\
	& < 1 + \delta_1,
\end{align*}
in which the first equality is by definition, the second by linearity of expectation, 
the first inequality is by Lemma~\ref{concavity}, i.e.~concavity of quantum information cost in the input state, 
and the second inequality is by the above results about $\Pi^\prime$.
This concludes the proof of the claim.
\end{proof}

By the minimax theorem for zero-sum games, the above claim implies that there exists a probability distribution $\nu_A$ over $P_N$ representing  a  mixed strategy for player $A$ and such that the value of the game for player $B$ is at most $1+\delta_1$. That is, for all $\mu \in N_{XY}$,
\begin{align*}
	\mathbb{E}_{\nu_A} (P_B (\Pi, \mu)) < 1 +\delta_1.
\end{align*}

Let $\bar{\Pi} = \mathbb{E}_{\nu_A} (\Pi)$ be the $r$-round protocol obtained by publicly averaging over $\nu_A$,
as per Lemma~\ref{lem:conv}.
This is the protocol we are looking for. The following claim holds.

\begin{claim}
	For all $\mu \in \D_{XY}$, $(1 - \alpha) \cdot \frac{QIC (\bar{\Pi}, \mu)}{I} + \alpha 
			\cdot \frac{Pr_{\mu} [\bar{\Pi} \not \in T]}{\epsilon} < 1 + 2 \delta_1$.
\end{claim}

\begin{proof}
Fix any $\mu \in \D_{XY}$, and let $\mu^\prime \in N_{XY}$ be a distribution that is $\delta_2$-close to $\mu$.
Then 
we obtain
\begin{align*}
	(1 - \alpha) \cdot \frac{QIC (\bar{\Pi}, \mu)}{I} + \alpha 
			\cdot \frac{Pr_{\mu} [\bar{\Pi} \not \in T]}{\epsilon} 
		& \leq (1 - \alpha) \cdot  \frac{QIC (\bar{\Pi}, \mu^\prime) + I \delta_1}{I} + \alpha 
			\cdot \frac{Pr_{\mu^\prime} [\bar{\Pi} \not \in T] + \delta_2}{\epsilon} \\
		& = (1 - \alpha) \cdot  \frac{QIC (\bar{\Pi}, \mu^\prime) }{I} + \alpha \cdot
			 \mathbb{E}_{\nu_A} \frac{Pr_{\mu^\prime} [ \Pi \not \in  T] }{\epsilon}  \\
		&\quad\quad\quad + (1 - \alpha ) \cdot  \delta_1 + \alpha \cdot \frac{\delta_2}{\epsilon}\\
	& \leq (1 - \alpha) \cdot  \mathbb{E}_{\nu_A} \big[ \frac{QIC (\Pi, \mu^\prime)}{I}  \big] + 
			\alpha \cdot \mathbb{E}_{\nu_A} \big[ \frac{Pr_{\mu^\prime} [ \Pi \not \in  T]}{\epsilon}  \big] + \delta_1 \\
	& = \mathbb{E}_{\nu_A} [P_B (\Pi, \mu^\prime)]  + \delta_1 \\
	& < 1 + 2 \delta_1,
\end{align*}
in which the first inequality follows from (\ref{eq:ctnycost}) and the fact that $\mu, \mu^\prime$ are $\delta_2$-close, 
the first equality is because we take expectation over a probability, 
the second inequality is because $\delta_2 \leq \epsilon \cdot \delta_1$ and by Lemma~\ref{lem:conv}, i.e.~by the convexity of quantum information cost in the protocol, the second equality is by linearity of expectation and the definition of $P_B (\Pi, \mu^\prime)$, and the last inequality is because $\nu_A$ represents the mixed strategy obtained by the minimax theorem.
Since this holds for all $\mu \in \D_{XY}$, this conclude the proof of the claim.
\end{proof}

To conclude the proof of the theorem, we first note that the above claim implies that for all $\mu \in\D_{XY}$,
\begin{align*}
	QIC (\bar{\Pi}, \mu) \leq \frac{I}{1-\alpha}(1 + 2 \delta_1),
\end{align*}
so $\bar{\Pi}$ satisfies the quantum information cost property we are looking for. Is left to verify that it also has low error on all inputs. The above claim also implies that for all $\mu$,
\begin{align*}
	Pr_{\mu} [\bar{\Pi} \not \in T] \leq \frac{\epsilon}{\alpha} \cdot (1 + 2 \delta_1).
\end{align*}
Letting $\mu$ run over all atomic distributions, we get the desired error property, and so
\begin{align*}
	QIC^r (T, \frac{\epsilon}{\alpha} \cdot (1 + 2 \delta_1)) \leq \frac{I}{1-\alpha}(1 + 2 \delta_1),
\end{align*}
as desired.
\end{proof}

\begin{theorem} \label{minimax_general}
	For a relation $T \subset X \times Y \times Z_A \times Z_B$,
an error parameter $\epsilon \in (0, 1)$ and each value $ \alpha \in (0, 1)$, 
\begin{align*}
	QIC (T, \frac{\epsilon}{\alpha}) \leq  \frac{QIC_D (T, \epsilon)}{1 - \alpha}.
\end{align*}

\end{theorem}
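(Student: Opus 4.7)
The plan is to reduce the unbounded-round statement to the bounded-round version, Theorem~\ref{thm:minimax}, via a compactness argument on the input simplex $\D_{XY}$, which is compact since $X$ and $Y$ are finite. Since $\T^r(T, \epsilon/\alpha) \subseteq \T(T, \epsilon/\alpha)$ for every $r$, one immediately has $QIC(T, \epsilon/\alpha) \leq QIC^r(T, \epsilon/\alpha) \leq QIC_D^r(T, \epsilon)/(1-\alpha)$ for every $r$. The task thus reduces to showing that for every $\delta > 0$, one can find a finite $R$ and a small error slack $\eta > 0$ such that $QIC_D^R(T, \epsilon + \eta) \leq QIC_D(T, \epsilon) + 2\delta$. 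Then Theorem~\ref{thm:minimax} applied at error parameter $\epsilon + \eta$ with $\alpha' := \alpha(\epsilon + \eta)/\epsilon$ (so that $(\epsilon + \eta)/\alpha' = \epsilon/\alpha$) yields $QIC(T, \epsilon/\alpha) \leq (QIC_D(T, \epsilon) + 2\delta)/(1-\alpha')$, and sending $\delta, \eta \to 0$ (which also sends $\alpha' \to \alpha$) finishes the proof.

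To construct such an $R$, I would argue as follows. For each $\mu \in \D_{XY}$, the definition of $QIC(T, \mu, \epsilon)$ furnishes a protocol $\Pi_\mu$ with some finite round complexity $r_\mu$ and average error at most $\epsilon$ on $\mu$ such that $QIC(\Pi_\mu, \mu) \leq QIC(T, \mu, \epsilon) + \delta \leq QIC_D(T, \epsilon) + \delta$. The key observation is that the error $\Pr_{\mu'}[\Pi_\mu \notin T]$ is \emph{linear} in $\mu'$, so a total-variation perturbation of $\mu$ perturbs the error by the same amount, while $\mu' \mapsto QIC(\Pi_\mu, \mu')$ is continuous by Lemma~\ref{continuity:qic} applied to the fixed $r_\mu$-round protocol $\Pi_\mu$. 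This yields an open neighborhood $U_\mu \ni \mu$ on which both quantities remain controlled, namely $QIC(\Pi_\mu, \mu') \leq QIC_D(T, \epsilon) + 2\delta$ and $\Pr_{\mu'}[\Pi_\mu \notin T] \leq \epsilon + \eta$. By compactness of $\D_{XY}$, finitely many $U_{\mu_1}, \ldots, U_{\mu_k}$ cover the simplex, and I would take $R := \max_i r_{\mu_i}$. Then for every $\mu \in \D_{XY}$, some $\Pi_{\mu_i}$ is an $R$-round protocol witnessing $QIC^R(T, \mu, \epsilon + \eta) \leq QIC_D(T, \epsilon) + 2\delta$, giving the desired bound on $QIC_D^R(T, \epsilon + \eta)$.

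The main subtlety will be that Lemma~\ref{continuity:qic} only provides continuity in input uniformly over $r$-round protocols for a fixed $r$, not uniformly over all protocols of arbitrary round complexity; this is precisely why the unbounded-round statement does not follow directly by reusing the finite-$\epsilon$-net minimax argument in the proof of Theorem~\ref{thm:minimax} (there one needed a single finite set of protocols, whereas here round complexities of near-optimal protocols for different $\mu$ might be unbounded). In the plan above this is sidestepped because continuity is invoked only for a single protocol $\Pi_\mu$ of fixed finite round complexity $r_\mu$ at a time, so Lemma~\ref{continuity:qic} is directly applicable. A minor secondary point is verifying that the rescaling $\alpha' = \alpha(\epsilon + \eta)/\epsilon$ stays in $(0, 1)$ and converges to $\alpha$ as $\eta \to 0$, which is automatic for $\eta$ sufficiently small.
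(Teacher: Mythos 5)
Your proposal is correct and follows essentially the same route as the paper: a compactness argument on $\D_{XY}$ (the paper phrases it via finitely many closed sets $A(\Pi)$ with empty intersection, you via the dual open cover) extracts a finite family of protocols with bounded maximal round number $r_M$, after which Theorem~\ref{thm:minimax} is applied at a slightly relaxed error and the slack is sent to zero. Your rescaling $\alpha' = \alpha(\epsilon+\eta)/\epsilon$ versus the paper's use of continuity of $QIC$ in the error at the final step is only a cosmetic difference.
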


\begin{proof}
	Let $I = QIC_D (T, \epsilon)$, and denote by $P_e^\mu (\Pi)$ the average error of $\Pi$ for computing $T$ on $\mu$, and by $P_T$ the set of all protocols over the same input and output spaces as $T$. Then for any $\Pi$, $P_e^\mu (\Pi)$ is continuous in $\mu$ by properties of the statistical distance. Given $\delta > 0$, define
\begin{align*}
	A (\Pi) = \{ \mu \in \D_{XY} :  QIC (\Pi, \mu) \geq I + 2 \cdot \delta \ \text{or} \ P_e^\mu (\Pi) \geq \epsilon +  \delta \}.
\end{align*}
By continuity of $QIC (\Pi, \mu)$ and $P_e^\mu (\Pi)$ in $\mu$, these sets are closed for all $\Pi \in P_T$.
Then, by definition of $I$, for all $\mu$ there exists $\Pi_\mu \in \T (T, \mu, \epsilon)$ such that $QIC (\Pi_\mu, \mu) \leq I + \delta$, and so $\cap_{\Pi \in P_T} A(\Pi) = \emptyset$. Since $\D_{XY}$ is compact and the sets $A (\Pi)$ are closed, we get that there exists a finite set $Q \subset P_T$ such that $\cap_{\Pi \in Q} A (\Pi) = \emptyset$. We get that for all $\mu$, there exists $\Pi_\mu \in Q$ such that $QIC (\Pi_\mu, \mu) < I + 2 \delta$ and $P_e^\mu (\Pi_\mu) < \epsilon + \delta$.
Let $r_M = \max \{ r:$ there is $\Pi \in Q$ with $r$ rounds $ \}$, then 
\begin{align*}
	I + 2 \delta & \geq \max_\mu \min_{\Pi \in Q \cap \T (T, \mu, \epsilon + \delta)} QIC (\Pi, \mu) \\
		& \geq QIC_D^{r_M} (T, \epsilon +  \delta) \\
		& \geq  (1 - \alpha) \cdot QIC^{r_M} (T, \frac{\epsilon }{\alpha} +  \frac{\delta}{\alpha}) \\
		& \geq  (1 - \alpha) \cdot QIC (T, \frac{\epsilon }{\alpha} +  \frac{\delta}{\alpha}).
\end{align*}
The result
follows by continuity of $QIC$ and by taking $\delta$ to zero.
\end{proof}

\subsection{Subadditivity}

\begin{lemma}
\label{lem:subadd}
For any two protocols $\Pi^1, \Pi^2$ with $r_1, r_2$ rounds, respectively,
there exists a $r$-round protocol $\Pi_2$, satisfying $\Pi_2 = \Pi^1 \otimes \Pi^2, r = \max (r_1, r_2)$, such that the following holds
for any joint input state $\rho_{12} \in \mathcal{D} (A_{in}^1 \otimes B_{in}^1 \otimes A_{in}^2 \otimes B_{in}^2)$:
\begin{align*}
QIC (\Pi_2, \rho_{12}) \leq QIC(\Pi^1, \rho_1) + QIC(\Pi^2, \rho_2),
\end{align*}
with $\rho_1 = \Tr{A_{in}^2 B_{in}^2}{\rho_{12}}$ and $\rho_2 = \Tr{A_{in}^1 B_{in}^1}{\rho_{12}}$.
\end{lemma}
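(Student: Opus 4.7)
The natural candidate is $\Pi_2 := \Pi^1 \otimes \Pi^2$: run the two protocols in parallel on disjoint registers, padding the shorter one out to $r = \max(r_1, r_2)$ rounds by trivial empty-message rounds that contribute nothing to any information quantity. The plan is to fix a purification $\ket{\rho_{12}}^{A_{in}^1 B_{in}^1 A_{in}^2 B_{in}^2 R}$ of $\rho_{12}$ and prove round-by-round that the per-round contribution to $QIC(\Pi_2, \rho_{12})$ is upper bounded by the sum of the corresponding per-round contributions to $QIC(\Pi^1, \rho_1)$ and $QIC(\Pi^2, \rho_2)$; summing over rounds then yields the lemma.

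For a typical odd round $i$, writing $C_i = C_i^1 C_i^2$ and $B_i = B_i^1 B_i^2$, the goal reduces to
\begin{align*}
I(C_i^1 C_i^2; R \mid B_i^1 B_i^2) \leq I(C_i^1; R_1' \mid B_i^1)_{\sigma_1} + I(C_i^2; R_2' \mid B_i^2)_{\sigma_2},
\end{align*}
where $\sigma_j$ is the marginal state of $\Pi^j$'s registers when $\Pi^j$ is run in isolation on $\rho_j$ and $R_j'$ is any purification register for $\rho_j$; the even rounds are symmetric with $A$'s in place of $B$'s. The two ingredients are: (a) because $\Pi^1$ and $\Pi^2$ act on disjoint registers, the marginal of the combined pure state on $\Pi^j$'s registers equals $\sigma_j$, so by the invariance-under-purification identity (\ref{eq:pure}) the right-hand side rewrites as $I(C_i^1; R A_i^2 B_i^2 C_i^2 \mid B_i^1) + I(C_i^2; R A_i^1 B_i^1 C_i^1 \mid B_i^2)$, both now evaluated on the combined state; and (b) expanding the left-hand side by the chain rule as $I(C_i^1; R \mid B_i^1 B_i^2) + I(C_i^2; R \mid C_i^1 B_i^1 B_i^2)$, I bound each summand by the corresponding rewritten term using the identity $I(X;YZ\mid W) = I(X;Y\mid W) + I(X;Z\mid YW)$ together with non-negativity of conditional mutual information.

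Concretely, for the first summand I expect $I(C_i^1; R \mid B_i^1 B_i^2) \leq I(C_i^1; R A_i^2 C_i^2 \mid B_i^1 B_i^2) \leq I(C_i^1; R A_i^2 B_i^2 C_i^2 \mid B_i^1)$, where the first step enlarges the information side and the second pulls $B_i^2$ from conditioning to information via the identity $I(C_i^1; R A_i^2 B_i^2 C_i^2 \mid B_i^1) = I(C_i^1; B_i^2 \mid B_i^1) + I(C_i^1; R A_i^2 C_i^2 \mid B_i^1 B_i^2)$. For the second summand the analogous move $I(C_i^2; R \mid C_i^1 B_i^1 B_i^2) \leq I(C_i^2; R A_i^1 \mid C_i^1 B_i^1 B_i^2) \leq I(C_i^2; R A_i^1 B_i^1 C_i^1 \mid B_i^2)$ does the job. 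Each step is one line of chain rule plus strong subadditivity, so I do not anticipate any real obstacle beyond bookkeeping; compared with Lemma~\ref{lem:add1} for product inputs (which gives equality), the only new ingredient is the appeal to (\ref{eq:pure}) to re-express the single-protocol terms inside the combined pure state, and the inequality arises precisely because for correlated inputs the chain-rule remainders $I(C_i^1; B_i^2 \mid B_i^1)$ and $I(C_i^2; B_i^1 C_i^1 \mid B_i^2)$ are generically positive rather than zero.
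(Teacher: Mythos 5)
Your proposal is correct and follows essentially the same route as the paper: define $\Pi_2 = \Pi^1\otimes\Pi^2$, re-express each single-protocol term via the purification-invariance identity (\ref{eq:pure}) as a conditional mutual information with $R$ together with the other protocol's registers on the information side, and then compare term-by-term with the chain-rule expansion of $I(C_i^1C_i^2;R\mid B_i^1B_i^2)$, discarding non-negative remainders. The paper's proof is the same computation with the bookkeeping made explicit (including the shared entanglement registers $T_A^2T_B^2$ in the purification and the asymmetric chain-rule split that places $C_i^1$ in the conditioning for the $\Pi^2$ terms, exactly as in your second summand).
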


\begin{proof}
Given protocols $\Pi^1$ and $\Pi^2$, we assume without loss of generality that $r_1 \geq r_2$, and we define the protocol $\Pi_2$ in the following way.

\begin{enumerate}
\item Run protocols $\Pi^1, \Pi^2$ in parallel for $r_2$ rounds, on corresponding input registers $A_{in}^1, B_{in}^1, A_{in}^2, B_{in}^2$ until $\Pi^2$ has finished. 
\item Finish running protocol $\Pi^1$
\item Take as output the output registers $A_{out}^1, B_{out}^1, A_{out}^2, B_{out}^2$ of both $\Pi^1$ and $\Pi^2$.
\end{enumerate}

It is clear that the channel that $\Pi_2$ implements is $\Pi_2 = \Pi^1 \otimes \Pi^2$,
and the number of rounds satisfies $r = \max (r_1, r_2)$,
so is left to analyze its quantum information cost on input $ \rho_{12}$.
Let $R_{12}$ be a purifying register such that $\rho_{12}^{A_{in}^1 B_{in}^1 A_{in}^2 B_{in}^2 R_{12}}$ is a pure state.
Also, denote the purified joint state in round $i$ as $(\rho_{12}^i)^{A_i^1 B_i^1 C_i^1 A_i^2 B_i^2 C_i^2 R_{12}}$, and the local state for protocol $\Pi^1$ as 
\begin{align}
	(\rho_1^i)^{A_i^1 B_i^1 C_i^1} = \Tr{A_i^2 B_i^2 C_i^2 R_{12}}{(\rho_{12}^i)^{A_i^1 B_i^1 C_i^1 A_i^2 B_i^2 C_i^2 R_{12}}},
\end{align}
and similarly for that of protocol $\Pi^2$. Notice that for all $i$, $(\rho_1^i)^{A_i^1 B_i^1 C_i^1}$ is purified by $(\rho_1^i)^{A_i^1 B_i^1 C_i^1 A_{in}^2 B_{in}^2 R_{12} } \otimes \phi_2^{T_A^2 T_B^2}$, with $ A_{in}^2 B_{in}^2 R_{12}$ the registers of state $\rho_{12}$ before application of the unitaries corresponding to $\Pi^1$, and $\phi_2$ is the pure entangled state used in $\Pi_2$.
If we denote, for $i \geq r_2 + 1, A_i^2 = A_{out}^2 \otimes (A^{\prime})^2, B_{i}^2 = B_{out}^2 \otimes (B^{\prime})^2$,
then by the definition of QIC and application of chain rule,
\begin{align*}
		2 \cdot QIC (\Pi_2, \rho_{12})  
		& = \sum_{i = 1, i~odd}^{r_2 } I(C_{i}^1 C_{i}^2 ; R_{12} | B_i^1 B_i^2)_{\rho_{12}} + \sum_{i = 1, i~even}^{r_2 } I(C_{i}^1 C_{i}^2 ; R_{12} | A_i^1 A_i^2)_{\rho_{12}} \\
		&+ \sum_{i = r_2 + 1, i~odd}^{r_1 } I(C_{i }^1  ; R_{12} | B_{i}^1  B_{i}^2)_{\rho_{12}} + \sum_{i = r_2 + 1, i~even}^{r_1 } I(C_{i }^1 ; R_{12} |  A_{i}^1  A_{i}^2)_{\rho_{12}} \\
		& = \sum_{i = 1, i~odd}^{r_2 } I(C_{i}^2 ; R_{12} | B_i^1 B_i^2 C_{i}^1 )_{\rho_{12}} + \sum_{i = 1, i~even}^{r_2 } I(C_{i}^2 ; R_{12} | A_i^1 A_i^2 C_{i}^1)_{\rho_{12}} \\
		& +   \sum_{i = 1, i~odd}^{r_1 } I(C_{i}^1 ; R_{12} | B_i^1 B_i^2)_{\rho_{12}} + \sum_{i = 1, i~even}^{r_1 } I(C_{i}^1 ; R_{12} | A_i^1 A_i^2)_{\rho_{12}}.
\end{align*} 

Now for protocol $\Pi^1$, as noted above, the registers $A_{in}^2 B_{in}^2 R_{12} T_A^2 T_B^2$ purify $(\rho_1^i)^{A_i^1 B_i^1 C_i^1}$ for all $i$, so

\begin{align*}
	2 \cdot QIC (\Pi^1, \rho_{1})  
		& = \sum_{i=1, i~odd}^{r_1 } I(C_{i}^1  ; A_{in}^2 B_{in}^2 R_{12} T_A^2 T_B^2 | B_i^1 )_{\rho_{1}} + \sum_{i=1, i~even}^{r_1 } I(C_{i}^1  ; A_{in}^2 B_{in}^2 R_{12} T_A^2 T_B^2 | A_i^1 )_{\rho_{1}} \\
		& = \sum_{i = 1, i~odd}^{r_1 } I(C_{i}^1  ; A_{i}^2 B_i^2 C_{i}^2 R_{12}  | B_i^1 )_{\rho_{12}} +  \sum_{i = 1, i~even}^{r_1 } I(C_{i}^1  ; A_{i}^2 B_{i}^2 C_{i}^2 R_{12}  | A_i^1 )_{\rho_{12}} \\
		& = \sum_{i=1, i~odd}^{r_1 } I(C_{i}^1  ; B_i^2  | B_i^1 )_{\rho_{12}} + \sum_{i=1, i~even}^{r_1 } I(C_{i}^1  ; A_i^2  | A_i^1 )_{\rho_{12}} \\
 		& + \sum_{i=1, i~odd}^{r_1 } I(C_{i}^1  ;  R_{12}  | B_i^1 B_i^2 )_{\rho_{12}} + \sum_{i=1, i~even}^{r_1 } I(C_{i}^1 ;  R_{12}  | A_i^1 A_i^2 )_{\rho_{12}} \\
		& + \sum_{i=1, i~odd}^{r_1 } I(C_{i}^1  ; A_{i}^2  C_{i}^2   | B_i^1 B_i^2 R_{12} )_{\rho_{12}} + \sum_{i=1, i~even}^{r_1 } I(C_{i}^1 ;  B_{i}^2 C_{i}^2   | A_i^1 A_i^2 R_{12} )_{\rho_{12}} \\
		& \geq \sum_{i=1, i~odd}^{r_1 } I(C_{i}^1  ;  R_{12}  | B_i^1 B_i^2 )_{\rho_{12}} + \sum_{i=1, i~even}^{r_1 } I(C_{i}^1 ;  R_{12}  | A_i^1 A_i^2 )_{\rho_{12}},
\end{align*}
in which the first equality is by definition, the second is by isometric invariance of the conditional quantum mutual information (CQMI), the third by the chain rule for CQMI, and the inequality is by non-negativity of CQMI.
Similarly for protocol $\Pi^2$, with a slightly different application of the chain rule, we get
\begin{align*}
	2 \cdot QIC (\Pi^2, \rho_{2})  
		& = \sum_{i=1, i~odd}^{r_2 }  I(C_{i}^2  ; A_{in}^1 B_{in}^1 R_{12} T_A^1 T_B^1 | B_i^2 )_{\rho_{2}}  + \sum_{i=1, i~even}^{r_2 } I(C_{i}^2  ; A_{in}^1 B_{in}^1 R_{12} T_A^1 T_B^1 | A_i^2 )_{\rho_{2}} \\
		& = \sum_{i=1, i~odd}^{r_2 } I(C_{i}^2  ; A_{i}^1 B_i^1 C_{i}^1 R_{12}  | B_i^2 )_{\rho_{12}} + \sum_{i=1, i~even}^{r_2 } I(C_{i}^2 ; A_i^1 B_{i}^1 C_{i}^1 R_{12}  | A_i^2 )_{\rho_{12}} \\
		& = \sum_{i=1, i~odd}^{r_2 }  I(C_{i}^2  ; B_i^1 C_{i}^1 | B_i^2 )_{\rho_{12}}   + \sum_{i=1, i~even}^{r_2 } I(C_{i}^2 ; A_i^1 C_{i}^1  | A_i^2 )_{\rho_{12}} \\
		& + \sum_{i=1, i~odd}^{r_2 } I(C_{i}^2  ;  R_{12}  | B_i^1 B_i^2 C_{i}^1 )_{\rho_{12}} + \sum_{i=1, i~even}^{r_2 } I(C_{i}^2 ;  R_{12}  | A_i^1 A_i^2 C_{i}^1 )_{\rho_{12}} \\
		& + \sum_{i=1, i~odd}^{r_2 } I(C_{i}^2  ; A_{i}^1     | B_i^1 B_i^2 C_{i}^1 R_{12} )_{\rho_{12}} + \sum_{i=1, i~even}^{r_2 } I(C_{i}^2 ;  B_{i}^2    | A_i^1 A_i^2 C_{i}^1 R_{12} )_{\rho_{12}} \\
		& \geq \sum_{i=1, i~odd}^{r_2 } I(C_{i}^2  ;  R_{12}  | B_i^1 B_i^2 C_{i}^1 )_{\rho_{12}} + \sum_{i=1, i~even}^{r_2 } I(C_{i}^2 ;  R_{12}  | A_i^1 A_i^2 C_{i}^1 )_{\rho_{12}}.
\end{align*}
The result then follows by comparing terms.
\end{proof}

\subsection{Reducing the Error for Functions}

Similarly to communication, it is possible to reduce the error when computing functions without increasing too much the information.

\begin{lemma} \label{lem:error:reduction}
	For any function $f$ and error parameter $\epsilon>0$, the following holds:
\begin{align*}
	QIC (f, \epsilon) \le O \big(\log 1/\epsilon \cdot  QIC(f, 1/3)\big).
\end{align*}
\end{lemma}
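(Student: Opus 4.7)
The plan is the standard error-reduction-via-majority-voting template, adapted to quantum protocols with care for information-cost additivity across parallel copies that share a classical input. Fix any $\delta > 0$ and a protocol $\Pi_0 \in \T(f, 1/3)$ with $\max_\mu QIC(\Pi_0, \mu) \leq QIC(f, 1/3) + \delta$, and set $k = \Theta(\log 1/\epsilon)$. Since the inputs are classical, Alice copies $x$ into $k$ fresh input registers and Bob copies $y$ likewise. The protocol $\Pi_\epsilon$ runs $k$ independent copies of $\Pi_0$ in parallel, each using its own pre-shared entanglement and its own private memory; the $j$-th round of $\Pi_\epsilon$ is the concatenation of the $j$-th rounds of all $k$ copies. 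Each copy produces outputs $(z_A^{(i)}, z_B^{(i)})$, and Alice and Bob respectively output $\mathrm{maj}_i\, z_A^{(i)}$ and $\mathrm{maj}_i\, z_B^{(i)}$, computed unitarily in the final output-extraction step.

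For error, fix any input $(x, y)$: since the $k$ copies use independent entanglement, their classical outputs are mutually independent, each correct with probability $\geq 2/3$; a Chernoff bound then gives $\Pr[\text{majority wrong}] \leq \epsilon$, so $\Pi_\epsilon \in \T(f, \epsilon)$.

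The heart of the proof is the information-cost bound $QIC(\Pi_\epsilon, \mu) \leq k \cdot QIC(\Pi_0, \mu)$ for every $\mu$, which combined with the choice of $\Pi_0$ and $\delta \to 0$ yields the lemma. Denote by $C_j^{(\ell)}$ the $\ell$-th copy's communication register in round $j$, by $B_j^{(\ell)}$ Bob's local state in that copy, and form the composites $C_j = C_j^{(1)} \otimes \cdots \otimes C_j^{(k)}$ and $B_j = B_j^{(1)} \otimes \cdots \otimes B_j^{(k)}$ (analogously for Alice on even rounds). The chain rule gives
\begin{align*}
	I(C_j; R \,|\, B_j) = \sum_{\ell=1}^{k} I\!\left(C_j^{(\ell)}; R \,\big|\, B_j,\, C_j^{(1)}, \ldots, C_j^{(\ell-1)}\right).
\end{align*}
The key observation is that, conditioned on the purification $R$ (which holds a classical copy of the shared input $(x, y)$), the $k$ copies are quantumly independent, because each uses its own fresh entanglement and its own classical copy of the input. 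Combining this conditional independence with data processing bounds the $\ell$-th summand by $I(C_j^{(\ell)}; R \,|\, B_j^{(\ell)})$, exactly in the spirit of the chain-rule and strong-subadditivity book-keeping used in the proof of Lemma \ref{lem:subadd}. Summing over $j$ and $\ell$ yields the desired per-copy additivity.

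The main obstacle is precisely this per-round, per-copy decomposition: unlike in Lemma \ref{lem:subadd}, the copies share an input, so we cannot directly invoke subadditivity for independent inputs. However, the classicality of the input --- which permits $R$ and each copy to hold its own literal copy --- turns the apparent sharing into a conditional independence given $R$, at which point the same chain-rule plus non-negativity arguments as in Lemma \ref{lem:subadd} close the bound.
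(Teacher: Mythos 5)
Your proof is correct and follows essentially the same route as the paper: parallel repetition on copied inputs, a majority vote, a Chernoff bound for the error, and subadditivity of quantum information cost across the copies. The only divergence is that you re-derive the subadditivity step by hand under the impression that Lemma~\ref{lem:subadd} only covers independent inputs --- in fact that lemma is stated for an \emph{arbitrary} joint input state $\rho_{12}$ and bounds $QIC(\Pi_2,\rho_{12})$ by the QICs of the two marginals, so it applies verbatim (by induction) to the perfectly correlated input $\mu_n$, which is exactly how the paper closes the argument; your chain-rule-plus-nonnegativity sketch is essentially a restatement of that lemma's proof, though the mechanism is purification invariance of CQMI rather than any conditional independence given $R$.
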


\begin{proof}
Given $\delta > 0$, let $\Pi$ be a protocol computing $f$ correctly except with 
probability $1/3$ on every input and satisfying $QIC (\Pi, \mu ) \leq QIC(f, 1/3) + \delta$ for 
all $\mu$. Let $n \in O (\log 1/\epsilon)$ be given by the Chernoff bound such that 
protocol $\Pi_n$ running $\Pi$ $n$ times in parallel as per Lemma~\ref{lem:subadd}, 
with each input being a copy of the instance to $f$, and taking a majority 
vote (with arbitrary tie-breaking) computes $f$ correctly except with 
probability $\epsilon$ on every input. 
This $n$ can be chosen independently of $\delta$.
We now argue on the quantum 
information cost of $\Pi_n$.
Consider an arbitrary distribution $\mu$ for $f$, and let $\mu_n$ be 
the distribution once the $n$ copies have been made.
If we denote the marginal for the $i$-th copy by $\mu^i$, 
then $\mu^i = \mu$. By Lemma \ref{lem:subadd} and an easy induction, we then get that 
\begin{align*}
	QIC (f, \epsilon) & \leq QIC (\Pi_n, \mu_n) \\
		& \leq n QIC (\Pi, \mu) \\
		&\leq n (QIC (f, 1/3) + \delta).
\end{align*}
 The result follows by taking $\delta$ to $0$.
\end{proof}

\subsection{Reduction from DISJ to AND}

With the following definition, the above proof also establishes the following corollary.

\begin{definition}
For all $r \in \mathbb{N}, \epsilon \in [0, 1]$,
\begin{align*}
	QIC_0^r (AND, \epsilon) = \inf_{\Pi \in \T^r (AND, \epsilon)} \max_{\mu_0} QIC (\Pi, \mu_0),
\end{align*}
in which the maximum ranges over all $\mu_0$ satisfying $\mu_0 (1, 1 ) = 0$.
\end{definition}

\begin{corollary}
\label{cor:reduceAND}
	For any $\epsilon>0$ and $r \in \mathbb{N}$,
\begin{align*}
	QIC_0^r (AND, \epsilon) \le O \big(\log 1/\epsilon \cdot  QIC_0^r (AND, 1/3)\big).
\end{align*}
\end{corollary}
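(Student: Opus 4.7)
The plan is to follow the proof of Lemma~\ref{lem:error:reduction} essentially verbatim, checking at each step that the two additional restrictions appearing in $QIC_0^r$—the round bound $r$ and the admissibility constraint $\mu_0(1,1)=0$—are preserved. Given $\delta>0$, I would first pick a protocol $\Pi\in\T^r(AND,1/3)$ satisfying $QIC(\Pi,\mu_0)\le QIC_0^r(AND,1/3)+\delta$ for every $\mu_0$ with $\mu_0(1,1)=0$; such a $\Pi$ exists by the definition of $QIC_0^r$.

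Next, let $n=\Theta(\log 1/\epsilon)$ be chosen (via a Chernoff bound on a sum of $n$ independent Bernoulli trials each succeeding with probability $2/3$) large enough that running $\Pi$ in parallel $n$ times on identical copies of the input $(x,y)$ and then outputting a local majority vote computes $AND$ with worst-case error at most $\epsilon$. Combining the $n$ copies via repeated application of Lemma~\ref{lem:subadd} yields a single protocol $\Pi_n$; since every copy uses at most $r$ rounds, the combined protocol still uses $\max(r,\ldots,r)=r$ rounds, and the final majority vote is absorbed into Alice's and Bob's last local isometries, adding no communication or rounds. Hence $\Pi_n\in\T^r(AND,\epsilon)$.

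Finally, for any admissible single-copy distribution $\mu_0$, the input fed into $\Pi_n$ is concentrated on the diagonal $((x,y),\ldots,(x,y))$ with weight $\mu_0(x,y)$, so its marginal on every copy is exactly $\mu_0$; in particular the admissibility constraint $\mu_0(1,1)=0$ continues to hold on each marginal. Subadditivity (Lemma~\ref{lem:subadd}), applied inductively as in the proof of Lemma~\ref{lem:error:reduction}, then gives
\[
QIC(\Pi_n, \mathrm{diag}(\mu_0)) \;\le\; \sum_{i=1}^n QIC(\Pi,\mu_0) \;\le\; n\bigl(QIC_0^r(AND,1/3)+\delta\bigr).
\]
Taking the maximum over admissible $\mu_0$, then $\delta\to 0$, and recalling that $n=O(\log 1/\epsilon)$ yields the stated bound.

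No substantial obstacles are anticipated: the argument is a direct adaptation of Lemma~\ref{lem:error:reduction}. The only points that require a moment's care are (i) verifying that the round count is preserved (which follows because $\max(r,r)=r$ in the subadditivity lemma, and local classical majority voting is free), and (ii) verifying that the diagonal embedding preserves admissibility on each marginal (which is immediate). Neither requires additional quantum-information-theoretic input beyond what is already developed in the excerpt.
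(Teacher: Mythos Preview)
Your proposal is correct and follows essentially the same approach as the paper: the paper explicitly states that the proof of Lemma~\ref{lem:error:reduction} also establishes this corollary, and your write-up spells out precisely the two extra verifications needed (preservation of the round bound under parallel repetition, and that the per-copy marginals remain supported off $(1,1)$).
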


We provide a slight variant of the argument of \cite{Tou14} to obtain a low information protocol for AND from a protocol for disjointness.

\begin{lemma} \label{disj_to_and}
For any $n, r, \epsilon$ and $\mu_0$ such that $\mu_0 (1, 1) = 0$,
\begin{align*}
	 \inf_{\Pi_{A} \in \T^r (AND, \epsilon)}  QIC (\Pi_{A}, \mu_0) \leq  \inf_{\Pi_{D} \in \T^r (DISJ_n, \epsilon)} \frac{1}{n} QIC (\Pi_D, \mu_0^{\otimes n}).
\end{align*}
\end{lemma}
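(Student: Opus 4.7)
The plan is a direct-sum-style embedding. Given any $r$-round $\Pi_D \in \T^r(DISJ_n, \epsilon)$, for each coordinate $i \in [n]$ I will construct an $r$-round protocol $\Pi_A^{(i)} \in \T^r(AND, \epsilon)$ with the additive property
\begin{align*}
\sum_{i=1}^n QIC(\Pi_A^{(i)}, \mu_0) = QIC(\Pi_D, \mu_0^{\otimes n}).
\end{align*}
Averaging then produces some $i^*$ with $QIC(\Pi_A^{(i^*)}, \mu_0) \le \frac{1}{n} QIC(\Pi_D, \mu_0^{\otimes n})$, and taking the infimum over $\Pi_D$ gives the lemma.

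The embedding itself is the natural one. On input $(x,y)\sim\mu_0$, Alice and Bob use pre-shared entanglement to locally sample the remaining $n-1$ coordinates from $\mu_0^{\otimes(n-1)}$ (Alice keeping the $X$-halves, Bob the $Y$-halves), place $(x,y)$ at position $i$, run $\Pi_D$ on the resulting $n$-bit instance, and negate its output. The hypothesis $\mu_0(1,1)=0$ is essential here: no sampled coordinate is ever a $(1,1)$ pair, so the disjointness value of the composite instance equals $1 - (x\wedge y)$, and negating recovers $AND(x,y)$. Worst-case error over $(x,y)$ stays at most $\epsilon$ because $\Pi_D$ errs with probability at most $\epsilon$ on every deterministic $n$-bit input, and averaging over the sampled completions preserves the bound. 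Both the sampling (done before the first message) and the final NOT (a local classical isometry on the output registers, performed after the last message) respect the $r$-round structure, and the NOT leaves $QIC$ unchanged since local isometries applied to output registers after all communication do not alter any of the conditional-mutual-information terms.

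The information-cost identity is obtained by iterating Lemma~\ref{lem:add2} along the tensor factorization $\mu_0^{\otimes n} = \mu_0 \otimes \mu_0^{\otimes(n-1)}$. Peeling off coordinate $i$ produces exactly the protocol that hardcodes $\mu_0^{\otimes(n-1)}$ in the remaining positions, runs $\Pi_D$, and traces out their outputs---which is $\Pi_A^{(i)}$ up to the trivial output negation. Lemma~\ref{lem:add2} then yields $QIC(\Pi_D, \mu_0^{\otimes n}) = QIC(\Pi_A^{(i)}, \mu_0) + QIC(\Pi_{-i}, \mu_0^{\otimes(n-1)})$, and $n-1$ further applications of the same lemma give the displayed additive sum. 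The main subtlety to check carefully is that the marginal protocols existentially produced by Lemma~\ref{lem:add2} genuinely coincide, on the nose, with the explicit embedding $\Pi_A^{(i)}$, so that the additivity certifies valid $r$-round AND protocols with the claimed $\epsilon$-error guarantee rather than merely some auxiliary channel with the right $QIC$. Once this correspondence is verified coordinate by coordinate, the averaging step closes the argument.
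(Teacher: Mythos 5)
Your proposal is correct and is essentially the paper's own argument: the paper also peels off coordinates via Lemma~\ref{lem:add2} (whose marginal protocols are exactly your hardcoded embeddings $\Pi_A^{(i)}$, with $\mu_0(1,1)=0$ guaranteeing the composite instance computes $\neg AND$ of the embedded coordinate with worst-case error $\epsilon$), merely packaging the decomposition as an induction $I_n \geq I_1 + I_{n-1}$ rather than as a full additive sum followed by averaging. The two presentations are interchangeable.
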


\begin{proof}
Let $I_n = \inf_{\Pi_{D} \in \T^r (DISJ_n, \epsilon)} QIC (\Pi_D, \mu_0^{\otimes n})$. We prove the result by induction on $n$. The base case is trivial since $DISJ_1 = \neg AND$, and so a protocol to compute $DISJ_1$ with error $\epsilon$ can be used to compute $AND$ with error $\epsilon$ and vice-versa. In particular, we get $I_1 = \inf_{\Pi_{A} \in \T^r (AND, \epsilon)}  QIC (\Pi_{A}, \mu_0)$.
For the induction, suppose the result holds for $DISJ_{n-1}$, we will use Lemma~\ref{lem:add2} to go from $DISJ_n$ to $DISJ_1$ and $DISJ_{n-1}$. Indeed, given $\delta > 0$ and $\Pi_D$ computing $DISJ_n$ with error $\epsilon$ and satisfying $QIC (\Pi_D, \mu_0^{\otimes n}) \leq I_n + \delta$, we can use Lemma~\ref{lem:add2} with $\rho_1 = \mu_0, \rho_2 = \mu_0^{\otimes n-1}$ and then  it is clear that $\Pi^1$ computes $DISJ_1$ with error $\epsilon$ and $\Pi^2$ computes $DISJ_{n-1}$ with error $\epsilon$. We get
\begin{align*}
	I_n + \delta & \geq QIC (\Pi_D, \mu_0^{\otimes n}) \\
			& = QIC (\Pi^1, \mu_0) + QIC (\Pi^2, \mu_0^{\otimes n-1}) \\
			& \geq I_1 + I_{n-1} \\
			& \geq n I_1.
\end{align*}
\end{proof}

The following lemma is very similar to Theorem \ref{thm:minimax}. The only difference is that the distributions we consider are restricted and on the right hand side the error of the protocol is measured in the worst case. Since the error is worst case, there is no loss in the error, and the payoff function would be simply $P_B(\Pi, \mu) = QIC(\Pi, \mu)/I$. 

\begin{lemma}
\label{andminimax}
\[
QIC_0^r (AND, \epsilon) =  \max_{\mu_0,\mu_0 (1, 1 ) = 0} \inf_{\Pi \in \T^r (AND, \epsilon)} QIC (\Pi, \mu_0)
\]
\end{lemma}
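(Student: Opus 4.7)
The plan is to prove the two inequalities separately. The direction $QIC_0^r(AND,\epsilon) \geq \max_{\mu_0} \inf_{\Pi} QIC(\Pi,\mu_0)$ is the trivial side of minimax: for any fixed $\mu_0^*$ with $\mu_0^*(1,1)=0$ and any $\Pi \in \T^r(AND,\epsilon)$, we have $\max_{\mu_0} QIC(\Pi,\mu_0) \geq QIC(\Pi,\mu_0^*)$, and taking the infimum over $\Pi$ yields $QIC_0^r(AND,\epsilon) \geq \inf_{\Pi} QIC(\Pi,\mu_0^*)$; maximizing over $\mu_0^*$ gives the inequality.

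For the reverse inequality, I would mimic the proof of Theorem~\ref{thm:minimax}, exploiting the crucial simplification that on both sides of the claimed identity the candidate protocols already satisfy the worst-case error constraint $\Pi \in \T^r(AND,\epsilon)$. Consequently, any public convex combination of candidate protocols obtained via Lemma~\ref{lem:conv} retains the worst-case error guarantee, and no $\alpha$-rescaling of the error is required; the payoff in the induced zero-sum game is simply $QIC(\Pi,\mu_0)$, with no additive error term.

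Concretely, let $I = \max_{\mu_0} \inf_{\Pi \in \T^r(AND,\epsilon)} QIC(\Pi,\mu_0)$, fix a small $\delta > 0$, and choose (using Lemma~\ref{continuity:qic}) a $\delta'$ small enough that $|QIC(\Pi,\mu_0) - QIC(\Pi,\mu_0')| \leq \delta$ for all $r$-round $\Pi$ and all $\mu_0,\mu_0'$ that are $\delta'$-close. Since the set of $\mu_0$ supported on $\{(0,0),(0,1),(1,0)\}$ is compact, fix a finite $\delta'$-net $N$ of it, and for each $\mu_0 \in N$ a protocol $\Pi_{\mu_0} \in \T^r(AND,\epsilon)$ with $QIC(\Pi_{\mu_0},\mu_0) \leq I + \delta$. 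Consider the finite zero-sum game where player A picks $\Pi \in P_N := \{\Pi_{\mu_0}\}_{\mu_0 \in N}$ and player B picks $\mu_0 \in N$, with B's payoff $QIC(\Pi,\mu_0)$. I would argue that the value for B is at most $I + 2\delta$: for any mixed strategy $\nu_B$, let $\bar\mu_0 = \mathbb{E}_{\nu_B}[\mu_0]$, which still satisfies $\bar\mu_0(1,1)=0$ by convexity; pick $\mu_0' \in N$ at distance $\leq \delta'$ from $\bar\mu_0$ and let A play $\Pi_{\mu_0'}$. Then by Lemma~\ref{concavity} (concavity in input) and the continuity bound,
\[
\mathbb{E}_{\nu_B}\bigl[QIC(\Pi_{\mu_0'},\mu_0)\bigr] \leq QIC(\Pi_{\mu_0'},\bar\mu_0) \leq QIC(\Pi_{\mu_0'},\mu_0') + \delta \leq I + 2\delta.
\]

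By the minimax theorem, there is a mixed strategy $\nu_A$ over $P_N$ such that $\mathbb{E}_{\Pi \sim \nu_A}[QIC(\Pi,\mu_0)] \leq I + 2\delta$ for every $\mu_0 \in N$. Apply Lemma~\ref{lem:conv} to publicly average over $\nu_A$, producing a single $r$-round protocol $\bar\Pi$ with $QIC(\bar\Pi,\mu_0) = \mathbb{E}_{\nu_A}[QIC(\Pi,\mu_0)]$. Because every protocol in the support of $\nu_A$ lies in $\T^r(AND,\epsilon)$, so does $\bar\Pi$. Extending the bound from $N$ to an arbitrary $\mu_0$ with $\mu_0(1,1)=0$ by one further continuity step gives $QIC(\bar\Pi,\mu_0) \leq I + 3\delta$, hence $QIC_0^r(AND,\epsilon) \leq I + 3\delta$; letting $\delta \to 0$ finishes the proof. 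The only point requiring care is bookkeeping the $\delta$-rounding errors and using that uniform input-continuity of $QIC$ in Lemma~\ref{continuity:qic} holds across all $r$-round protocols simultaneously, which is what makes the single $\delta'$-net argument work; but structurally no new ideas beyond Theorem~\ref{thm:minimax} are needed, and the absence of error rescaling makes the argument cleaner.
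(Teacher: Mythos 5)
Your proof is correct and follows exactly the route the paper intends: the paper gives only a one-sentence sketch (adapt the minimax argument of Theorem~\ref{thm:minimax}, noting that with worst-case error the payoff reduces to $QIC(\Pi,\mu_0)/I$ and no error rescaling is needed), and your write-up is a faithful and correctly detailed elaboration of that sketch, including the easy max-min direction and the observation that public mixtures preserve the worst-case error guarantee.
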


\begin{lemma}
\label{lem:disjvsand1}
For all $r, n \in \mathbb{N}$,
\begin{align*}
	QCC^r (DISJ_n, 1/3) \geq n \cdot QIC_0^r (AND, 1/3)
\end{align*}
\end{lemma}

\begin{proof}
The result follows from the following chain of inequality:
\begin{align*}
	QCC^r (DISJ_n, 1/3) &\geq QIC^r (DISJ_n, 1/3) \\
				& \geq \max_{\mu_0} \inf_{\Pi_D \in \T^r (DISJ_n, 1/3)} QIC (\Pi_D, \mu_0^{\otimes n}) \\
				& \geq \max_{\mu_0} \inf_{\Pi_A \in \T^r (AND, 1/3)} n \cdot QIC (\Pi_A, \mu_0) \\
				& \geq n \cdot QIC_0^r (AND, 1/3).
\end{align*}
The first inequality is by Lemma~\ref{lem:qicvsqccpf}, the second since,
on the r.h.s., the maximization is over a smaller set of product distributions with $\mu_0 (1, 1) = 0$ 
and the minimization over a larger set of protocols, the third is by Lemma~\ref{disj_to_and}, and the last is by Lemma~\ref{andminimax}.
\end{proof}

\ignore{
\section{SQDPT}
\jiemingnote{need to be moved to prelim}
\begin{theorem}{\label{thm:sdp_sherstov}}(\cite{sdp_sherstov}) Let $\eps_{\text{sh}} > 0$ be a small enough absolute constant. Then for every sign matrix $F$ , the following communication problem requires $\Omega(n GDM_{1/5}(F))$ qubits of communication $($with arbitrary entanglement$)$: Solving with probability $2^{- \eps_{\text{sh}} n}$, at least $(1-\eps_{\text{sh}}) n$ among $n$ instances of $F$. 
\end{theorem}
}

\section{Lower bound on QIC by generalized discrepancy method}

\subsection{Compression}

\begin{definition} We say that $QCC(f^k, \mu^k, \eta_1 k, \eta_2) \le C$ if there exists a protocol $\pi$ for $f^k$ s.t. $QCC(\pi) \le C$ and 
$$
Pr[\text{$\pi$ computes $\ge \eta_1 k$ coordinates correctly}] \ge 1 - \eta_2
$$
Here the probability is both over the distribution $\mu^k$ and the randomness of protocol (which includes the randomness due to quantum measurements). We don't require the protocol to declare which coordinates were computed correctly. 
\end{definition}

\begin{lemma}{\label{compression}} If there exists a protocol $\Pi$ for $f$ with error $\le \eps$ w.r.t $\mu$ s.t. $QIC(\Pi, \mu) = I$, then for all $\eps', \delta > 0$, there exists $k_0(\Pi,\mu, \eps',\delta)$ such that for all $k \ge k_0$, $QCC(f^k, \mu^k, (1- 2\eps)k, e^{-2\eps^2 k} + \eps') \le k (I + \delta)$.
\end{lemma}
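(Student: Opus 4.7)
The plan is to appeal directly to Lemma~\ref{lem:coding} (quantum information equals amortized communication) to obtain a compressed $k$-fold protocol $\Pi_k$ whose communication cost is the target $k(I+\delta)$, and then to argue that $\Pi_k$ inherits the success probability of the ideal parallel protocol $\Pi^{\otimes k}$ up to an additive $\eps'$ loss.

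First, I would invoke Lemma~\ref{lem:coding} with closeness parameter $\eps'$ and slack $\delta$: for all $k \ge k_0(\Pi,\mu,\eps',\delta)$, there is an $r$-round protocol $\Pi_k$ satisfying
\begin{align*}
\|\Pi_k((\rho^{A_{in}B_{in}R_1})^{\otimes k}) - \Pi^{\otimes k}((\rho^{A_{in}B_{in}R_1})^{\otimes k})\|_{(A_{out}B_{out}R_1)^{\otimes k}} \le \eps',
\end{align*}
with $QCC(\Pi_k) \le k(I+\delta)$. Note crucially that the $R_1$ registers in the input purification hold classical copies of $(x,y)$, so both the outputs and the inputs remain accessible for the correctness check after running either protocol.

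Second, I would analyze $\Pi^{\otimes k}$ on input $\mu^{\otimes k}$: since $\Pi$ errs with average probability at most $\eps$ on $\mu$ and the $k$ copies are independent (both in their input distributions and in the measurements realizing the output), the number $N$ of coordinates where $\Pi^{\otimes k}$ makes an error is a sum of $k$ independent Bernoulli variables, each of mean at most $\eps$. Hoeffding's inequality gives $\Pr[N \ge 2\eps k] \le e^{-2\eps^2 k}$, so $\Pi^{\otimes k}$ correctly computes at least $(1-2\eps)k$ coordinates with probability at least $1 - e^{-2\eps^2 k}$.

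Third, I would transfer this guarantee to $\Pi_k$. The event ``at least $(1-2\eps)k$ coordinates are correct'' is a binary outcome of a measurement on $(A_{out}B_{out}R_1)^{\otimes k}$: measure $R_1$ in the computational basis to recover the inputs, compute $f$ classically on each coordinate, measure the output registers, and count agreement. By monotonicity of the trace distance under quantum channels, the probabilities of this event under $\Pi_k$ and $\Pi^{\otimes k}$ differ by at most $\eps'$, so $\Pi_k$ achieves success probability at least $1 - e^{-2\eps^2 k} - \eps'$, giving the claimed bound. This is the entire argument; there is no real obstacle since the only non-trivial ingredient, the compression of $\Pi^{\otimes k}$ to near-optimal communication, is done by Lemma~\ref{lem:coding}, and the remaining steps are a standard Chernoff bound together with data processing for trace distance.
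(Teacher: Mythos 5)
Your proposal is correct and follows essentially the same route as the paper: invoke Lemma~\ref{lem:coding} to get $\Pi_k$ with $QCC(\Pi_k)\le k(I+\delta)$, apply a Chernoff/Hoeffding bound to the independent per-coordinate errors of $\Pi^{\otimes k}$, and transfer the success guarantee to $\Pi_k$ via monotonicity of the trace distance (the paper phrases this as a total-variation bound on the joint error vectors, using closeness parameter $2\eps'$ to absorb the factor of two between trace distance and total variation, but this is only a bookkeeping difference).
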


\begin{proof} Suppose $(E_1,\ldots, E_k)$ is the vector of indicator random variables of the errors in various coordinates of $\Pi^{\otimes k}$ i.e. $E_i = 1$ if error occurred on the $i^{\text{th}}$ coordinate. Also look at $\Pi_k$ obtained from lemma \ref{lem:coding} for large enough $k$ with parameters $2\eps'$, $\delta$ and where $\rho$ is  $\mu$. Suppose $(E^{'}_1,\ldots, E^{'}_k)$ is the vector of errors for $\Pi_k$. According to lemma \ref{lem:coding}, $\Pi_k$ satisfies the following: 
\begin{align*}
\mathbb{E}_{((x_1,\ldots,x_k), (y_1,\ldots,y_k)) \sim \mu^{k}} ||\Pi_k((x_1,\ldots,x_k), (y_1,\ldots,y_k)) - \Pi^{\otimes k}((x_1,\ldots,x_k), (y_1,\ldots,y_k))||_1 \le 2 \eps'
\end{align*}
Hence it follows that 
$$
||(E_1,\ldots, E_k) -  (E^{'}_1,\ldots, E^{'}_k)||_{\text{TV}} \le \eps'
$$
Here $||P - Q||_{\text{TV}}$ is the total variation distance between the distributions $P$ and $Q$ (we are not distinguishing between random variables and their distributions). Since $\Pr[\sum_i E_i \ge 2 \eps k] \le e^{-2\eps^2 k}$ by Chernoff bounds, it follows that 
$$
\Pr \left[\sum_i E^{'}_i \ge 2 \eps k \right] \le e^{-2\eps^2 k} + \eps'
$$
which implies the lemma along with the fact that $QCC(\Pi_k) \le (I+\delta)k$.
\end{proof}

\subsection{Average case to worst case}

In this section, we prove the following lemma which turns a protocol for average case input to a protocol for worst case input.
\begin{lemma}
\label{atw}
Suppose $f_n : \{0,1\}^n \times \{0,1\}^n \rightarrow \{0,1\}$ is an arbitrary boolean function. Let $k \geq 2^{5n}$ and $\epsilon > 10k^{-0.005}$. Assume for any product input distribution $\mu^k$, there exists a protocol $\pi_{\mu^k}$ with $QCC\left(\pi_{\mu^k}\right) \leq l$ that computes at least $\left(1-\alpha\right)k$ coordinates of $f_n^k$ correctly with probability at least $\gamma$. Then there exists a protocol $\tau$ s.t. for any input $((x_1,\cdots,x_k),(y_1,\cdots,y_k))$, for any integer $c \ge 3$ and constant $\epsilon > 0$,  $\tau$ computes at least $\left(1-2^{-c/2}-c\alpha\right)k$ coordinates of $f_n^k$ correctly with probability at least $\frac{1}{2}\left(\left(\frac{\gamma}{\left(1+\epsilon\right)^k}\right)^c -2^{-2^{2-2c}k}\right)$. Also $QCC\left(\tau\right) \leq c \cdot l  + o\left(k\right)$. 
\end{lemma}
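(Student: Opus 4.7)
The plan is to build $\tau$ from the given family $\{\pi_{\mu^k}\}$ through a worst-case-to-average-case reduction with three phases: a cheap preliminary agreement on a product distribution $\mu$ close to the joint empirical distribution of the input, a symmetrization via random permutations, and $c$-fold parallel repetition followed by coordinate-wise majority.

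\textbf{Phase A (Agree on $\mu$).} Alice and Bob use public randomness to pick a uniformly random subset $S \subseteq [k]$ of size $s = \tilde{\Theta}(\sqrt{k})$, exchange the values $(x_i, y_i)$ for $i \in S$ over the classical channel, at cost $2sn = o(k)$ bits (using $k \geq 2^{5n}$), compute the sample joint empirical $\hat{\mu}_S$, and round it to a distribution $\mu$ in a prefixed $(\epsilon/2)$-net $\mathcal{N}$ of product distributions on $\{0,1\}^n \times \{0,1\}^n$. A Hoeffding-type concentration bound shows that $\mu$ is pointwise $\epsilon$-close to the true joint empirical $\hat{\mu}_{(x,y)}$ except with probability $\exp(-\Omega(\epsilon^2 s)) = 2^{-k^{\Omega(1)}}$, which, given the hypothesis $\epsilon \geq 10 k^{-0.005}$, can be absorbed into the $2^{-2^{2-2c}k}$ slack.

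\textbf{Phase B (Symmetrize and repeat).} Using further public randomness they draw $c$ independent uniform permutations $\sigma_1, \ldots, \sigma_c \in S_k$, permute their inputs accordingly, and run the assumed protocol $\pi_{\mu^k}$ in parallel on each permuted input $\sigma_j(x,y)$, obtaining outputs $z^{(1)}, \ldots, z^{(c)}$. After de-permutation $w^{(j)}_i := z^{(j)}_{\sigma_j(i)}$, the output of $\tau$ at coordinate $i$ is the (shared-randomness tie-broken) majority of $\{w^{(j)}_i\}_{j=1}^c$.

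\textbf{Analysis.} Because random permutations preserve the joint type $T^*$ of $(x,y)$, each $\sigma_j(x,y)$ is distributed uniformly over the type class $S_{T^*}$, equivalently as $\mu^k$ conditioned on type $T^*$. The method of types gives $\Pr_{\mu^k}[\text{type} = T^*] \geq (1+\epsilon)^{-k}$ after absorbing polynomial prefactors, whenever $\mu$ is $\epsilon$-close to $\hat{\mu}_{(x,y)}$ pointwise. A change-of-measure argument on the $c$-fold product of trial inputs --- viewing $(\sigma_1(x,y),\ldots,\sigma_c(x,y))$ as $\mu^{ck}$ conditioned on all $c$ trials having type $T^*$ and combining this with the $\gamma^c$ bound on the parallel protocol $\pi_{\mu^k}^{\otimes c}$ under $\mu^{ck}$ --- transfers to a lower bound of $(\gamma/(1+\epsilon)^k)^c$ on the event that every $z^{(j)}$ agrees with the truth on at least $(1-\alpha)k$ coordinates of $\sigma_j(x,y)$. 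Conditioning on this event, the total number of erroneous (trial, coordinate) votes is at most $c\alpha k$, so a pigeonhole over the $c$ votes at each coordinate bounds the number of coordinates with wrong majority by $2\alpha k \leq c\alpha k$. An additional Chernoff estimate over the random tie-breaking together with the Phase A failure probability contributes the $2^{-c/2} k$ slack in the correctness fraction and the subtractive $2^{-2^{2-2c}k}$ in the overall success probability, and the factor $1/2$ comes from a standard second-moment-style handle on the two independent sources of slack. The communication cost is $c$ invocations of $\pi_{\mu^k}$ at $l$ qubits each, plus the $o(k)$ Phase A exchange, for the claimed $c \cdot l + o(k)$.

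\textbf{Main obstacle.} The delicate point is the change-of-measure transfer in the analysis. A naive per-trial decomposition $\gamma = \sum_T p_T^\mu \gamma_T^{\pi_\mu}$ supplies no individual lower bound on $\gamma_{T^*}^{\pi_\mu}$, since a pathological $\pi_\mu$ could be useless on type $T^*$ while averaging to success $\gamma$ under $\mu^k$. To extract the $(\gamma/(1+\epsilon)^k)^c$ factor the $c$ trials must be handled \emph{jointly}: one must view the product of uniform samples on $S_{T^*}$ as importance-sampled from $\mu^{ck}$ with weight $(p_{T^*}^\mu)^c \geq (1+\epsilon)^{-ck}$ and carefully track the interaction between the success event and the conditioning event on the $c$-fold product. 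Making all the polynomial corrections, the Phase A estimation deviation, and the tie-breaking fluctuation fit within the additive $2^{-2^{2-2c}k}$ tolerance will be the main calculation.
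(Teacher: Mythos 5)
There is a genuine gap, and it sits exactly where you flag your ``main obstacle'': the change-of-measure step in Phase~B does not work, and your proposed fix (handling the $c$ trials jointly via importance sampling on the $c$-fold product) cannot repair it. Under your permutation scheme the joint input to the $c$ runs is $\nu=(\mathrm{unif}(S_{T^*}))^{\otimes c}$, which is $\mu^{ck}$ \emph{conditioned} on the exponentially small event that all $c$ blocks have exact type $T^*$. Writing $\nu(E)=\mu^{ck}(E\cap (S_{T^*})^c)/\Pr_{\mu^k}[T^*]^c$ for the success event $E$, the only handle the hypothesis gives you is $\mu^{ck}(E)\ge\gamma^c$, and $\mu^{ck}(E\cap (S_{T^*})^c)\ge \gamma^c-(1-\Pr[T^*]^c)$ is vacuous because $\Pr[T^*]^c$ is exponentially small; the protocol $\pi_{\mu^k}$ could in principle fail on every input of exact type $T^*$ while still succeeding with probability $\gamma$ under $\mu^k$. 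The paper sidesteps this entirely by sampling the $ck$ coordinates \emph{independently with replacement}, so that each run's input is distributed \emph{exactly} as $\mu^k$ for the empirical distribution $\mu$ (not as a conditional of it), and the only change of measure needed is from $\tilde\mu^k$ to $\mu^k$, which follows from the pointwise domination $\tilde\mu\le(1+\epsilon)\mu$ and costs precisely the factor $(1+\epsilon)^{-k}$ per run. The price of with-replacement sampling is that about $2^{-c/2}k$ coordinates are never sampled (handled by a Chernoff bound for negatively correlated indicators), which is the true origin of the $2^{-c/2}k$ term you instead attribute to tie-breaking.

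Your Phase~A also has a real problem: the guarantee actually needed is the one-sided multiplicative bound $\tilde\mu(x,y)\le(1+\epsilon)\mu(x,y)$ at \emph{every} cell, including cells of empirical mass as small as $1/k$, and no Hoeffding bound on a sample of size $\tilde\Theta(\sqrt k)$ can control such cells multiplicatively (you will typically never see them, yet must not overestimate them). This is why the paper splits the estimation: classical subsampling of $k^{0.99}$ coordinates handles cells of frequency at least $k^{0.02}$, while all rarer cells are counted \emph{exactly} by a Grover-based intersection-finding subroutine at cost $O(2^{2n}k^{0.52}\log k)=o(k)$ --- which is also where the hypothesis $k\ge 2^{5n}$ is actually used. (Your type-class variant would trade the pointwise bound for a KL bound $D(\hat\mu\Vert\mu)\lesssim\epsilon$, which is more forgiving of rare cells after mixing with the uniform distribution, but that route is moot given that the type-class change of measure in Phase~B fails.)
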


\begin{proof}
In this lemma, we want to construct a protocol $\tau$ which works for an arbitrary input based on protocols which work on product input distributions (product across coordinates). The main idea of the proof is that corresponding to any input $(\left(x_1,...,x_k\right)$, $\left(y_1,...,y_k\right))$ ($x_i$ and $y_i$ are inputs of a $f_n$ instance and have $n$ bits), we can associate a $\mu$, which is the empirical distribution:
\[
\mu\left(x,y\right) = \frac{\# \text{ of } i, \left(x_i,y_i\right) = \left(x,y\right)}{k}.
\]

\ignore{
It is clear that Alice and Bob can sample their own parts of the input from distribution $\mu^k$ by using public randomness. In protocol $\tau$ we will run $\pi_{\mu^k}$ on distribution $\mu^k$ for $c$ times. For $\left(x_i,y_i\right)$ that is covered in this $c$ times running of $\pi_{\mu^k}$, Alice and Bob can get the answer from running $\pi_{\mu^k}$. For the rest part of the input, we will prove that with high probability, the number of input left is less than $2^{c/2}$.  One problem of this protocol is that although Alice and Bob can sample from $\mu^k$ without communication by using public randomness, both of them don't have the full information distribution $\mu^k$. And thus they cannot run protocol $\pi_{\mu^k}$.
}

\noindent So it makes sense to construct $\tau$ from $\pi_{\mu^k}$.  The players can simulate $\mu^k$ by sampling independent coordinates from their input (with replacement). However the issue is that the players don't know $\mu$, so they have no idea what $\pi_{\mu^k}$ is. So in the actual protocol Alice and Bob will first sample some coordinates to get an estimate $\tilde{\mu}$ of $\mu$ and then run protocol $\pi_{\tilde{\mu}^k}$. The protocol $\tau$ is described in Protocol \ref{Protocol:tau}. 

\begin{Protocol}
Inputs: $(x_1,\ldots, x_k)$ and $(y_1,\ldots, y_k)$
\begin{enumerate}
\item Get an estimate $\tilde{\mu}$ of $\mu$. 
\item Alice and Bob use shared randomness to obtain random independent samples from $[k]$, $j_1,\ldots, j_{ck}$. Run the protocol $\pi_{\tilde{\mu}^k}$ $c$ times. In the $t^{\text{th}}$ iteration, the protocol is run on inputs $(x_{j_{(t-1)k + 1}}, \ldots, x_{j_{tk}}),(y_{j_{(t-1)k + 1}}, \ldots, y_{j_{tk}})$. In the process we obtain answers for various coordinates (some of the coordinates will be sampled multiple times and we will obtain multiple answers for them).
\item If a coordinate was sampled in the previous step, output the answer $\pi_{\tilde{\mu}^k}$ gave for it. If they got multiple results on one coordinate, they will output the first one. If a coordinate was not sampled, output $0$ on that coordinate.
\end{enumerate}
\caption{Protocol $\tau$}
\label{Protocol:tau}
\end{Protocol}


Now let's analyze this protocol. We first need the following two lemmas to show how to get an estimate $\tilde{\mu}$ of $\mu$. 
\begin{lemma}
\label{qcnt}
After communicating $O(k^{0.52}\log k)$ bits, for some specific input $(x,y)$, with success probability at least $1-1/k$, Alice and Bob know $\mu(x,y)$ exactly if $\mu(x,y) \cdot k  < k^{0.02}$, otherwise Alice and Bob know that $\mu(x,y) \cdot k \geq k^{0.02}$. 
\end{lemma}
\begin{proof}
In \cite{BCW}, they showed that to compute the disjointness between two inputs of length $k$, the quantum communication complexity is $O(\sqrt{k} \log k)$. The corresponding protocol has constant error rate and will find one intersection place. We will use this protocol to solve our problem by the following reduction. For each input $(x_i,y_i)$, we set $a_i = 1_{x_i = x}$ and $b_i = 1_{y_i = y}$. Then finding $(x,y)$ in the input is just like finding intersection between $a = (a_1, ... ,a_k)$ and $b=(b_1, ... ,b_k)$. Protocol \ref{Protocol:count} shows how to finish the task described in the lemma.

\begin{Protocol}
\begin{enumerate}
\item Set $a$ and $b$ as we just described. Set $cnt = 0$. 
\item Do the following step $c_1 \cdot k^{0.02}$ times, $c_1$ is some constant to be figured out in the proof:
\item Use protocol for DISJ in \cite{BCW} to find the intersection between $a$ and $b$, let it be at place $j$, Alice and Bob communicate 2 bits to check if $a_j = b_j = 1$. If it is true, then set $cnt = cnt + 1$, $a_j = 0$, $b_j = 0$. 
\end{enumerate}
\caption{Protocol count}
\label{Protocol:count}
\end{Protocol}

Let's analyze this protocol. First its quantum communication cost is clear to be $O(k^{0.52}\log k)$ as the DISJ protocol has quantum communication cost $O(\sqrt{k} \log k)$. Then for each repeat of step 3, if the DISJ protocol gives wrong answer, we will not do anything. And if the DISJ protocol gives the correct intersection, the counter will be increased by one and the intersection place will be removed and we can find other intersections. Thus we only have to show with probability at least $1- 1/k$, DISJ protocol gives a correct answer for at least $k^{0.02}$ times. Assume the DISJ protocol succeeds with some constant probability $p$. Let $Cr$ denote the random variable for the number of correct answers DISJ protocol gives. We know $\mathbb{E}[Cr] = p\cdot c_1 \cdot k^{0.02} $. By the additive Chernoff bound, the probability that DISJ protocol give a correct answer for at least $k^{0.02}$ times is 
\[
\Pr[Cr \geq k^{0.02}]  = 1 - \Pr[Cr < k^{0.02}]  \geq 1 - e^{- 2 (p\cdot c_1 \cdot k^{0.02} - k^{0.02})^2 / (c_1 \cdot k^{0.02})}.
\]
By picking $c_1$ properly, for example $c_1 = 2/p$, we get $\Pr[Cr \geq k^{0.02}] \geq 1 - 1/k$. 
\end{proof}

\begin{lemma}
\label{sml}
Let $\epsilon > 10k^{-0.005}$ be some constant. After communicating $O\left(k^{0.99}\cdot n + 2^{2n}\cdot k^{0.52} \log k \right)$ bits, with probability at least $1/2$, Alice and Bob agree on some $\tilde{\mu}$, such that for any $\left(x,y\right)$, $\frac{\tilde{\mu}\left(x,y\right)}{\mu\left(x,y\right)}  < 1 + \epsilon$. 
\end{lemma}
\begin{proof}
We use the following protocol to estimate $\mu$:
\begin{Protocol}
Inputs: $(x_1,\ldots, x_k)$ and $(y_1,\ldots, y_k)$
\begin{enumerate}
\item Sample the coordinates randomly $k^{0.99}$ times using public randomness (with replacement). Alice and Bob exchange their input for these coordinates. For each $(x,y) \in \{0,1\}^n \times \{0,1\}^n$, count the number of times it appears in these coordinates and denote the count by $c_1(x,y)$.
\item For all $\left(x,y\right)$, use Lemma \ref{qcnt} to count the number of times $\left(x,y\right)$ appears in the input and denote the count obtained by $c_2\left(x,y\right)$. 
\item We combine $c_1$ and $c_2$ as $c_3$. For each $\left(x,y\right)$, if $c_2\left(x,y\right)\geq k^{0.02}$, let $c_3\left(x,y\right) = c_1\left(x,y\right) \cdot k^{0.01}$ otherwise $c_3\left(x,y\right) = c_2\left(x,y\right)$. 
\item $\tilde{\mu}\left(x,y\right) = \frac{c_3\left(x,y\right)}{\sum_{x',y'}c_3\left(x',y'\right)}$.
\end{enumerate}
\caption{Estimate $\mu$}
\end{Protocol}

Let's first analyze the communication cost of this part. It's clear that the first step needs at most $O\left(k^{0.99}n\right)$ communication. For second step, by Lemma \ref{qcnt}, it needs at most $O\left(2^{2n}\cdot k^{0.52}\log k \right)$ communication. Sum them up, this protocol needs $O\left(k^{0.99}\cdot n + 2^{2n}\cdot k^{0.52}\log k\right)$ bits of communication.

Then let's consider the following events:
\begin{enumerate}
\item For all $\left(x,y\right)$ such that $\mu\left(x,y\right)\cdot k  \geq k^{0.02} $, $|c_1\left(x,y\right) \cdot k^{0.01} - \mu\left(x,y\right) \cdot k| < \frac{\epsilon}{3} \mu\left(x,y\right) \cdot k$.
\item For any $(x,y)$, the protocol described in Lemma \ref{qcnt} does not fail.
\end{enumerate}

If these two events happen, then we know that $|c_3\left(x,y\right) - \mu\left(x,y\right) \cdot k| < \frac{\epsilon}{3} \mu\left(x,y\right) \cdot k$, therefore as desired,
\[
\tilde{\mu}\left(x,y\right) = \frac{c_3\left(x,y\right)}{\sum_{x',y'}c_3\left(x',y'\right)}  \leq  \frac{\left(1+\frac{\epsilon}{3}\right) \mu\left(x,y\right) \cdot k}{\left(1-\frac{\epsilon}{3}\right)\cdot k} < \left(1 + \epsilon\right)\mu\left(x,y\right).
\]
Finally, we only have to make sure that these two events happen with probability at least $1/2$. For the first event, by the multiplicative Chernoff bound and union bound, it does not happen with probability 
\[
2^{2n} \cdot \Pr[|c_3\left(x,y\right) / k^{0.99} - \mu\left(x,y\right)| > \frac{\epsilon}{3} \mu\left(x,y\right)] < 2ke^{-\frac{\left(\epsilon/3\right)^2\mu\left(x,y\right)k^{0.99}}{3}} \leq 2ke^{-\epsilon^2 k^{0.01}/27} < 1/4.
\]
For the second event, by Lemma \ref{qcnt} and the union bound, it does not happen with probability at most $2^{2n} \cdot \frac{1}{k} <  1/4$. Thus these two events happen with probability at least $1/2$.
\end{proof}

\noindent Let's consider the communication cost of $\tau$. For the first step, the cost is  $O\left(k^{0.99}\cdot n + 2^{2n}\cdot k^{0.52}\log k\right)= o\left(k\right)$. For the second step, the quantum communication complexity is at most $c\cdot l$. For the third step, the cost is 0. Therefore  $QCC\left(\tau\right) \leq c \cdot l  + o\left(k\right)$. 

\noindent Let's say that the protocol $\tau$ succeeds when the following things happen:
\begin{enumerate}
\item For all $\left(x,y\right)$, $\frac{\tilde{\mu}\left(x,y\right)}{\mu\left(x,y\right)}  < 1 + \epsilon$. 
\item The $c$ runs of protocol $\pi_{\tilde{\mu}^k}$ in step 2 of protocol $\tau$ all compute at least $\left(1-\alpha\right)k$ coordinates correctly.
\item Number of $i\in[k]$ such that the coordinate $i$ is not sampled in step 2 of protocol $\tau$ is at most $2^{-c/2}k$.
\end{enumerate}
If $\tau$ succeeds, then it computes at least $\left(1-2^{-c/2}-c\alpha\right)k$ coordinates correctly. This is because errors come from two possible ways: 
\begin{enumerate}
\item Some coordinates are not sampled. When $\tau$ succeeds, the number of coordinates that are not sampled is at most $2^{-c/2}k$.
\item Some coordinates' results are wrong in step 2. When $\tau$ succeeds, the number of errors from step 2 is at most $\alpha ck$.
\end{enumerate}

Finally, let's analyze the success probability of protocol $\tau$. Let's analyze step by step:
\begin{enumerate}
\item For step one, by Lemma \ref{sml}, it is clear that we succeed with probability $1/2$. 
\item For step two, first we know that when running $\pi_{\tilde{\mu}^k}$ on distribution $\tilde{\mu}^k$, we succeed with probability at least $\gamma$. And since we have for any $\left(x,y\right)$, $\frac{\tilde{\mu}\left(x,y\right)}{\mu\left(x,y\right)}  < 1 + \epsilon$, if we run  $\pi_{\tilde{\mu}^k}$ on distribution $\mu^k$, the success probability will be at least $\frac{\gamma}{\left(1+\epsilon\right)^k}$. When running this protocol $c$ times independently, the success probability will be at least $\left(\frac{\gamma}{\left(1+\epsilon\right)^k}\right)^c$. Note that when we sample coordinates independently at random, the distribution we induce is $\mu^k$. 
\item It is only left to analyze the probability that number of coordinates not sampled in step 2 of protocol $\tau$ is at least $2^{-c/2}k$. For each coordinate $i$, define $s_i$ to be the random variable that indicates whether coordinate $i$ is sampled or not ($1$ means not sampled and $0$ means sampled). Then we have $\mathbb{E}[s_i]$$= \left(1-\frac{1}{k}\right)^{ck} < 2^{-c}$. In order to show the failure probability small by Chernoff bound, we will show that all the $s_i$'s are negatively correlated. To show they are negatively correlated, we only have to show
\[
\forall I \subseteq [k], \Pr\left[\prod_{i\in I} s_i = 1\right] \leq \prod_{i\in I}\Pr[s_i = 1].
\]
Notice that $\Pr\left[\prod_{i \in I} s_i = 1\right] = \left(1-\frac{|I|}{k}\right)^{kc}$ and $\Pr[s_i = 1] = \left(1- \frac{1}{k}\right)^{kc}$. So we have,
\[
\forall I \subseteq [k], \Pr\left[\prod_{i\in I} s_i = 1\right]  = \left(1-\frac{|I|}{k}\right)^{kc} \leq  \left(\left(1-\frac{1}{k}\right)^{|I|}\right)^{kc} = \prod_{i\in I}\Pr[s_i = 1].
\]
Since all the $s_i$'s are negatively correlated, by Chernoff bound for negatively correlated random variables, for example see \cite{concentration_reference}, we have that the failure probability 
\[
\Pr\left[\sum_{i=1}^k s_i  \geq 2^{-c/2}k\right] < e^{-2k \left(2^{-c/2} - 2^{-c}\right)^2} < e^{-2^{2-2c}k} < 2^{-2^{2-2c}k}.
\]
\end{enumerate}
The second inequality holds for all $c \ge 3$. Notice that the event that we err in the first step is independent from the event that we err in the second step. So the success probability of $\tau$ is at least $\frac{1}{2}\left(\left(\frac{\gamma}{\left(1+\epsilon\right)^k}\right)^c -2^{-2^{2-2c}k}\right)$.

\end{proof}

\subsection{Lower bound on QIC}

\begin{definition} We say that $QCC(f^k, \eta_1 k, \eta_2) \le C$ if there exists a protocol $\pi$ for $f^k$ s.t. $QCC(\pi) \le C$ and 
$$
Pr[\text{$\pi$ computes $\ge \eta_1 k$ coordinates correctly}] \ge 1 - \eta_2
$$
Here the probability is over randomness of protocol (which includes the randomness due to quantum measurements). We don't require the protocol to declare which coordinates were computed correctly. 
\end{definition}

\begin{theorem} 
\label{lem:QICvsGDM}
There exists an absolute constant $\eta > 0$ s.t. for any boolean function $f$, $QIC_D(f, \eta)$ $\ge \Omega(GDM_{1/5}(f) - O(1))$.
\end{theorem}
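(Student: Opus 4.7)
The plan is to combine the compression lemma (Lemma~\ref{compression}), the average-to-worst-case reduction (Lemma~\ref{atw}), and Sherstov's threshold direct product theorem (Theorem~\ref{thm:sdp_sherstov}) in the now-familiar ``QIC $\to$ QCC of many copies $\to$ contradiction'' pattern. Suppose $QIC_D(f,\eta) = I$ for a small constant $\eta>0$ to be fixed later. By definition, for every input distribution $\mu$ and every $\delta>0$ there is a protocol $\pi_\mu$ computing $f$ with average error at most $\eta$ under $\mu$ and $QIC(\pi_\mu,\mu)\le I+\delta$. Applying Lemma~\ref{compression} to $\pi_\mu$ yields, for all sufficiently large $k$, a protocol $\Pi_{\mu^k}$ for $f^k$ with $QCC(\Pi_{\mu^k})\le k(I+2\delta)$ that correctly computes at least $(1-2\eta)k$ coordinates with probability at least $1-e^{-2\eta^2 k}-\varepsilon'$ on the product distribution $\mu^k$.

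Next I feed these distribution-dependent protocols into Lemma~\ref{atw} with $\alpha=2\eta$, $l=k(I+2\delta)$, $\gamma=1-e^{-2\eta^2 k}-\varepsilon'$, and a constant $c\ge 3$ to be chosen. This produces a single worst-case protocol $\tau$ for $f^k$ with $QCC(\tau)\le c\,k(I+2\delta)+o(k)$ that, on every worst-case input, solves at least $(1-2^{-c/2}-2c\eta)k$ coordinates correctly with probability at least
\[
\tfrac{1}{2}\Bigl(\bigl(\gamma/(1+\varepsilon)^k\bigr)^c - 2^{-2^{2-2c}k}\Bigr),
\]
where $\varepsilon=10k^{-0.005}$. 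I will choose $c$ large enough that $2^{-c/2}\le\epsilon_{\mathrm{sh}}/2$, then choose $\eta\le\epsilon_{\mathrm{sh}}/(4c)$, so that $2^{-c/2}+2c\eta\le\epsilon_{\mathrm{sh}}$; and note that since $(1+\varepsilon)^k = 2^{O(k^{0.995})}$ and $\gamma\to 1$, the success probability of $\tau$ is at least $2^{-\epsilon_{\mathrm{sh}}k}$ for all sufficiently large $k$.

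With these choices, $\tau$ satisfies the hypotheses of Theorem~\ref{thm:sdp_sherstov} (solving at least $(1-\epsilon_{\mathrm{sh}})k$ out of $k$ copies of $f$ with probability at least $2^{-\epsilon_{\mathrm{sh}}k}$), hence $QCC(\tau)\ge\Omega(k\cdot GDM_{1/5}(f))$. Combining this with the upper bound $QCC(\tau)\le c\,k(I+2\delta)+o(k)$ and sending $k\to\infty$ and $\delta\to 0$ gives $I\ge\Omega(GDM_{1/5}(f)/c)-O(1)$. Since $c$ is an absolute constant and $\eta$ depends only on $c$ and $\epsilon_{\mathrm{sh}}$, we conclude $QIC_D(f,\eta)\ge\Omega(GDM_{1/5}(f))-O(1)$.

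The main technical obstacle is parameter calibration across three tolerances: the threshold fraction $1-\epsilon_{\mathrm{sh}}$ of correct coordinates required by Sherstov, the success probability lower bound $2^{-\epsilon_{\mathrm{sh}}k}$, and the necessity that $\eta$ be a \emph{universal} constant independent of $n$ and $f$. Most subtle is verifying that the multiplicative blowup $(1+\varepsilon)^k = 2^{O(k^{0.995})}$ incurred when passing from average-case to worst-case (through sampling an estimate $\tilde\mu$ of the empirical distribution) only contributes $2^{o(k)}$ to the failure probability, so that Sherstov's tolerance $2^{-\epsilon_{\mathrm{sh}}k}$ still dominates as $k\to\infty$; this is precisely what the choice $\varepsilon=\Theta(k^{-0.005})$ in Lemma~\ref{atw} provides.
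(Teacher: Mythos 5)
Your proposal is correct and follows essentially the same route as the paper's proof: compress a low-QIC protocol into a cheap protocol for $f^k$ on each product distribution $\mu^k$ (Lemma~\ref{compression}), lift it to a worst-case protocol via Lemma~\ref{atw}, calibrate $c$ and $\eta$ against $\epsilon_{\mathrm{sh}}$ exactly as the paper does, and invoke Theorem~\ref{thm:sdp_sherstov}. The only (immaterial) deviation is your choice $\varepsilon=\Theta(k^{-0.005})$ in Lemma~\ref{atw} rather than the paper's small constant $\epsilon_{\mathrm{sh}}^4/(100c)$; both make $(1+\varepsilon)^{ck}=2^{o(k)}$ or $2^{O(\epsilon_{\mathrm{sh}}^4 k)}$ negligible against the $2^{-\epsilon_{\mathrm{sh}}k}$ tolerance.
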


\begin{proof}
Let $\eta > 0$ be a sufficiently small constant to be fixed later. Suppose $\max_{\mu} QIC(f, \mu, \eta) = I$. We will show that for sufficiently large $k$, it holds that
$$
QCC(f^k, (1-\eps_{\text{sh}}) k, 1 - 2^{-\eps_{\text{sh}} k}) \le O( k \cdot (I + 2)) + o(k)
$$
from which the theorem follows from Theorem \ref{thm:sdp_sherstov}. 
\\
\\
By definition, for all $\mu$, there exists a protocol $\Pi_{\mu}$ for $f$ s.t. $QIC(\Pi_{\mu}, \mu) \le I + 1$ and error $\le \eta$ w.r.t $\mu$. By lemma \ref{compression}, for sufficiently large $k$, there exists a protocol $\Pi_{k, \mu, \eps'}$ s.t. $QCC(\Pi_{k, \mu, \eps'}) \le k (I + 2)$ and 
$$
\Pr[\text{$\Pi_{k, \mu, \eps'}$ computes $\ge (1 - 2 \eta) k$ coordinates of $f^k$ correctly}] \ge 1 - e^{-2\eta^2 k} - \eps'
$$
Here the probability is over the distribution $\mu^k$ and the randomness of the protocol. Choose $k$ large enough and $\eps'$ small enough so that $1 - e^{-2\eta^2 k} - \eps' \ge 0.9$. Then by lemma \ref{atw}, for any integer $c > 0$, any constant $\eps > 0$, there exists a protocol $\tau$ s.t. 
\begin{align*}
&\Pr[\text{$\tau$ computes $\ge (1 - 2^{-c/2} - 2c\eta) k$ coordinates correctly (on any input $(x_1,\ldots,x_k, y_1,\ldots, y_k)$)}] \\ 
&\ge \frac{1}{2}\left( \left(\frac{0.9}{(1+\epsilon)^k}\right)^c -2^{-2^{2-2c}k} \right)
\end{align*}
Here the randomness is only over the randomness of the protocol. Also $QCC(\tau) \le c \cdot k \cdot (I+2) + o(k)$. Choose $c = \lceil 2 \log \left( \frac{2}{\eps_{\text{sh}}}\right) \rceil$. Also choose $\eta = \frac{\eps_{\text{sh}}}{4 c}$. Then 
$$
1 - 2^{-c/2} - 2c\eta \ge 1 - \eps_{\text{sh}}
$$
Since $2^{2x} \ge 1 + x$ for all $x > 0$, it follows that
$$
\left(\frac{0.9}{(1+\epsilon)^k}\right)^c \ge 0.9^c \cdot 2^{- 2 \cdot \eps \cdot c \cdot k} \ge 2^{ - (2 \eps k + 1) \cdot c} \ge 2^{- 4 \cdot \eps \cdot c \cdot k}
$$
The last inequality is true for sufficiently large $k$. Now choose $\eps = \eps_{\text{sh}}^4/100c$. Then since 
$$
2^{-2^{2-2c}k} \le 2^{- \eps_{\text{sh}}^4 k/16}
$$
we get that
\begin{align*}
\frac{1}{2}\left( \left(\frac{0.9}{(1+\epsilon)^k}\right)^c -2^{-2^{2-2c}k} \right) &\ge \frac{1}{2} \left( 2^{- \eps_{\text{sh}}^4 k/25} - 2^{- \eps_{\text{sh}}^4 k/16} \right) \\
&\ge 2^{- \eps_{\text{sh}}^4 k/16} \\
&\ge 2^{ - \eps_{\text{sh}} k}
\end{align*}
The second inequality holds for sufficiently large $k$. Hence $QCC(\tau) \le c \cdot k \cdot (I+2) + o(k)$ and
\begin{align*}
&\Pr[\text{$\tau$ computes $\ge (1 - \eps_{\text{sh}}) k$ coordinates correctly (on any input $(x_1,\ldots,x_k, y_1,\ldots, y_k)$)}] \\ 
&\ge 2^{- \eps_{\text{sh}} k}
\end{align*}
which implies that $QCC(f^k, (1-\eps_{\text{sh}}) k, 1 - 2^{-\eps_{\text{sh}} k}) \le O( k \cdot (I + 2)) + o(k)$. 
\end{proof}

\begin{corollary} \label{qcc:qic}
For all boolean functions $f$, $QCC(f, 1/3) \le 2^{O(QIC(f,1/3) + 1)}$.
\end{corollary}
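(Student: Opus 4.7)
Let $I = QIC(f, 1/3)$. The strategy is to route through Theorem~\ref{lem:QICvsGDM} in two steps: first push from $QIC$ down to $GDM_{1/5}$, and then exponentiate back up from $GDM_{1/5}$ to $QCC$ via a known (indirect) upper bound.

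First, I reduce the error parameter. By Lemma~\ref{lem:error:reduction}, for the small absolute constant $\eta > 0$ appearing in Theorem~\ref{lem:QICvsGDM}, we have $QIC(f, \eta) \le O(\log(1/\eta) \cdot QIC(f, 1/3)) = O(I)$, with the implicit constant depending only on $\eta$. Since $QIC_D(f, \eta) \le QIC(f, \eta)$ by the trivial ordering $QIC_D \leq QIC$ (which we may boost to the other direction via Theorem~\ref{minimax_general} if needed, at the cost of multiplicative constants in the error), we also get $QIC_D(f, \eta) = O(I)$.

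Second, I apply Theorem~\ref{lem:QICvsGDM} directly to obtain
\[
GDM_{1/5}(f) \le O(QIC_D(f, \eta) + 1) = O(I + 1).
\]
Finally, I invoke the bound $QCC(f, 1/3) \le 2^{O(GDM_{1/5}(f))}$, valid for every boolean $f$, to conclude
\[
QCC(f, 1/3) \le 2^{O(GDM_{1/5}(f))} \le 2^{O(I + 1)} = 2^{O(QIC(f, 1/3) + 1)},
\]
as desired.

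\textbf{Main obstacle.} The principal difficulty lies in the last step, the exponential upper bound $QCC(f, 1/3) \le 2^{O(GDM_{1/5}(f))}$. The \emph{lower} bound $QCC(f, 1/3) \ge \Omega(GDM_{1/5}(f))$ is the standard generalized discrepancy method (and is exactly how Theorem~\ref{lem:QICvsGDM} itself is proved, via Sherstov's threshold direct product Theorem~\ref{thm:sdp_sherstov}); the reverse exponential upper bound is more delicate. The natural route is to exploit the structural information that a small $GDM_{1/5}(f)$ supplies: for every input distribution $\mu$ there exists a function $g_\mu$ that is $1/5$-close to $f$ under $\mu$ and admits a rectangle of $\mu$-mass at least $2^{-GDM_{1/5}(f)}$ on which $g_\mu$ is heavily biased; iterating produces a rectangle decomposition of controlled size from which one can extract a bounded-communication protocol. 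One should then combine these per-distribution protocols into a worst-case protocol by a minimax argument analogous to that of Theorem~\ref{thm:minimax}, and finally amplify the error to $1/3$ using standard repetition. The paper's remark that the proof is ``indirect,'' in contrast to the classical compression argument of \cite{B12}, points squarely at this step, and it is where essentially all of the non-trivial quantum-information content of the corollary is concentrated.
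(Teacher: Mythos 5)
Your proposal follows essentially the same route as the paper's proof: error reduction (Lemma~\ref{lem:error:reduction}), the trivial ordering $QIC_D \le QIC$, Theorem~\ref{lem:QICvsGDM} to get $GDM_{1/5}(f) \le O(QIC(f,1/3)+1)$, and finally $QCC(f,1/3) \le 2^{O(GDM_{1/5}(f))}$. The only clarification worth making is that this last bound, which you flag as the main obstacle and describe as carrying the quantum content, is in the paper a purely classical folklore fact: taking $g=f$ in the definition gives $\text{disc}(f) \ge 2^{-GDM_{1/5}(f)}$, and then $QCC(f,1/3) \le R(f,1/3) \le \left(1/\text{disc}(f)\right)^{O(1)}$ (Kushilevitz--Nisan, Exercise 3.32), so no new quantum argument is needed there --- the genuinely quantum work all sits inside Theorem~\ref{lem:QICvsGDM}, which you correctly use as a black box.
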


\begin{proof}
We will use the following folklore result:

\begin{align*}
R(f,1/3) \le \left( \frac{1}{\text{disc}(f)} \right)^{O(1)}
\end{align*}
where $R(f,1/3)$ is the (public-coin) randomized communication complexity of $f$ with error $1/3$ and $\text{disc}(f) = \min_{\mu} \text{disc}^{\mu}(f)$. See, for example, exercise 3.32 in \cite{KushilevitzN97}. This implies 

\begin{align}
QCC(f,1/3) \le R(f,1/3) \le \left( \frac{1}{\text{disc}(f)} \right)^{O(1)} \le 2^{O(GDM_{1/5}(f))} \label{eqn:100}
\end{align}

Now, by theorem \ref{lem:QICvsGDM} and theorem \ref{minimax_general}, we get that $QIC(f,\eta) \ge \Omega(GDM_{1/5}(f) - O(1))$ for some small constant $\eta$. By lemma \ref{lem:error:reduction}, we also get that $QIC(f,1/3) \ge \Omega(GDM_{1/5}(f) - O(1))$, which combined with equation (\ref{eqn:100}) completes the proof.
\end{proof}

\section{From AND to Disj}

In this section, we show that a protocol with low quantum information cost for $AND$ implies a protocol with low quantum information cost for Disjointness


\begin{lemma}
\label{lem:DISJvsAND2}
\begin{equation}
\max_{\nu} QIC(DISJ_n, \nu, 2/n) \le n\cdot QIC^r_0(AND, 1/n^2) + O( r \cdot \log^5(n)) + o(\sqrt{n}) \label{eq:mainlemma}
\end{equation}
\end{lemma}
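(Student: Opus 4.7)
The plan is to fix an arbitrary $\nu$ (which may as well achieve the maximum on the left-hand side) and construct a $\nu$-dependent $r$-round protocol for $DISJ_n$ with error $\le 2/n$ and the claimed information cost. Let $w_i = \nu_i(1,1)$ and $W = \sum_i w_i = \mathbb{E}_\nu |x \cap y|$, the expected intersection size. The main tool I would reach for is Lemma \ref{andminimax} applied at error $1/n^2$: it produces a single $r$-round AND protocol $\Pi$ with worst-case error at most $1/n^2$ such that $QIC(\Pi, \mu_0) \le QIC_0^r(AND, 1/n^2)$ for every prior $\mu_0$ on $\{0,1\}^2$ with $\mu_0(1,1)=0$.

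In the ``typical'' regime $W \le \log^4 n$, I would run $\Pi$ in parallel on every coordinate and output ``not disjoint'' iff some call returns $1$. A union bound over the $n$ copies of $\Pi$, each with error at most $1/n^2$, gives total error $\le 1/n$. By the subadditivity lemma \ref{lem:subadd}, the joint QIC on input $\nu$ is bounded by $\sum_i QIC(\Pi, \nu_i)$. Decomposing each marginal as $\nu_i = (1-w_i)\mu_{i,0} + w_i\delta_{(1,1)}$, with $\mu_{i,0}(1,1)=0$, and invoking Corollary \ref{onezero}, we get $QIC(\Pi, \nu_i) \le QIC(\Pi, \mu_{i,0}) + O(r H(w_i)) \le QIC_0^r(AND, 1/n^2) + O(r H(w_i))$. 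Finally, by concavity of $H$, $\sum_i H(w_i) \le n H(W/n) = O(W \log(n/W)) = O(\log^5 n)$, so this phase alone already delivers the target bound without needing the $o(\sqrt{n})$ slack.

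In the ``dense'' regime $W > \log^4 n$, the expected intersection is large enough that a much cheaper subsampled disjointness check suffices. I would publicly subsample $S \subset [n]$ of size $m = n/\mathrm{polylog}(n)$, run a BCW/Aaronson--Ambainis quantum disjointness protocol on the $S$-coordinates at QIC at most $O(\sqrt{m}) = o(\sqrt{n})$, and in parallel run $\Pi$ on each coordinate in $[n] \setminus S$; the final answer is ``not disjoint'' iff either the BCW call or some AND call fires. With $W > \log^4 n$ and this choice of $m$, the expected intersection inside $S$ is $\Omega(\log n)$, so except with probability $\le 1/n$ either BCW correctly detects an intersection inside $S$, or the posterior distribution on $[n]\setminus S$ has its sum of $w_i$'s shrunk enough that the typical-regime analysis applies to the residual and stays within the $O(r\log^5 n)$ budget.

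The main obstacle is making the dense regime fully rigorous. One has to balance three competing quantities: the subsample size $m$ (which controls the $o(\sqrt{n})$ BCW cost), the probability that the subsample misses the intersection (which feeds into the $2/n$ error budget), and the QIC contribution of the parallel-$\Pi$ phase on $[n]\setminus S$ after accounting for how conditioning on ``BCW did not fire'' reshapes the per-coordinate $w_i$'s. Showing that all three can be simultaneously arranged to respect the target bound $n\cdot QIC_0^r(AND, 1/n^2) + O(r\log^5 n) + o(\sqrt{n})$, uniformly over possibly very correlated $\nu$, is the technical crux.
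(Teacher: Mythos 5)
Your ``typical'' regime ($W \le \log^4 n$) is handled correctly, and it is a genuine simplification: when the expected intersection is already small, no subsampling is needed, and subadditivity (Lemma~\ref{lem:subadd}), Corollary~\ref{onezero}, and concavity of $H$ give the stated bound directly. (A small point: you do not need Lemma~\ref{andminimax} to extract one protocol $\Pi$ that is simultaneously near-optimal for all $\mu_0$ with $\mu_0(1,1)=0$; that quantifier order is already built into the definition of $QIC_0^r$.)

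The gap is in the dense regime, and it is structural rather than a matter of tuning parameters. You run the subsampled disjointness check and the per-coordinate copies of $\Pi$ \emph{in parallel}, and then appeal to ``how conditioning on BCW not firing reshapes the $w_i$'s.'' But in a parallel execution there is nothing to condition on: the outcome of the check is not in either player's registers when the $\Pi$-messages are sent, so the quantum information cost of those messages is evaluated against the \emph{unconditioned} marginal of $\nu$ on $[n]\setminus S$. Concretely, take $\nu$ that is disjoint with probability $1-1/\sqrt{n}$ and has $n/2$ uniformly spread intersections with probability $1/\sqrt{n}$; then $W=\Theta(\sqrt{n})$ puts you in the dense case, each $w_i=\Theta(1/\sqrt{n})$, and $\sum_{i\notin S} H(w_i)=\Theta(\sqrt{n}\log n)$, so your parallel $\Pi$-phase costs $\Theta(r\sqrt{n}\log n)$ --- far above $n\cdot QIC_0^r(AND,1/n^2)+O(r\log^5 n)+o(\sqrt{n})$ in the regime of interest ($r\gg n^{1/4}$, and in particular $r=\Theta(\sqrt{n}/\log^6 n)$). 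The missing idea is to make the two phases \emph{sequential}: run the subsampled check first, and run $\Pi^{\otimes n}$ only in the branch where it reports ``disjoint'' (with a dummy exchange in the other branch so everything stays unitary). The cost of the second phase then appears multiplied by $\Pr[O_A=1]$ and evaluated on the posterior $\nu'=\nu\,|\,O_A=1$, for which one shows $\Pr[N(X,Y)=d\mid O_A=1]\lesssim n\,e^{-d/\log^3 n}$ and hence $\mathbb{E}_{\nu'}[N]/n=O(\log^4 n/n)$, after which your typical-regime calculation applies. Even stating ``the QIC of the conditional branch equals the QIC of $\Pi^{\otimes n}$ on $\nu'$'' requires the purification argument (that $O_B$ is a quantum copy of $O_A$, so the post-measurement state is pure, together with invariance of conditional mutual information under change of purification, equation~(\ref{eq:pure})); none of this is available for the protocol you wrote down, so the crux you correctly identify at the end cannot be resolved within your proposed construction.
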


\begin{proof}
Let $QIC^r_0(AND, 1/n^2) = I$. Suppose $\pi$ is a protocol for AND which has error $\le 1/n^2$ for all inputs and s.t. $\max_{\text{$\mu$ s.t. $\mu(1,1) = 0$}} QIC(\pi, \mu) \le I + \delta$, for arbitrary small $\delta$. Using $\pi$, we will construct a protocol for $DISJ_n$. The protocol will have low information cost w.r.t. any distribution $\nu$. Suppose $\tau_k$ is a quantum protocol for $DISJ_k$ that has worst case error $\le 1/k^{10}$ and communication cost $O(\sqrt{k} \log(k))$. For example, use the protocol from \cite{AA03} and amplify the error to $1/k^{10}$. We'll drop the subscript $k$ when it is clear from the context. Consider the protocol $\pi_n$ described as Protocol \ref{Protocol:subsample}.

\begin{Protocol}
Inputs: $(x,y) \in \{0,1\}^n \times \{0,1\}^n$, $(x,y) \sim \nu$ \\
Goal: check if $DISJ_n(x,y) = 1$ or not.
\begin{enumerate}
\item Alice and Bob share a maximally entangled state $\phi_S^{S_A S_B}$ that will serve as 
shared randomness in order to sample uniformly at random $n/\log^{3}(n)$ coordinates from $[n]$ (with replacement). Alice has the register $S_A$ and Bob has $S_B$. 
\item On the random coordinates, run $\tau$. Suppose $O_A$ is the output register for Alice and $O_B$ is the output register for Bob. Note that all this can be implemented using unitaries. Also note either $O_A = O_B = 1$ or $O_A = O_B = 0$. 
\item If $O_A = O_B = 1$, then run $\pi$ on each coordinate. If $\pi$ outputs $1$ on any coordinate, then output $0$, otherwise output $1$. If $O_A = O_B = 0$, Alice and Bob will keep running a dummy protocol (for example keep exchanging a freshly prepared register $\ket{0}$ of dimension same as to be sent in $\pi^n$ in the corresponding step). In the end they output $0$. 
\end{enumerate}
\caption{Subsampling Protocol $\pi_n$}
\label{Protocol:subsample}
\end{Protocol}

We'll denote the protocol in which $\pi$ is run independently on each coordinate by $\pi^n$. First lets analyze the error of the protocol $\pi_n$. Suppose $(x,y)$ were disjoint. Then probability that we output $0$ because of $\tau$ is at most $\log^{30}(n)/n^{10} \le 1/n$. And the probability that we output $0$ because of $\pi^n$ is at most $n/n^2 = 1/n$ because of union bound. So error in this case $\le 2/n$. If the sets were intersecting, even if we don't output $0$ because of $\tau$, we will output $0$ because of $\pi^n$ w.p. at least $1-1/n^2$ (because on the intersecting coordinate, $1/n^2$ is the probability of failure). So in both cases, probability of error $\le 2/n$. 

Now lets figure out the information cost of $\pi_n$. For running $\tau$, we just bound the information cost by communication cost, which is at most $\sqrt{n}/\sqrt{\log(n)} = o(\sqrt{n})$. The interesting part is what happens after $\tau$. Lets look at the state of Alice and Bob after $\tau$ is over. Alice holds the registers $A_{\tau}, O_A, S_A$, where $A_{\tau}$ is what is left behind with Alice after $\tau$, $O_A$ is Alice's output register for $\tau$ and $S_A$ is the entanglement register which acts as shared randomness. Similarly Bob holds $B_{\tau}, O_B, S_B$. After running $i$ steps of $\pi^n$ (just before the $(i+1)^{\text{th}}$ message is transmitted), Alice and Bob hold registers $A_{i+1}$ and $B_{i+1}$ respectively, with $C_{i+1}$ (the register to be sent next) with Alice if $i$ even and with Bob if $i$ odd. Note that the number of rounds of $\pi$ is $r$. Then the information cost of step 3 is:

\begin{align*}
&\frac{1}{2} \cdot \sum_{i = 0, i~even}^{r-1} I(C_{i+1} ; R | B_{i+1},  B_{\tau}, O_B, S_B) + \frac{1}{2} \cdot \sum_{i = 0, i~odd}^{r-1} I(C_{i+1} ; R | A_{i+1}, A_{\tau}, O_A, S_A) \\
&\le \frac{1}{2} \cdot \sum_{i = 0, i~even}^{r-1} I(C_{i+1} ; R, B_{\tau}, O_B, S_B | B_{i+1}) + \frac{1}{2} \cdot \sum_{i = 0, i~odd}^{r-1} I(C_{i+1} ; R, A_{\tau}, O_A, S_A | A_{i+1}) \\
&\le \frac{1}{2} \cdot \sum_{i = 0, i~even}^{r-1} I(C_{i+1} ; R, B_{\tau}, O_B, S_B, A_{\tau}, O_A, S_A | B_{i+1}) + \\ 
& \frac{1}{2} \cdot \sum_{i = 0, i~odd}^{r-1} I(C_{i+1} ; R, B_{\tau}, O_B, S_B, A_{\tau}, O_A, S_A | A_{i+1}) \\
&= \frac{1}{2} \cdot \sum_{i = 0, i~even}^{r-1} I(C_{i+1} ; O_A | B_{i+1}) + \frac{1}{2} \cdot \sum_{i = 0, i~even}^{r-1} I(C_{i+1} ; R, B_{\tau}, S_B, A_{\tau}, S_A | B_{i+1}, O_A) + \\
& \frac{1}{2} \cdot \sum_{i = 0, i~even}^{r-1} I(C_{i+1} ; O_B | B_{i+1}, R, B_{\tau}, S_B, A_{\tau}, S_A, O_A) + \\
& \frac{1}{2} \cdot \sum_{i = 0, i~odd}^{r-1} I(C_{i+1} ; O_A | A_{i+1}) + \sum_{i = 0, i~odd}^{r-1} I(C_{i+1} ; R, B_{\tau}, S_B, A_{\tau}, S_A | A_{i+1}, O_A) + \stepcounter{equation}\tag{\theequation}\label{myeq1} \\
& \frac{1}{2} \cdot \sum_{i = 0, i~odd}^{r-1} I(C_{i+1} ; O_B | A_{i+1}, R, B_{\tau}, S_B, A_{\tau}, S_A, O_A) \\
&\le \frac{1}{2} \cdot \sum_{i = 0, i~even}^{r-1} I(C_{i+1} ; R, B_{\tau}, S_B, A_{\tau}, S_A | B_{i+1}, O_A) + \\
&\frac{1}{2} \cdot \sum_{i = 0, i~odd}^{r-1} I(C_{i+1} ; R, B_{\tau}, S_B, A_{\tau}, S_A | A_{i+1}, O_A) + O(r) \\
&= \frac{1}{2} \cdot \sum_{i = 0, i~even}^{r-1} \Pr[O_A = 1] \cdot I(C_{i+1} ; R, B_{\tau}, S_B, A_{\tau}, S_A | B_{i+1}, O_A = 1) + \\
&\frac{1}{2} \cdot \sum_{i = 0, i~odd}^{r-1} \Pr[O_A = 1] \cdot I(C_{i+1} ; R, B_{\tau}, S_B, A_{\tau}, S_A | A_{i+1}, O_A = 1) + O(r)
\end{align*}
The first two inequalities are by properties of mutual information. The first equality is just chain rule. Third inequality follows from the fact that $O_A,O_B$ are one dimensional systems. The last equality is true because $O_B$ is just a copy of $O_A$, so tracing out $O_B$, $O_A$ becomes a classical system and also conditioned on $O_A = 0$, the mutual information expressions are $0$ since in that case the $C_{i+1}$ registers are independent of everything else. Now lets analyze the term: 

\begin{align*}
\frac{1}{2} \cdot \sum_{i = 0, i~even}^{r-1} I(C_{i+1} ; R, B_{\tau}, S_B, A_{\tau}, S_A | B_{i+1}, O_A = 1) + \frac{1}{2} \cdot \sum_{i = 0, i~odd}^{r-1} I(C_{i+1} ; R, B_{\tau}, S_B, A_{\tau}, S_A | A_{i+1}, O_A = 1)
\end{align*}
We claim that this is equal to $QIC(\pi^n, \nu')$, where $\nu'$ is the distribution $\nu | O_A = 1$. This follows from the following observations: 
\begin{itemize}
\item Since $O_B$ is just a copy of $O_A$, for all $i$, the state of systems $A_{i+1}, B_{i+1}, C_{i+1}, R, B_{\tau}, S_B, A_{\tau}, S_A$ conditioned on $O_A = 1$ (the post-measurement state if $O_A$ is measured and the result is $1$) is pure. 
\item For all $i$, the marginal state of systems $A_{i+1}, B_{i+1}, C_{i+1}$ conditioned on $O_A = 1$ is the same as it would have been if $\pi^n$ was run starting from the distribution $\nu'$. This is because $\pi^n$ never touches the registers $B_{\tau}, S_B, A_{\tau}, S_A$. 
\item If $\ket{\phi}^{R', A,B,C}$ and $\ket{\psi}^{R,A,B,C}$ are two pure states such that $\text{Tr}_{R'} \ket{\phi}^{R', A,B,C} = \text{Tr}_{R} \ket{\psi}^{R, A,B,C}$. Then $I(C;R'|B)_{\phi} = I(C;R|B)_{\psi}$. 
\end{itemize}
\begin{remark}
The reader might have noticed that the trick of merging stuff with the purification register and then applying the last observation is used at a lot of places in this paper. This seems to be a very useful trick and seems to replace the classical Proposition 2.9 from \cite{B12}. 
\end{remark}

Putting it all together, we have the following upper bound on information cost of step 3: 

\begin{align}
&\Pr[O_A = 1] \cdot QIC(\pi^n, \nu') + O(r) \nonumber \\
&\le \Pr[O_A = 1] \cdot \left( \sum_{i=1}^n QIC(\pi, \nu'_i) \right) + O(r) \nonumber \\
&\le \Pr[O_A = 1] \cdot n \cdot QIC\left(\pi,  \sum_{i=1}^n \nu_i'/n\right) + O(r) \nonumber \\
&\le \Pr[O_A = 1] \cdot n \cdot (I + \delta) + O(\Pr[O_A = 1] \cdot n \cdot r H(w)) + O(r) \label{eqnmain}
\end{align}
Here $\nu'_i$ is the marginal distribution on the $i^{\text{th}}$ coordinate and $w = \sum_{i=1}^n \nu_i'(1,1)/n$. First inequality is by lemma \ref{lem:subadd}. Second inequality is just concavity of information cost, lemma \ref{concavity}. The last inequality follows from corollary \ref{onezero}. Now we can assume that $\Pr[O_A = 1] \ge 1/n$, otherwise (\ref{eqnmain}) is trivially bounded by $O(r)$. Now let us bound $w$. Suppose $(X,Y)$ are random variables s.t. $(X,Y) \sim \nu$. Also let $N(x,y)$ be the number of intersections in $x$ and $y$ i.e. number of $i$ such that $x_i = y_i = 1$. Then 

\begin{align*}
\Pr[N(X,Y) = d | O_A = 1] &= \frac{\Pr[N(X,Y) = d] \cdot \Pr[O_A = 1 | N(X,Y) = d]}{\Pr[O_A = 1]} \\ 
&\le \Pr[N(X,Y) = d] \cdot \Pr[O_A = 1 | N(X,Y) = d] \cdot n \\
&\le  \Pr[N(X,Y) = d] \cdot \left( \left(1-\frac{d}{n}\right)^{n/\log^3(n)}  +  \frac{\log^{30}(n)}{n^{10}}\right) \cdot n \\
&\le e^{-d/\log^3(n)} \cdot n + \frac{\log^{30}(n)}{n^{9}}
\end{align*}

The second inequality follows because if there are $d$ intersections, then getting no intersection in $n/\log^3(n)$ uniformly random coordinates is at most the first term. The second term is due to the error of the amplified protocol for disjointness. So for $d \ge 9 \ln(2) \log^4(n)$, $\Pr[N(X,Y) = d | O_A = 1] \le 1/n^8$. Thus 

\begin{align*}
w =  \sum_{i=1}^n \nu_i'(1,1)/n = \mathbb{E}_{(X,Y) \sim \nu'} N(X,Y)/n \le O(\log^4(n)/n)
\end{align*}
Thus we can bound (\ref{eqnmain}) as follows:

\begin{align*}
 &\Pr[O_A = 1] \cdot n \cdot (I + \delta) + O(\Pr[O_A = 1] \cdot n \cdot r H(w)) + O(r) \\
 &\le n \cdot (I + \delta) + O(n \cdot rH(w)) + O(r) \\
 &\le n \cdot (I+\delta) + O(r \log^5(n))
\end{align*}

Since $\delta$ was arbitrary small, this completes the proof.

\end{proof}

\ignore{

\begin{lemma}
\label{lem:qicvsqccpf}

For any function $f$, error parameter $\epsilon \in [0, 1]$, and number of round $r \in \mathbb{N}$, the following holds:
\begin{align*}
	QIC^r (f, \epsilon) & \leq QCC^r (f, \epsilon), \\
	QIC (f, \epsilon) & \leq QCC (f, \epsilon).
\end{align*}

\end{lemma}

\begin{proof}
Let $\Pi$ be a protocol computing $f$ correctly except with probability $\epsilon$ on 
all input  and satisfying $QCC (\Pi) = QCC (f, \epsilon)$. We get the result by 
noting that $QIC (f, \epsilon) \leq \max_\mu QIC (\Pi, \mu) \leq QCC (\Pi)$.
\end{proof}

\begin{definition}
For all $r \in \mathbb{N}, \epsilon \in [0, 1]$,
\begin{align*}
	QIC_0^r (AND, \epsilon) = \inf_{\Pi \in \T^r (AND, \epsilon)} \max_{\mu_0} QIC (\Pi, \mu_0),
\end{align*}
in which the maximum ranges over all $\mu_0$ satisfying $\mu_0 (1, 1 ) = 0$.
\end{definition}

The following lemma is very similar to Theorem \ref{thm:minimax}. The only difference is that the distributions we consider are restricted and on the right hand side the error of the protocol is measured in the worst case. This lemma will be used in the proof of the next lemma. By applying the same technique as Theorem \ref{thm:minimax}, we get the following lemma.
\begin{lemma}
\label{andminimax}
\[
QIC_0^r (AND, \epsilon) \leq  \max_{\mu_0,\mu_0 (1, 1 ) = 0} \inf_{\Pi \in \T^r (AND, \epsilon)} QIC (\Pi, \mu_0)
\]
\end{lemma}

\begin{lemma}
\label{lem:disjvsand1}
For all $r, n \in \mathbb{N}$,
\begin{align*}
	QCC^r (DISJ_n, 1/3) \geq n \cdot QIC_0^r (AND, 1/3)
\end{align*}
\end{lemma}

\begin{proof}
	We first get the inequality $QCC^r (DISJ_n, 1/3) \geq QIC^r (DISJ_n, 1/3)$ by Lemma~\ref{lem:qicvsqccpf}.
Then, given $\delta > 0$, let $\Pi_D$ be a $r$-round protocol computing $DISJ_n$ 
correctly except with probability $1/3$ on all input and 
satisfying $QIC (\Pi_D, \mu_n) \leq QIC (DISJ_n, 1/3) + \delta$ for 
all $\mu_n$. Let $\Pi_{A,\mu_0}$ be the $r$-round protocol corresponding 
to $\Pi_D$ given by Theorem $4$ in \cite{Dave14}, computing AND 
correctly except with probability 1/3 on all input and 
satisfying 
\[
QIC (\Pi_{A,\mu_0}, \mu_0) = \frac{1}{n} QIC (\Pi_D, \mu_0^{\otimes n}) \leq \frac{1}{n} (QIC (DISJ_n, 1/3) + \delta)
\]
 for each $\mu_0$ with $\mu_0 (1, 1) = 0$. By lemma \ref{andminimax}, there exists a protocol $\Pi_A$ which computes AND 
correctly except with probability 1/3 on all input and 
satisfying 
\[
QIC (\Pi_A, \mu_0) \leq \frac{1}{n} (QIC (DISJ_n, 1/3) + \delta)
\] 
for all $\mu_0$ with $\mu_0 (1, 1) = 0$. Since
\begin{align*}
	QIC_0^r (AND, 1/3)  \leq \max_{\mu_0} QIC (\Pi_A, \mu_0) \leq \frac{1}{n} (QIC (DISJ_n, 1/3) + \delta),\\
\end{align*}
the result follows by taking $\delta$ to $0$.
\end{proof}

\begin{lemma}
	For any function $f$ and error parameter $\epsilon>0$, the following holds:
\begin{align*}
	QIC (f, \epsilon) \in O \big(\log 1/\epsilon \cdot  QIC(f, 1/3)\big).
\end{align*}
\end{lemma}

\begin{proof}
Given $\delta > 0$, let $\Pi$ be a protocol computing $f$ correctly except with 
probability $1/3$ on every input and satisfying $QIC (\Pi, \mu ) \leq QIC(f, 1/3) + \delta$ for 
all $\mu$. Let $n \in O (\log 1/\epsilon)$ be given by the Chernoff bound such that 
protocol $\Pi_n$ running $\Pi$ $n$ times in parallel and taking a majority 
vote (with arbitrary tie-breaking) computes $f$ correctly except with 
probability $\epsilon$ on every input. 
This can be chosen independently of $\delta$.
We now argue on the quantum 
information cost of $\Pi_n$.
Consider an arbitrary distribution $\mu$ for $f$, and let $\mu_n$ be 
the distribution once the $n$ copies have been made.
If we denote the marginal for the $i$-th copy by $\mu^i$, 
then $\mu^i = \mu$. By the proof of Lemma \ref{lem:subadd} and an easy induction, we then get that 
\begin{align*}
	QIC (f, \epsilon) & \leq QIC (\Pi_n, \mu_n) \\
		& \leq n QIC (\Pi, \mu) \\
		&\leq n (QIC (f, 1/3) + \delta).
\end{align*}
 The result follows by taking $\delta$ to $0$.
\end{proof}

The above proof also establishes the following.

\begin{corollary}
\label{cor:reduceAND}
	For any $\epsilon>0$ and $r \in \mathbb{N}$,
\begin{align*}
	QIC_0^r (AND, \epsilon) \in O \big(\log 1/\epsilon \cdot  QIC_0^r (AND, 1/3)\big).
\end{align*}
\end{corollary}

}

\section{Proof of the main result}
We now put everything together to get a lower bound on $QIC_0^r (AND, 1/3)$.

\begin{lemma}
\label{lem:icand}
	For all $r$, it holds that
\begin{align*}
	QIC_0^r (AND, 1/3) \ge  \Omega\left(\frac{1}{r\cdot \log^8 r} \right).
\end{align*}
\end{lemma}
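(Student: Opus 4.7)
The plan is to assemble the pieces already developed in the paper: the discrepancy-based $QIC$ lower bound for disjointness (Theorem~\ref{lem:QICvsGDM} combined with Theorem~\ref{thm:disj_sherstov}), the round-aware reduction from $DISJ_n$ to AND on distributions with $\mu(1,1)=0$ (Lemma~\ref{lem:DISJvsAND2}), and the error-reduction bound for AND (Corollary~\ref{cor:reduceAND}). Concretely, I will assume $QIC_0^r(AND,1/3)$ is small and derive a contradiction with the $\Omega(\sqrt n)$ lower bound on $QIC_D(DISJ_n,\cdot)$ for an appropriately chosen $n$ polynomially larger than $r$.

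\textbf{Step 1: lower bound for $DISJ_n$.} Invoking Theorem~\ref{lem:QICvsGDM} together with Theorem~\ref{thm:disj_sherstov} gives an absolute constant $\eta>0$ such that $QIC_D(DISJ_n,\eta)\ge \Omega(\sqrt n)$. Since $QIC_D$ is non-increasing in the error, for every $n\ge 2/\eta$ this upgrades to
\[
\max_\nu QIC(DISJ_n,\nu,2/n) \;=\; QIC_D(DISJ_n,2/n) \;\ge\; \Omega(\sqrt n),
\]
which is precisely the quantity bounded from above by Lemma~\ref{lem:DISJvsAND2}.

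\textbf{Step 2: reduction to AND and error amplification.} Corollary~\ref{cor:reduceAND} with $\epsilon=1/n^2$ gives $QIC_0^r(AND,1/n^2)\le O(\log n)\cdot QIC_0^r(AND,1/3)$. Substituting this into Lemma~\ref{lem:DISJvsAND2} and combining with Step~1 produces
\[
\Omega(\sqrt n) \;\le\; n\log n\cdot O\!\left(QIC_0^r(AND,1/3)\right) + O(r\log^5 n) + o(\sqrt n).
\]

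\textbf{Step 3: parameter tuning.} I choose $n = C\,r^2 \log^{10} r$ for a sufficiently large absolute constant $C$. Then $\sqrt n = \sqrt{C}\cdot r\log^{5}r$ dominates $O(r\log^{5}n)=O(r\log^5 r)$ and the $o(\sqrt n)$ term, so the RHS forces $n\log n \cdot QIC_0^r(AND,1/3)\ge \Omega(r\log^5 r)$. Since $n\log n = \Theta(r^2\log^{11}r)$, rearranging yields
\[
QIC_0^r(AND,1/3) \;\ge\; \Omega\!\left(\frac{r\log^5 r}{r^2\log^{11}r}\right) \;=\; \Omega\!\left(\frac{1}{r\log^{6}r}\right),
\]
which is strictly stronger than the $\Omega(1/(r\log^{8}r))$ stated in the lemma.

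The argument is essentially assembly of earlier machinery, so there is no single conceptually hard step; the delicate point is the parameter tuning. I need $n$ large enough in $r$ so that the discrepancy-based $\sqrt n$ lower bound dominates the round-dependent overhead $O(r\log^5 n)$ coming from the intersection-subsampling trick inside Lemma~\ref{lem:DISJvsAND2}, while paying only one extra $\log n$ factor through Corollary~\ref{cor:reduceAND} to pass from error $1/n^2$ on AND back to error $1/3$. Any slack in the $\log r$ exponent of $n$ is inherited by the final bound, which is why the statement leaves some room in the logarithmic exponent.
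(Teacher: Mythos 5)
Your proposal is correct and follows essentially the same route as the paper: combine Theorem~\ref{lem:QICvsGDM} with Theorem~\ref{thm:disj_sherstov} to get $\max_\nu QIC(DISJ_n,\nu,2/n)\ge\Omega(\sqrt n)$, feed this into Lemma~\ref{lem:DISJvsAND2} and Corollary~\ref{cor:reduceAND}, and tune $n$ against $r$. The only difference is your choice $n=Cr^2\log^{10}r$ versus the paper's $r=\Theta(\sqrt n/\log^6 n)$, which makes your bookkeeping give the marginally sharper exponent $\Omega(1/(r\log^6 r))$; this trivially implies the stated bound.
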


\begin{proof}
We know by theorem \ref{thm:disj_sherstov}
that $GDM_{1/5} (DISJ_n) \ge \Omega \left(\sqrt{n}\right)$. Hence, by Theorem \ref{lem:QICvsGDM}, we must have 
 that $\max_\mu QIC (DISJ_n, \mu, 2/n) \ge \Omega (\sqrt{n})$. 
Putting this together with Lemma \ref{lem:DISJvsAND2} and 
Corollary~\ref{cor:reduceAND}, and let $r = \Theta \left(\frac{\sqrt{n}}{\log^6 n} \right)$, we have,
\[
QIC_0^r(AND,1/3) = \Omega \left(\frac{1}{\sqrt{n}\cdot \log^2 n} \right) = \Omega \left(\frac{1}{r\cdot \log^8 r} \right).
\]
\end{proof}

\begin{corollary}
\label{cor:icand}
Let $\mu^*$ be the distribution such that $\mu^*(0,0) = 1/3, \mu^*(0,1) = 1/3, \mu^*(1,0) = 1/3$. Then
\[
\inf_{\Pi \in \T^r (AND, 1/3)} QIC (\Pi, \mu^*) =  \Omega \left(\frac{1}{r\cdot \log^8 r} \right).
\]
\end{corollary}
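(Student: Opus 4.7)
The plan is to derive Corollary~\ref{cor:icand} from Lemma~\ref{lem:icand} (together with Lemma~\ref{andminimax}) by showing that the specific distribution $\mu^*$ is, up to a constant factor, the "hardest" distribution $\mu_0$ on the support $\{(0,0),(0,1),(1,0)\}$. Concretely, I will argue that for \emph{every} protocol $\Pi$ and every distribution $\mu_0$ with $\mu_0(1,1)=0$, one has $QIC(\Pi,\mu_0)\le 3\,QIC(\Pi,\mu^*)$. Infimizing over $\Pi$ and then maximizing over $\mu_0$ will turn Lemma~\ref{lem:icand} into the desired lower bound at $\mu^*$.

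The key observation is that $\mu^*$ pointwise dominates any $\mu_0$ supported on $\{(0,0),(0,1),(1,0)\}$ up to a factor of $3$. Indeed, for any such $\mu_0$ the distribution $\nu := \tfrac{\mu^*-\tfrac{1}{3}\mu_0}{2/3}$ is nonnegative (since $\mu^*(a,b)=\tfrac13\ge \tfrac13\mu_0(a,b)$ on the support), sums to $1$, and satisfies $\nu(1,1)=0$. Hence we have the convex decomposition
\[
\mu^* \;=\; \tfrac{1}{3}\mu_0 \;+\; \tfrac{2}{3}\nu.
\]
Viewed at the level of input density matrices, this exhibits the state corresponding to $\mu^*$ as the average of the states corresponding to $\mu_0$ and $\nu$ with weights $1/3$ and $2/3$. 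Applying Lemma~\ref{concavity} (concavity of $QIC$ in the input) together with nonnegativity of quantum information cost gives, for every $r$-round protocol $\Pi$,
\[
QIC(\Pi,\mu^*) \;\ge\; \tfrac{1}{3}QIC(\Pi,\mu_0)+\tfrac{2}{3}QIC(\Pi,\nu) \;\ge\; \tfrac{1}{3}QIC(\Pi,\mu_0).
\]

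Taking an infimum over $\Pi\in\T^r(AND,1/3)$ on both sides yields $\inf_\Pi QIC(\Pi,\mu_0)\le 3\inf_\Pi QIC(\Pi,\mu^*)$ for every $\mu_0$ with $\mu_0(1,1)=0$. Maximizing the left-hand side over such $\mu_0$ and invoking Lemma~\ref{andminimax} gives
\[
QIC_0^r(AND,1/3) \;=\; \max_{\mu_0:\mu_0(1,1)=0}\inf_{\Pi\in\T^r(AND,1/3)}QIC(\Pi,\mu_0) \;\le\; 3\inf_{\Pi\in\T^r(AND,1/3)}QIC(\Pi,\mu^*).
\]
Plugging in the lower bound $QIC_0^r(AND,1/3)=\Omega(1/(r\log^8 r))$ from Lemma~\ref{lem:icand} then yields the claimed bound $\inf_\Pi QIC(\Pi,\mu^*)=\Omega(1/(r\log^8 r))$.

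There is no serious obstacle here: the only non-routine step is recognizing the convex decomposition of $\mu^*$ and then applying concavity of $QIC$ to transfer the lower bound from the worst-case $\mu_0$ in Lemma~\ref{lem:icand} to the specific distribution $\mu^*$. The constant factor $3$ loss is absorbed into the $\Omega(\cdot)$.
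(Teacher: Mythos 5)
Your proposal is correct and matches the paper's own argument: the same convex decomposition $\mu^* = \tfrac13\mu_0 + \tfrac23\nu$ followed by concavity of $QIC$ in the input (Lemma~\ref{concavity}) and the bound from Lemma~\ref{lem:icand}. The only cosmetic difference is that you route the quantifier exchange through Lemma~\ref{andminimax}, whereas the paper bounds $QIC(\Pi,\mu^*)\ge\tfrac13\max_{\mu_0}QIC(\Pi,\mu_0)$ for each fixed $\Pi$ and then infimizes, using the inf--max definition of $QIC_0^r$ directly; both are valid.
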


\begin{proof}
For any distribution $\mu_0$ such that $\mu_0(1,1) =0$, it is easy to see that $\mu^*$ can be written as $\mu^* = \frac{1}{3} \mu_0 + \frac{2}{3} \mu'$ where $\mu'$ is some other valid distribution. By Lemma \ref{concavity}, we have 
\[
QIC (\Pi, \mu^*) \geq \frac{1}{3} QIC (\Pi, \mu_0) + \frac{2}{3} QIC (\Pi, \mu') \geq \frac{1}{3} QIC (\Pi, \mu_0).
\]
Then we have 
\[
QIC (\Pi, \mu^*) \geq \frac{1}{3}  \max_{\mu_0,\mu_0(1,1) = 0}QIC (\Pi, \mu_0).
\]
Therefore by Lemma \ref{lem:icand}, we have
\[
\inf_{\Pi \in \T^r (AND, 1/3)} QIC (\Pi, \mu^*) \geq \frac{1}{3} QIC_0^r (AND, 1/3) = \Omega \left(\frac{1}{r\cdot \log^8 r} \right).
\]
\end{proof}

\begin{theorem}
\label{thm:disjround}
For all $r,n \in \mathbb{N}$, $QCC^r (DISJ_n, 1/3) = \Omega \big( \frac{n}{r \cdot \log^8 r}\big)$.
\end{theorem}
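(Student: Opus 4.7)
The proof of Theorem~\ref{thm:disjround} is essentially just the composition of two results already established in the excerpt. My plan is to combine Lemma~\ref{lem:disjvsand1} with Lemma~\ref{lem:icand} directly.

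First, Lemma~\ref{lem:disjvsand1} tells us that for every $r, n \in \mathbb{N}$,
\begin{align*}
QCC^r(DISJ_n, 1/3) \;\geq\; n \cdot QIC_0^r(AND, 1/3).
\end{align*}
This is the communication-to-information-cost reduction: a low-communication $r$-round protocol for $DISJ_n$ would yield, via the direct sum argument and the quasi-convexity/averaging argument on distributions $\mu_0$ with $\mu_0(1,1)=0$ encapsulated in Lemma~\ref{disj_to_and} and Lemma~\ref{andminimax}, a protocol for AND with small prior-free quantum information cost over such $\mu_0$.

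Second, Lemma~\ref{lem:icand} supplies the needed lower bound on the AND side: for every $r$,
\begin{align*}
QIC_0^r(AND, 1/3) \;\geq\; \Omega\!\left(\frac{1}{r \log^8 r}\right).
\end{align*}
The lemma's proof, which I do not need to reprove, goes through an auxiliary parameter $n'=\Theta(r^2 \log^{12} r)$: it uses Theorem~\ref{thm:disj_sherstov} to lower bound $GDM_{1/5}(DISJ_{n'})$ by $\Omega(\sqrt{n'})$, then invokes Theorem~\ref{lem:QICvsGDM} (combined with Theorem~\ref{minimax_general}) to transfer this into a lower bound on $\max_\mu QIC(DISJ_{n'}, \mu, 2/n')$, and finally applies Lemma~\ref{lem:DISJvsAND2} together with Corollary~\ref{cor:reduceAND} to push this lower bound down to $QIC_0^r(AND, 1/3)$. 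The key point for my proof is only the conclusion: the bound holds for every $r$, independently of the $n$ appearing in Theorem~\ref{thm:disjround}.

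Combining these two inequalities yields
\begin{align*}
QCC^r(DISJ_n, 1/3) \;\geq\; n \cdot \Omega\!\left(\frac{1}{r \log^8 r}\right) \;=\; \Omega\!\left(\frac{n}{r \log^8 r}\right),
\end{align*}
which is exactly the claimed bound. There is no real obstacle here; the theorem is the payoff from assembling the two halves of the argument (the communication$\to$information reduction for $DISJ \to AND$, and the information$\to$communication reduction for $AND \to DISJ$ of a smaller size) that have already been developed. The only minor point worth flagging is that the bound is meaningful only for $r$ bounded away from $1$ (since $\log^8 r$ degenerates at $r=1$), but for $r=O(1)$ the stronger bound $\Omega(n)$ follows from the unbounded-round $\Omega(\sqrt{n})$ lower bound of \cite{razborov2003} together with the trivial monotonicity in the number of rounds, so the stated asymptotic holds in all regimes of interest.
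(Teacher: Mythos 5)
Your proposal is correct and is exactly the paper's proof: the paper derives Theorem~\ref{thm:disjround} by combining Lemma~\ref{lem:disjvsand1} with Lemma~\ref{lem:icand} in precisely the way you describe.
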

\begin{proof}
Combining Lemma \ref{lem:disjvsand1} and Lemma \ref{lem:icand}, we get this theorem.
\end{proof}

\section{Low information protocol for AND}

In this section, we exhibit a $\tilde{O}(1/r)$ information $4r$-round protocol for AND (w.r.t. the prior $1/3,1/3,1/3,0$) which computes correctly on all inputs with probability $1$. The protocol is due to Jain, Radhakrishnan and Sen. Consider the protocol described in Protocol \ref{Protocol:AND}. 

\begin{Protocol}
Inputs: $(x,y) \in \{0,1\} \times \{0,1\}$ \\
Goal: compute $AND(x,y)$
\begin{enumerate}
\item Set $\theta = \frac{\pi}{8r}$. Let $\ket{v}$ be the vector $\cos(\theta) \ket{0} + \sin(\theta) \ket{1}$. Let $U_v$ be the unitary operation of reflecting about the vector $\ket{v}$ i.e. $U_v \ket{0} = \cos(2 \theta) \ket{0} + \sin(2 \theta) \ket{1}$ and $U_v \ket{1} = \sin(2 \theta) \ket{0} - \cos(2 \theta) \ket{1}$. Also let $Z$ be the unitary operation of reflecting about $\ket{0}$ i.e. $Z \ket{0} = \ket{0}$ and $Z \ket{1} = - \ket{1}$. 
\item Alice starts by preparing a qubit $C$ in state $\ket{0}$.
\item If $x = 0$, Alice applies the identity operation on $C$ and sends it to Bob. If $x = 1$, Alice applies the $U_v$ operation on $C$ and sends it to Bob.
\item If $y = 0$, Bob applies the identity operation on $C$ and sends it to Alice. If $y = 1$, Bob applies the $Z$ operation on $C$ and sends it to Alice.
\item After $4r-1$ rounds, Bob measures the register $C$. If the result is $1$, then he answers $1$, otherwise $0$. He also sends this to Alice. 
\end{enumerate}
\caption{Protocol for AND}
\label{Protocol:AND}
\end{Protocol}

First let us see why it computes AND. Let $\ket{\psi^{x,y}_i} = \cos(\phi^{x,y}_i) \ket{0} + \sin(\phi^{x,y}_i) \ket{1}$ be the state of qubit $C$ after $i$ rounds when the input is $(x,y)$. If the input is $0,0$,  $\phi^{0,0}_i$ is always $0$. Also when the input is $0,1$, $\phi^{0,1}_i$ is always $0$. So $\ket{\psi^{0,0}_i} = \ket{\psi^{0,1}_i} = \ket{0}$ always. When the input is $1,0$, $\phi^{1,0}_i$ follows the trajectory $2 \theta \rightarrow 2 \theta \rightarrow 0 \rightarrow 0 \rightarrow 2 \theta \rightarrow \cdots$. So $\ket{\psi^{1,0}_{4r-1}}  = \ket{0}$ as well. When the input is $1,1$, $\phi^{1,1}_i$ follows the trajectory $2 \theta \rightarrow - 2 \theta \rightarrow 4 \theta \rightarrow - 4 \theta \rightarrow \cdots \rightarrow - \pi/2$. So $\ket{\psi^{1,1}_{4r-1}}  = - \ket{1}$. Thus the players compute AND correctly. 

Now let us analyze the information cost of this protocol. Note that after $i$ rounds the full state can be written as follows: 

$$
\ket{\psi_i}^{XYCR} = \sum_{\text{$x,y$ s.t. $x \wedge y = 0$}} \frac{1}{\sqrt{3}} \ket{x}^X \ket{y}^Y \ket{\psi_i^{x,y}}^C \ket{x,y}^R
$$

Then information cost is given by:

\begin{align*}
\frac{1}{2} \cdot \sum_{i = 1, odd}^{4r-1} I(C; R|Y)_{\psi_i} + \frac{1}{2} \cdot \sum_{i = 1, even}^{4r-1} I(C; R|X)_{\psi_i}
\end{align*}

Let us look at a particular term: 

\begin{align*}
I(C; R|Y)_{\psi_i} &= H(C,Y)_{\psi_i} + H(R,Y)_{\psi_i} - H(C,R,Y)_{\psi_i} - H(Y)_{\psi_i} \\
&= H(C,Y)_{\psi_i} + H(C,X)_{\psi_i} - H(X)_{\psi_i} - H(Y)_{\psi_i} \\
&= H(C|Y)_{\psi_i} + H(C|X)_{\psi_i} \\
&= \frac{2}{3} H(C|Y=0)_{\psi_i} + \frac{1}{3} H(C|Y=1)_{\psi_i} + \frac{2}{3} H(C|X=0)_{\psi_i} + \frac{1}{3} H(C|X=1)_{\psi_i} \\
&= \frac{2}{3} H(C|Y=0)_{\psi_i}
\end{align*}
First equality is by definition. For second equality, we are using the fact that for a pure state on some systems $A,B$, $H(A) = H(B)$. Third equality is again by definition. For fourth equality, we use the fact that if we trace out $R$, $X,Y$ become classical. For the fifth equality, we use the fact that conditioned on $Y=1$, system $C$ is in a pure state, namely $\ket{\psi_i^{0,1}}$. Similarly conditioned on $X=1$, it is in state $\ket{\psi^{1,0}_i}$. Conditioned on $X=0$, $C$ is in the state $\ket{0}$. Now conditioned on $Y=0$, $C$ is in the state:

\begin{align*}
\frac{1}{2} \ket{\psi^{0,0}_i} \bra{\psi^{0,0}_i} + \frac{1}{2} \ket{\psi^{1,0}_i} \bra{\psi^{1,0}_i}
\end{align*}
This is $\ket{0}$ if $i \equiv 3 (\text{mod $4$})$ and if $i \equiv 1 (\text{mod $4$})$, the density matrix is given by:

\[
   \rho =
  \left[ {\begin{array}{cc}
   \frac{1}{2} + \frac{1}{2} \cos^2(2 \theta) & \frac{1}{2} \cos(2 \theta) \sin(2 \theta) \\
   \frac{1}{2} \cos(2 \theta) \sin(2 \theta) & \frac{1}{2} \sin^2(2 \theta) \\
  \end{array} } \right]
\]
Eigenvalue computation shows that $H(\rho) = H(\sin^2(\theta)) = O(\theta^2 \log(1/\theta)) = O(\log(r)/r^2)$. So some of Alice's terms are $0$ and some are $O(\log(r)/r^2)$. Similarly some of Bob's terms are $0$ and some are $O(\log(r)/r^2)$. So in total we get that the information cost is $O(\log(r)/r)$. Note that from the protocol it might seem that since the roles of Alice and Bob are asymmetric, only Alice is sending information and Bob is not. However this definition of quantum information cost also accounts for sending back information in some sense. For example, in some of the rounds, Alice is sending Bob some information but Bob is sending it back, so that is accounted for. This results in Bob's part of the cost to be non-zero and in fact equal to that of Alice.

Now let us see what happens if we place a small mass $w$ on $(1,1)$ entry. Then the full state can be described as follows: 

$$
\ket{\psi_i}^{XYCR} = \sum_{\text{$x,y$ s.t. $x \wedge y = 0$}} \sqrt{\frac{1-w}{3}} \ket{x}^X \ket{y}^Y \ket{\psi_i^{x,y}}^C \ket{x,y}^R + \sqrt{w} \ket{1}^X \ket{1}^Y \ket{\psi_i^{1,1}}^C \ket{1,1}^R
$$

\noindent The $i^{\text{th}}$ term of the information cost as before is given by:

\begin{align*}
&\frac{2(1-w)}{3} H(C|Y=0)_{\psi_i} + \frac{1+2w}{3} H(C|Y=1)_{\psi_i} + \frac{2(1-w)}{3} H(C|X=0)_{\psi_i} + \frac{1+2w}{3} H(C|X=1)_{\psi_i} \\
&= \frac{2(1-w)}{3} H(C|Y=0)_{\psi_i} +\frac{1+2w}{3} H(C|Y=1)_{\psi_i} + \frac{1+2w}{3} H(C|X=1)_{\psi_i}
\end{align*}
As before $H(C|X=0)_{\psi_i} = 0$. But the other three terms are non-zero. $H(C|Y=0)_{\psi_i}$ is the same as before. Let us focus on $H(C|Y=1)_{\psi_i}$. State of $C$ conditioned on $Y=1$ is given by:

\begin{align*}
\frac{1-w}{1+2w} \ket{\psi^{0,1}_i} \bra{\psi^{0,1}_i} + \frac{3w}{1+2w} \ket{\psi^{1,1}_i} \bra{\psi^{1,1}_i}
\end{align*}

For $i$ odd, the density matrix is given by:

\[
   \rho =
  \left[ {\begin{array}{cc}
   \frac{1-w}{1+2w} + \frac{3w}{1+2w} \cos^2((i+1) \theta) & \frac{3w}{1+2w} \cos((i+1) \theta) \sin((i+1) \theta) \\
   \frac{3w}{1+2w} \cos((i+1) \theta) \sin((i+1) \theta) & \frac{3w}{1+2w} \sin^2((i+1) \theta) \\
  \end{array} } \right]
\]
Eigenvalue computation shows that 
$$
H(\rho) = H\left(\frac{1 - \sqrt{1 - \frac{12 w (1-w) \sin^2((i+1)\theta)}{(1+2w)^2}}}{2}\right)
$$
Now assuming $w \le 1/6$ and considering $i$ such that $\sin^2((i+1)\theta) \ge 4/5$, we get that 
\begin{align*}
\frac{1 - \sqrt{1 - \frac{12 w (1-w) \sin^2((i+1)\theta)}{(1+2w)^2}}}{2} &\ge \frac{1 - \sqrt{1 - \frac{8w}{(1+2w)^2}}}{2} \\
&= \frac{1 - \frac{1-2w}{1+2w}}{2} \\
&= \frac{2w}{1+2w}
\end{align*}
Since other terms involving $w$ either have positive contribution or are of lower order, we get that for a constant fraction of the rounds, the information cost term increases by an additive $\Omega(H(w))$. And hence overall the increase in information cost is at least $\Omega(r H(w))$. 

\bibliographystyle{alpha}
\bibliography{refs}

\end{document}